\renewcommand{\qed}{\alignleft {\text{\tiny{$\blacksquare$}}}}
\theoremstyle{plain}
\newtheorem{theorem}{Theorem}[section]
\newtheorem{lemma}{Lemma}[section]
\newtheorem{corollary}{Corollary}[section]
\newtheorem{proposition}{Proposition}[section]
\theoremstyle{definition}
\newtheorem{definition}{Definition}[section]
\newtheorem{remark}{Remark}[section]
\newtheorem{example}{Example}[section]
\newcommand\R{{\mathbb R}}
\newcommand\E{{\mathrm E}}
\renewcommand\P{{\mathrm P}}
\newcommand\PPP{{\cal{P}}}
\newcommand\NNN{{\mathbb N}}
\renewcommand\tilde{\widetilde}
\def\<#1,#2>{\left<#1,#2\right>}
\let\bar\overline
\numberwithin{equation}{section}
\chardef\@x10\chardef\@xv60
\def\tcitime{
\def\@time{%
  \@minute\time\@hour\@minute\divide\@hour\@xv
  \ifnum\@hour<\@x 0\fi\the\@hour:%
  \multiply\@hour\@xv\advance\@minute-\@hour
  \ifnum\@minute<\@x 0\fi\the\@minute
  }}%
\def\QCTOpt[#1]#2{%
  \def\QCTOptB{#1}
  \def\QCTOptA{#2}
}
\def\QCTNOpt#1{%
  \def\QCTOptA{#1}
  \let\QCTOptB\empty
}
\def\Qct{%
  \@ifnextchar[{%
    \QCTOpt}{\QCTNOpt}
}
\def\QCBOpt[#1]#2{%
  \def\QCBOptB{#1}
  \def\QCBOptA{#2}
}
\def\QCBNOpt#1{%
  \def\QCBOptA{#1}
  \let\QCBOptB\empty
}
\def\Qcb{%
  \@ifnextchar[{%
    \QCBOpt}{\QCBNOpt}
}
\def\PrepCapArgs{%
  \ifx\QCBOptA\empty
    \ifx\QCTOptA\empty
      {}%
    \else
      \ifx\QCTOptB\empty
        {\QCTOptA}%
      \else
        [\QCTOptB]{\QCTOptA}%
      \fi
    \fi
  \else
    \ifx\QCBOptA\empty
      {}%
    \else
      \ifx\QCBOptB\empty
        {\QCBOptA}%
      \else
        [\QCBOptB]{\QCBOptA}%
      \fi
    \fi
  \fi
}
\def\GRAPHICSPS#1{%
 \ifcase\GRAPHICSTYPE%\GRAPHICSTYPE=0
   \special{ps: #1}%
 \or%\GRAPHICSTYPE=1
   \special{language "PS", include "#1"}%
%%%\or%\GRAPHICSTYPE=2
%%%  #1%
 \fi
}%
\def\graffile#1#2#3#4{%
%%% \ifnum\GRAPHICSTYPE=\tw@
%%%  %Following if using psfig
%%%  \@ifundefined{psfig}{\input psfig.tex}{}%
%%%  \psfig{file=#1, height=#3, width=#2}%
%%% \else
  %Following for all others
  % JCS - added BOXTHEFRAME, see below
    \bgroup
    \leavevmode
    \@ifundefined{bbl@deactivate}{\def~{\string~}}{\activesoff}
    \raise -#4 \BOXTHEFRAME{%
        \hbox to #2{\raise #3\hbox to #2{\null #1\hfil}}}%
    \egroup
}%
\def\draftbox#1#2#3#4{%
 \leavevmode\raise -#4 \hbox{%
  \frame{\rlap{\protect\tiny #1}\hbox to #2%
   {\vrule height#3 width\z@ depth\z@\hfil}%
  }%
 }%
}%
\newif\ifwasdraft
\def\GRAPHIC#1#2#3#4#5{%
 \ifnum\draft=\@ne\draftbox{#2}{#3}{#4}{#5}%
  \else\graffile{#1}{#3}{#4}{#5}%
  \fi
 }%
\def\addtoLaTeXparams#1{%
    \edef\LaTeXparams{\LaTeXparams #1}}%
\newif\ifBoxFrame \BoxFramefalse
\newif\ifOverFrame \OverFramefalse
\newif\ifUnderFrame \UnderFramefalse
\def\BOXTHEFRAME#1{%
   \hbox{%
      \ifBoxFrame
         \frame{#1}%
      \else
         {#1}%
      \fi
   }%
}
\def\doFRAMEparams#1{\BoxFramefalse\OverFramefalse\UnderFramefalse\readFRAMEparams#1\end}%
\def\readFRAMEparams#1{%
 \ifx#1\end%
  \let\next=\relax
  \else
  \ifx#1i\dispkind=\z@\fi
  \ifx#1d\dispkind=\@ne\fi
  \ifx#1f\dispkind=\tw@\fi
  \ifx#1t\addtoLaTeXparams{t}\fi
  \ifx#1b\addtoLaTeXparams{b}\fi
  \ifx#1p\addtoLaTeXparams{p}\fi
  \ifx#1h\addtoLaTeXparams{h}\fi
  \ifx#1X\BoxFrametrue\fi
  \ifx#1O\OverFrametrue\fi
  \ifx#1U\UnderFrametrue\fi
  \ifx#1w
    \ifnum\draft=1\wasdrafttrue\else\wasdraftfalse\fi
    \draft=\@ne
  \fi
  \let\next=\readFRAMEparams
  \fi
 \next
 }%
\def\IFRAME#1#2#3#4#5#6{%
      \bgroup
      \let\QCTOptA\empty
      \let\QCTOptB\empty
      \let\QCBOptA\empty
      \let\QCBOptB\empty
      #6%
      \parindent=0pt%
      \leftskip=0pt
      \rightskip=0pt
      \setbox0 = \hbox{\QCBOptA}%
      \@tempdima = #1\relax
      \ifOverFrame
          % Do this later
          \typeout{This is not implemented yet}%
          \show\HELP
      \else
         \ifdim\wd0>\@tempdima
            \advance\@tempdima by \@tempdima
            \ifdim\wd0 >\@tempdima
               \textwidth=\@tempdima
               \setbox1 =\vbox{%
                  \noindent\hbox to \@tempdima{\hfill\GRAPHIC{#5}{#4}{#1}{#2}{#3}\hfill}\\%
                  \noindent\hbox to \@tempdima{\parbox[b]{\@tempdima}{\QCBOptA}}%
               }%
               \wd1=\@tempdima
            \else
               \textwidth=\wd0
               \setbox1 =\vbox{%
                 \noindent\hbox to \wd0{\hfill\GRAPHIC{#5}{#4}{#1}{#2}{#3}\hfill}\\%
                 \noindent\hbox{\QCBOptA}%
               }%
               \wd1=\wd0
            \fi
         \else
            %\show\BBB
            \ifdim\wd0>0pt
              \hsize=\@tempdima
              \setbox1 =\vbox{%
                \unskip\GRAPHIC{#5}{#4}{#1}{#2}{0pt}%
                \break
                \unskip\hbox to \@tempdima{\hfill \QCBOptA\hfill}%
              }%
              \wd1=\@tempdima
           \else
              \hsize=\@tempdima
              \setbox1 =\vbox{%
                \unskip\GRAPHIC{#5}{#4}{#1}{#2}{0pt}%
              }%
              \wd1=\@tempdima
           \fi
         \fi
         \@tempdimb=\ht1
         \advance\@tempdimb by \dp1
         \advance\@tempdimb by -#2%
         \advance\@tempdimb by #3%
         \leavevmode
         \raise -\@tempdimb \hbox{\box1}%
      \fi
      \egroup%
}%
\def\DFRAME#1#2#3#4#5{%
 \begin{center}
     \let\QCTOptA\empty
     \let\QCTOptB\empty
     \let\QCBOptA\empty
     \let\QCBOptB\empty
     \ifOverFrame 
        #5\QCTOptA\par
     \fi
     \GRAPHIC{#4}{#3}{#1}{#2}{\z@}
     \ifUnderFrame 
        \nobreak\par\nobreak#5\QCBOptA
     \fi
 \end{center}%
 }%
\def\FFRAME#1#2#3#4#5#6#7{%
 \begin{figure}[#1]%
  \let\QCTOptA\empty
  \let\QCTOptB\empty
  \let\QCBOptA\empty
  \let\QCBOptB\empty
  \ifOverFrame
    #4
    \ifx\QCTOptA\empty
    \else
      \ifx\QCTOptB\empty
        \caption{\QCTOptA}%
      \else
        \caption[\QCTOptB]{\QCTOptA}%
      \fi
    \fi
    \ifUnderFrame\else
      \label{#5}%
    \fi
  \else
    \UnderFrametrue%
  \fi
  \begin{center}\GRAPHIC{#7}{#6}{#2}{#3}{\z@}\end{center}%
  \ifUnderFrame
    #4
    \ifx\QCBOptA\empty
      \caption{}%
    \else
      \ifx\QCBOptB\empty
        \caption{\QCBOptA}%
      \else
        \caption[\QCBOptB]{\QCBOptA}%
      \fi
    \fi
    \label{#5}%
  \fi
  \end{figure}%
 }%
\def\makeactives{
  \catcode`\"=\active
  \catcode`\;=\active
  \catcode`\:=\active
  \catcode`\'=\active
  \catcode`\~=\active
}
   \gdef\activesoff{%
      \def"{\string"}
      \def;{\string;}
      \def:{\string:}
      \def'{\string'}
      \def~{\string~}
      %\bbl@deactivate{"}%
      %\bbl@deactivate{;}%
      %\bbl@deactivate{:}%
      %\bbl@deactivate{'}%
    }
\def\FRAME#1#2#3#4#5#6#7#8{%
 \bgroup
 \ifnum\draft=\@ne
   \wasdrafttrue
 \else
   \wasdraftfalse%
 \fi
 \def\LaTeXparams{}%
 \dispkind=\z@
 \def\LaTeXparams{}%
 \doFRAMEparams{#1}%
 \ifnum\dispkind=\z@\IFRAME{#2}{#3}{#4}{#7}{#8}{#5}\else
  \ifnum\dispkind=\@ne\DFRAME{#2}{#3}{#7}{#8}{#5}\else
   \ifnum\dispkind=\tw@
    \edef\@tempa{\noexpand\FFRAME{\LaTeXparams}}%
    \@tempa{#2}{#3}{#5}{#6}{#7}{#8}%
    \fi
   \fi
  \fi
  \ifwasdraft\draft=1\else\draft=0\fi{}%
  \egroup
 }%
\def\TEXUX#1{"texux"}
\long\def\QQQ#1#2{%
     \long\expandafter\def\csname#1\endcsname{#2}}%
\long\def\QQA#1#2{}%
\def\QTR#1#2{{\csname#1\endcsname #2}}%(gp) Is this the best?
\def\EXPAND#1[#2]#3{}%
\def\NOEXPAND#1[#2]#3{}%
\def\LaTeXparent#1{}%
\def\ChildStyles#1{}%
\def\ChildDefaults#1{}%
\def\QTagDef#1#2#3{}%
  \providecommand{\UNICODE}[2][]{}
\def\QQfnmark#1{\footnotemark}
 \def\abstract{%
  \if@twocolumn
   \section*{Abstract (Not appropriate in this style!)}%
   \else \small 
   \begin{center}{\bf Abstract\vspace{-.5em}\vspace{\z@}}\end{center}%
   \quotation 
   \fi
  }%
   \def\registered{\relax\ifmmode{}\r@gistered
                    \else$\m@th\r@gistered$\fi}%
 \def\r@gistered{^{\ooalign
  {\hfil\raise.07ex\hbox{$\scriptstyle\rm\text{R}$}\hfil\crcr
  \mathhexbox20D}}}}{}%
\newdimen\theight
\def\Column{%
 \vadjust{\setbox\z@=\hbox{\scriptsize\quad\quad tcol}%
  \theight=\ht\z@\advance\theight by \dp\z@\advance\theight by \lineskip
  \kern -\theight \vbox to \theight{%
   \rightline{\rlap{\box\z@}}%
   \vss
   }%
  }%
 }%
\def\qed{%
 \ifhmode\unskip\nobreak\fi\ifmmode\ifinner\else\hskip5\p@\fi\fi
 \hbox{\hskip5\p@\vrule width4\p@ height6\p@ depth1.5\p@\hskip\p@}%
 }%
\def\miss{\hbox{\vrule height2\p@ width 2\p@ depth\z@}}%
\def\tcol#1{{\baselineskip=6\p@ \vcenter{#1}} \Column}  %
\def\newfmtname{LaTeX2e}
  \DeclareOldFontCommand{\rm}{\normalfont\rmfamily}{\mathrm}
  \DeclareOldFontCommand{\sf}{\normalfont\sffamily}{\mathsf}
  \DeclareOldFontCommand{\tt}{\normalfont\ttfamily}{\mathtt}
  \DeclareOldFontCommand{\bf}{\normalfont\bfseries}{\mathbf}
  \DeclareOldFontCommand{\it}{\normalfont\itshape}{\mathit}
  \DeclareOldFontCommand{\sl}{\normalfont\slshape}{\@nomath\sl}
  \DeclareOldFontCommand{\sc}{\normalfont\scshape}{\@nomath\sc}
\def\alpha{{\Greekmath 010B}}%
\def\beta{{\Greekmath 010C}}%
\def\gamma{{\Greekmath 010D}}%
\def\delta{{\Greekmath 010E}}%
\def\epsilon{{\Greekmath 010F}}%
\def\zeta{{\Greekmath 0110}}%
\def\eta{{\Greekmath 0111}}%
\def\theta{{\Greekmath 0112}}%
\def\iota{{\Greekmath 0113}}%
\def\kappa{{\Greekmath 0114}}%
\def\lambda{{\Greekmath 0115}}%
\def\mu{{\Greekmath 0116}}%
\def\nu{{\Greekmath 0117}}%
\def\xi{{\Greekmath 0118}}%
\def\pi{{\Greekmath 0119}}%
\def\rho{{\Greekmath 011A}}%
\def\sigma{{\Greekmath 011B}}%
\def\tau{{\Greekmath 011C}}%
\def\upsilon{{\Greekmath 011D}}%
\def\phi{{\Greekmath 011E}}%
\def\chi{{\Greekmath 011F}}%
\def\psi{{\Greekmath 0120}}%
\def\omega{{\Greekmath 0121}}%
\def\varepsilon{{\Greekmath 0122}}%
\def\vartheta{{\Greekmath 0123}}%
\def\varpi{{\Greekmath 0124}}%
\def\varrho{{\Greekmath 0125}}%
\def\varsigma{{\Greekmath 0126}}%
\def\varphi{{\Greekmath 0127}}%
\def\nabla{{\Greekmath 0272}}
\def\FindBoldGroup{%
   {\setbox0=\hbox{$\mathbf{x\global\edef\theboldgroup{\the\mathgroup}}$}}%
}
\def\Greekmath#1#2#3#4{%
    \if@compatibility
        \ifnum\mathgroup=\symbold
           \mathchoice{\mbox{\boldmath$\displaystyle\mathchar"#1#2#3#4$}}%
                      {\mbox{\boldmath$\textstyle\mathchar"#1#2#3#4$}}%
                      {\mbox{\boldmath$\scriptstyle\mathchar"#1#2#3#4$}}%
                      {\mbox{\boldmath$\scriptscriptstyle\mathchar"#1#2#3#4$}}%
        \else
           \mathchar"#1#2#3#4% 
        \fi 
    \else 
        \FindBoldGroup
        \ifnum\mathgroup=\theboldgroup % For 2e
           \mathchoice{\mbox{\boldmath$\displaystyle\mathchar"#1#2#3#4$}}%
                      {\mbox{\boldmath$\textstyle\mathchar"#1#2#3#4$}}%
                      {\mbox{\boldmath$\scriptstyle\mathchar"#1#2#3#4$}}%
                      {\mbox{\boldmath$\scriptscriptstyle\mathchar"#1#2#3#4$}}%
        \else
           \mathchar"#1#2#3#4% 
        \fi     	    
	  \fi}
\newif\ifGreekBold  \GreekBoldfalse
\let\SAVEPBF=\pbf
\def\pbf{\GreekBoldtrue\SAVEPBF}%
  \newcounter{equationnumber}  
  \def\mathletters{%
     \addtocounter{equation}{1}
     \edef\@currentlabel{\theequation}%
     \setcounter{equationnumber}{\c@equation}
     \setcounter{equation}{0}%
     \edef\theequation{\@currentlabel\noexpand\alph{equation}}%
  }
    \def\BibTeX{{\rm B\kern-.05em{\sc i\kern-.025em b}\kern-.08em
                 T\kern-.1667em\lower.7ex\hbox{E}\kern-.125emX}}}{}%
\def\AmS{{\protect\usefont{OMS}{cmsy}{m}{n}%
                A\kern-.1667em\lower.5ex\hbox{M}\kern-.125emS}}}{}%
\def\@@eqncr{\let\@tempa\relax
    \ifcase\@eqcnt \def\@tempa{& & &}\or \def\@tempa{& &}%
      \else \def\@tempa{&}\fi
     \@tempa
     \if@eqnsw
        \iftag@
           \@taggnum
        \else
           \@eqnnum\stepcounter{equation}%
        \fi
     \fi
     \global\tag@false
     \global\@eqnswtrue
     \global\@eqcnt\z@\cr}
\def\TCItag{\@ifnextchar*{\@TCItagstar}{\@TCItag}}
\def\@TCItag#1{%
    \global\tag@true
    \global\def\@taggnum{(#1)}}
\def\@TCItagstar*#1{%
    \global\tag@true
    \global\def\@taggnum{#1}}
\let\DOTSI\relax
\def\RIfM@{\relax\ifmmode}%
\def\FN@{\futurelet\next}%
\def\iint{\DOTSI\intno@\tw@\FN@\ints@}%
\def\iiint{\DOTSI\intno@\thr@@\FN@\ints@}%
\def\iiiint{\DOTSI\intno@4 \FN@\ints@}%
\def\idotsint{\DOTSI\intno@\z@\FN@\ints@}%
\def\ints@{\findlimits@\ints@@}%
\newif\iflimtoken@
\newif\iflimits@
\def\findlimits@{\limtoken@true\ifx\next\limits\limits@true
 \else\ifx\next\nolimits\limits@false\else
 \limtoken@false\ifx\ilimits@\nolimits\limits@false\else
 \ifinner\limits@false\else\limits@true\fi\fi\fi\fi}%
\def\multint@{\int\ifnum\intno@=\z@\intdots@                          %1
 \else\intkern@\fi                                                    %2
 \ifnum\intno@>\tw@\int\intkern@\fi                                   %3
 \ifnum\intno@>\thr@@\int\intkern@\fi                                 %4
 \int}%                                                               %5
\def\multintlimits@{\intop\ifnum\intno@=\z@\intdots@\else\intkern@\fi
 \ifnum\intno@>\tw@\intop\intkern@\fi
 \ifnum\intno@>\thr@@\intop\intkern@\fi\intop}%
\def\intic@{%
    \mathchoice{\hskip.5em}{\hskip.4em}{\hskip.4em}{\hskip.4em}}%
\def\negintic@{\mathchoice
 {\hskip-.5em}{\hskip-.4em}{\hskip-.4em}{\hskip-.4em}}%
\def\ints@@{\iflimtoken@                                              %1
 \def\ints@@@{\iflimits@\negintic@
   \mathop{\intic@\multintlimits@}\limits                             %2
  \else\multint@\nolimits\fi                                          %3
  \eat@}%                                                             %4
 \else                                                                %5
 \def\ints@@@{\iflimits@\negintic@
  \mathop{\intic@\multintlimits@}\limits\else
  \multint@\nolimits\fi}\fi\ints@@@}%
\def\intkern@{\mathchoice{\!\!\!}{\!\!}{\!\!}{\!\!}}%
\def\plaincdots@{\mathinner{\cdotp\cdotp\cdotp}}%
\def\intdots@{\mathchoice{\plaincdots@}%
 {{\cdotp}\mkern1.5mu{\cdotp}\mkern1.5mu{\cdotp}}%
 {{\cdotp}\mkern1mu{\cdotp}\mkern1mu{\cdotp}}%
 {{\cdotp}\mkern1mu{\cdotp}\mkern1mu{\cdotp}}}%
\def\RIfM@{\relax\protect\ifmmode}
\def\text{\RIfM@\expandafter\text@\else\expandafter\mbox\fi}
\let\nfss@text\text
\def\text@#1{\mathchoice
   {\textdef@\displaystyle\f@size{#1}}%
   {\textdef@\textstyle\tf@size{\firstchoice@false #1}}%
   {\textdef@\textstyle\sf@size{\firstchoice@false #1}}%
   {\textdef@\textstyle \ssf@size{\firstchoice@false #1}}%
   \glb@settings}
\def\textdef@#1#2#3{\hbox{{%
                    \everymath{#1}%
                    \let\f@size#2\selectfont
                    #3}}}
\newif\iffirstchoice@
\def\Let@{\relax\iffalse{\fi\let\\=\cr\iffalse}\fi}%
\def\vspace@{\def\vspace##1{\crcr\noalign{\vskip##1\relax}}}%
\def\multilimits@{\bgroup\vspace@\Let@
 \baselineskip\fontdimen10 \scriptfont\tw@
 \advance\baselineskip\fontdimen12 \scriptfont\tw@
 \lineskip\thr@@\fontdimen8 \scriptfont\thr@@
 \lineskiplimit\lineskip
 \vbox\bgroup\ialign\bgroup\hfil$\m@th\scriptstyle{##}$\hfil\crcr}%
\def\Sb{_\multilimits@}%
\def\endSb{\crcr\egroup\egroup\egroup}%
\def\Sp{^\multilimits@}%
\newdimen\ex@
\def\rightarrowfill@#1{$#1\m@th\mathord-\mkern-6mu\cleaders
 \hbox{$#1\mkern-2mu\mathord-\mkern-2mu$}\hfill
 \mkern-6mu\mathord\rightarrow$}%
\def\leftarrowfill@#1{$#1\m@th\mathord\leftarrow\mkern-6mu\cleaders
 \hbox{$#1\mkern-2mu\mathord-\mkern-2mu$}\hfill\mkern-6mu\mathord-$}%
\def\leftrightarrowfill@#1{$#1\m@th\mathord\leftarrow
\mkern-6mu\cleaders
 \hbox{$#1\mkern-2mu\mathord-\mkern-2mu$}\hfill
 \mkern-6mu\mathord\rightarrow$}%
\def\overrightarrow{\mathpalette\overrightarrow@}%
\def\overrightarrow@#1#2{\vbox{\ialign{##\crcr\rightarrowfill@#1\crcr
 \noalign{\kern-\ex@\nointerlineskip}$\m@th\hfil#1#2\hfil$\crcr}}}%
\def\overleftarrow{\mathpalette\overleftarrow@}%
\def\overleftarrow@#1#2{\vbox{\ialign{##\crcr\leftarrowfill@#1\crcr
 \noalign{\kern-\ex@\nointerlineskip}$\m@th\hfil#1#2\hfil$\crcr}}}%
\def\overleftrightarrow{\mathpalette\overleftrightarrow@}%
\def\overleftrightarrow@#1#2{\vbox{\ialign{##\crcr
   \leftrightarrowfill@#1\crcr
 \noalign{\kern-\ex@\nointerlineskip}$\m@th\hfil#1#2\hfil$\crcr}}}%
\def\underrightarrow{\mathpalette\underrightarrow@}%
\def\underrightarrow@#1#2{\vtop{\ialign{##\crcr$\m@th\hfil#1#2\hfil
  $\crcr\noalign{\nointerlineskip}\rightarrowfill@#1\crcr}}}%
\def\underleftarrow{\mathpalette\underleftarrow@}%
\def\underleftarrow@#1#2{\vtop{\ialign{##\crcr$\m@th\hfil#1#2\hfil
  $\crcr\noalign{\nointerlineskip}\leftarrowfill@#1\crcr}}}%
\def\underleftrightarrow{\mathpalette\underleftrightarrow@}%
\def\underleftrightarrow@#1#2{\vtop{\ialign{##\crcr$\m@th
  \hfil#1#2\hfil$\crcr
 \noalign{\nointerlineskip}\leftrightarrowfill@#1\crcr}}}%
\def\qopnamewl@#1{\mathop{\operator@font#1}\nlimits@}
\let\nlimits@\displaylimits
\def\setboxz@h{\setbox\z@\hbox}
\def\varlim@#1#2{\mathop{\vtop{\ialign{##\crcr
 \hfil$#1\m@th\operator@font lim$\hfil\crcr
 \noalign{\nointerlineskip}#2#1\crcr
 \noalign{\nointerlineskip\kern-\ex@}\crcr}}}}
 \def\rightarrowfill@#1{\m@th\setboxz@h{$#1-$}\ht\z@\z@
  $#1\copy\z@\mkern-6mu\cleaders
  \hbox{$#1\mkern-2mu\box\z@\mkern-2mu$}\hfill
  \mkern-6mu\mathord\rightarrow$}
\def\leftarrowfill@#1{\m@th\setboxz@h{$#1-$}\ht\z@\z@
  $#1\mathord\leftarrow\mkern-6mu\cleaders
  \hbox{$#1\mkern-2mu\copy\z@\mkern-2mu$}\hfill
  \mkern-6mu\box\z@$}
\def\projlim{\qopnamewl@{proj\,lim}}
\def\injlim{\qopnamewl@{inj\,lim}}
\def\varinjlim{\mathpalette\varlim@\rightarrowfill@}
\def\varprojlim{\mathpalette\varlim@\leftarrowfill@}
\def\varliminf{\mathpalette\varliminf@{}}
\def\varliminf@#1{\mathop{\underline{\vrule\@depth.2\ex@\@width\z@
   \hbox{$#1\m@th\operator@font lim$}}}}
\def\varlimsup{\mathpalette\varlimsup@{}}
\def\varlimsup@#1{\mathop{\overline
  {\hbox{$#1\m@th\operator@font lim$}}}}
\def\align{\@verbatim \frenchspacing\@vobeyspaces \@alignverbatim
You are using the "align" environment in a style in which it is not defined.}
\let\csname endalign*\endcsname =\endtrivlist
\def\alignat{\@verbatim \frenchspacing\@vobeyspaces \@alignatverbatim
You are using the "alignat" environment in a style in which it is not defined.}
\let\csname endalignat*\endcsname =\endtrivlist
\def\xalignat{\@verbatim \frenchspacing\@vobeyspaces \@xalignatverbatim
You are using the "xalignat" environment in a style in which it is not defined.}
\let\csname endxalignat*\endcsname =\endtrivlist
\def\gather{\@verbatim \frenchspacing\@vobeyspaces \@gatherverbatim
You are using the "gather" environment in a style in which it is not defined.}
\let\csname endgather*\endcsname =\endtrivlist
\def\multiline{\@verbatim \frenchspacing\@vobeyspaces \@multilineverbatim
You are using the "multiline" environment in a style in which it is not defined.}
\let\csname endmultiline*\endcsname =\endtrivlist
\def\arrax{\@verbatim \frenchspacing\@vobeyspaces \@arraxverbatim
You are using a type of "array" construct that is only allowed in AmS-LaTeX.}
\def\tabulax{\@verbatim \frenchspacing\@vobeyspaces \@tabulaxverbatim
You are using a type of "tabular" construct that is only allowed in AmS-LaTeX.}
\let\csname endarrax*\endcsname =\endtrivlist
\let\csname endtabulax*\endcsname =\endtrivlist
 \def\endequation{%
     \ifmmode\ifinner % FLEQN hack
      \iftag@
        \addtocounter{equation}{-1} % undo the increment made in the begin part
        $\hfil
           \displaywidth\linewidth\@taggnum\egroup \endtrivlist
        \global\tag@false
        \global\@ignoretrue   
      \else
        $\hfil
           \displaywidth\linewidth\@eqnnum\egroup \endtrivlist
        \global\tag@false
        \global\@ignoretrue 
      \fi
     \else   
      \iftag@
        \addtocounter{equation}{-1} % undo the increment made in the begin part
        \eqno \hbox{\@taggnum}
        \global\tag@false%
        $$\global\@ignoretrue
      \else
        \eqno \hbox{\@eqnnum}% $$ BRACE MATCHING HACK
        $$\global\@ignoretrue
      \fi
     \fi\fi
 } 
 \newif\iftag@ \tag@false
 \def\TCItag{\@ifnextchar*{\@TCItagstar}{\@TCItag}}
 \def\@TCItag#1{%
     \global\tag@true
     \global\def\@taggnum{(#1)}}
 \def\@TCItagstar*#1{%
     \global\tag@true
     \global\def\@taggnum{#1}}
     \def\tag{\@ifnextchar*{\@tagstar}{\@tag}}
     \def\@tag#1{%
         \global\tag@true
         \global\def\@taggnum{(#1)}}
     \def\@tagstar*#1{%
         \global\tag@true
         \global\def\@taggnum{#1}}
\begin{document}

\begin{frontmatter}

\title{Vector Quantile Regression: An Optimal Transport Approach\thanksref{T1}}
\runtitle{Vector Quantile Regression}
\thankstext{T1}{First ArXiv submission: June, 2014. Carlier is partially supported by
INRIA and the ANR for partial support through the Projects ISOTACE
(ANR-12-MONU-0013) and OPTIFORM (ANR-12-BS01-0007). Chernozhukov
is partially supported by an NSF grant. Galichon is partially supported by the European Research Council under the European
Union's Seventh Framework Programme (FP7/2007-2013) / ERC grant agreement
313699.}

\begin{aug}
\author{\fnms{Guillaume} \snm{Carlier}\thanksref{t1}\ead[label=e1]{carlier@ceremade.dauphine.fr}},
\author{\fnms{Victor} \snm{Chernozhukov}\thanksref{t2}\ead[label=e2]{vchern@MIT.edu}}
\and
\author{\fnms{Alfred} \snm{Galichon}\thanksref{t3}
\ead[label=e3]{ag133@nyu.edu}}

%\thankstext{t1}{Carlier is grateful to
%INRIA and the ANR for partial support through the Projects ISOTACE
%(ANR-12-MONU-0013) and OPTIFORM (ANR-12-BS01-0007).}
%\thankstext{t2}{Chernozhukov gratefully acknowledges research support
%via an NSF grant.}
%\thankstext{t3}{ A part of this paper was written while Galichon was visiting MIT, whose
%hospitality has been much appreciated.}
\runauthor{G. Carlier et al.}

\affiliation{\thanksmark{t1}Universit\'e Paris IX Dauphine, \thanksmark{t2}MIT, and \thanksmark{t3}NYU and Sciences Po}

\address{CEREMADE, UMR CNRS 7534, \\
Universit\'e Paris IX Dauphine\\
Pl. de Lattre de Tassigny, \\
75775 Paris Cedex 16, FRANCE\\
\printead{e1}}

\address{Department of Economics \\
and Center for Statistics, MIT\\
50 Memorial Drive, E52-361B\\
Cambridge, MA 02142, USA\\
\printead{e2}}

\address{Economics Department \\
and Courant Institute of Mathematical Sciences, NYU\\
70 Washington Square South\\
New York, NY 10013, USA\\
\printead{e3}}
\end{aug}

\begin{abstract}
We propose a notion of conditional vector quantile function and a vector
quantile regression. A \emph{conditional vector quantile function} (CVQF) of a random vector $Y$, taking
values in $\mathbb{R}^d$ given covariates $Z=z$, taking values in $\mathbb{R}%
^k$, is a map $u \longmapsto Q_{Y\mid Z}(u,z)$, which is monotone, in the sense
of being a gradient of a convex function, and such that given that vector $U$
follows a reference non-atomic distribution $F_U$, for instance uniform
distribution on a unit cube in $\mathbb{R}^d$, the random vector $Q_{Y\mid
Z}(U,z)$ has the distribution of $Y$ conditional on $Z=z$. Moreover, we have
a strong representation, $Y = Q_{Y\mid Z}(U,Z)$ almost surely, for some
version of $U$. The \emph{vector quantile regression} (VQR) is a linear model for CVQF of $Y$ given
$Z$. Under correct specification, the notion produces strong representation,
$Y=\beta \left(U\right) ^\top f(Z)$, for $f(Z)$ denoting a known set of
transformations of $Z$, where $u \longmapsto \beta(u)^\top f(Z)$ is a monotone
map, the gradient of a convex function, and the quantile regression
coefficients $u \longmapsto \beta(u)$ have the interpretations analogous to that
of the standard scalar quantile regression. As $f(Z)$ becomes a richer class
of transformations of $Z$, the model becomes nonparametric, as in series
modelling. A key property of VQR is the embedding of the classical
Monge-Kantorovich's optimal transportation problem at its core as a special
case. In the classical case, where $Y$ is scalar, VQR reduces to a version of the
classical QR, and CVQF reduces to the scalar conditional quantile function.
An application to  multiple Engel curve
estimation is considered.
\end{abstract}

\begin{keyword}[class=MSC]
\kwd[Primary ]{62J99}
\kwd[; secondary ]{62H05, 62G05}
\end{keyword}

\begin{keyword}
\kwd{Vector quantile regression, vector conditional quantile function, Monge-Kantorovich-Brenier}
\kwd{\LaTeXe}
\end{keyword}

\end{frontmatter}

\section{Introduction}

Quantile regression provides a very convenient and powerful tool for
studying dependence between random variables. The main object of modelling
is the conditional quantile function (CQF) $(u, z) \longmapsto Q_{Y\mid Z}(u,z)$%
, which describes the $u$-quantile of the random scalar $Y$ conditional on a
$k$-dimensional vector of regressors $Z$ taking a value $z$. Conditional
quantile function naturally leads to a strong representation via relation:
\begin{equation*}
Y = Q_{Y\mid Z}(U, Z), \ \ U\mid Z \sim U(0,1),
\end{equation*}
where $U$ is the latent unobservable variable, normalized to have a uniform
reference distribution, and is independent of regressors $Z$. The mapping $u
\longmapsto Q_{Y\mid Z} (u, Z)$ is monotone, namely non-decreasing, almost
surely.%
%, and the laten variable $U$ can be recovered by taking the inverse $Q^{-_{Y\mid Z} (u, Z)$
%, and we have that $\Pr(Y \leq  Q_{Y\mid Z} (u, Z)\mid Z) = u$ a.s. for all $u \in [0,1]$.

Quantile regression (QR) is a means of modelling the conditional quantile
function. A leading approach is linear in parameters, namely, it assumes
that there exists a known ${\mathbb{R}}^{p}$-valued vector $f(Z)$,
containing transformations of $Z$, and a ($p\times 1$ vector-valued) map of
regression coefficients $u\longmapsto \beta (u)$ such that
\begin{equation*}
Q_{Y\mid Z}\left( u\mid z\right) =\beta \left( u\right) ^{\top }f(z),
\end{equation*}%
for all $z$ in the support of $Z$ and for all quantile indices $u$ in $(0,1)$%
. This representation highlights the vital ability of QR to capture
differentiated effects of the explanatory variable $Z$ on various
conditional quantiles of the dependent variable $Y$ (e.g., impact of
prenatal smoking on infant birthweights). QR has found a large number of
applications; see references in Koenker (\cite{Koenker2005})'s monograph.
The model is flexible in the sense that, even if the model is not correctly
specified, by using more and more suitable terms $f(Z)$ we can approximate
the true CQF arbitrarily well. Moreover, coefficients $u\longmapsto \beta (u)$
can be estimated via tractable linear programming method (\cite{KB}).

The principal contribution of this paper is to extend these ideas to the
cases of vector-valued $Y$, taking values in $\mathbb{R}^{d}$. Specifically,
a vector conditional quantile function (CVQF) of a random vector $Y$, taking
values in $\mathbb{R}^{d}$ given the covariates $Z$, taking values in $%
\mathbb{R}^{k}$, is a map $(u,z)\longmapsto Q_{Y\mid Z}(u,z)$, which is monotone
with respect to $u$, in the sense of being a gradient of a convex function,
which implies that%
\begin{equation}
(Q_{Y\mid Z}(u,z)-Q_{Y\mid Z}(\bar{u},z))^{\top }(u-\bar{u})\geq 0\ \ \text{
for all }u,\bar{u}\in (0,1)^{d},z\in \mathcal{Z},  \label{eq: monotonicity}
\end{equation}%
and such that the following strong representation holds with probability 1:
\begin{equation}
Y=Q_{Y\mid Z}(U,Z),\ \ U\mid Z\sim U(0,1)^{d},  \label{strong0}
\end{equation}%
where $U$ is latent random vector uniformly distributed on $(0,1)^{d}$. We
can also use other non-atomic reference distributions $F_{U}$ on $\mathbb{R}%
^{d}$, for example, the standard normal distribution instead of uniform
distribution (as we can in the canonical, scalar quantile regression case).
We show that this map exists and is unique under mild conditions, as a
consequence of Brenier's polar factorization theorem. This notion relies on
a particular, yet very important, notion of monotonicity (\ref{eq:
monotonicity}) for maps $\mathbb{R}^{d}\rightarrow \mathbb{R}^{d}$, which we
adopt here.%
%; as we discuss later, other notions lead to other forms of CVQF that can admit desired strong representation of the form (\ref{strong0} ).

We define vector quantile regression (VQR) as a model of CVQF, particularly
a linear model. Specifically, under correct specification, our linear model
takes the form:
\begin{equation*}
Q_{Y\mid X}(u\mid z)=\beta (u)^{\top }f(z),
\end{equation*}%
where $u\longmapsto \beta (u)^{\top }f(z)$ is a monotone map, in the sense of
being a gradient of convex function; and $u\longmapsto \beta (u)$ is a map of
regression coefficients from $(0,1)^{d}$ to the set of $p\times d$ matrices
with real entries. This model is a natural analog of the classical QR for
the scalar case. In particular, under correct specification, we have the
strong representation
\begin{equation}
Y=\beta (U)^{\top }f(Z),\ \ U\mid Z\sim U(0,1)^{d},
\label{eq: linear factor}
\end{equation}%
where $U$ is uniformly distributed on $(0,1)^{d}$ conditional on $Z$. (Other
reference distributions could also be easily permitted.)

We provide a linear program for computing $u\longmapsto \beta (u)$ in population
and finite samples. We shall stress that this formulation offers a number of
useful properties. In particular, the linear programming problem admits a
general formulation that embeds the optimal transportation problem of
Monge-Kantorovich-Brenier, establishing a useful conceptual link to an
important area of optimization and functional analysis (see, e.g. \cite%
{Villani03}, \cite{villani2}).

Our paper also connects to a number of interesting proposals for performing
multivariate quantile regressions, which focus on inheriting certain (though
not all) features of univariate quantile regression-- for example,
minimizing an asymmetric loss, ordering ideas, monotonicity, equivariance or
other related properties, see, for example, some key proposals (including
some for the non-regression case) in \cite{chaudhuri}, \cite{Koltchinskii},
\cite{serfling}, \cite{HPS}, \cite{KongMizera}, \cite{Belloni:Wrinkler},
which are contrasted to our proposal in more details below. Note that it is
not possible to reproduce all "desirable properties" of scalar quantile
regression in higher dimensions, so various proposals focus on achieving
different sets of properties. Our proposal is quite different from all of
the excellent aforementioned proposals in that it targets to simultaneously
reproduce two fundamentally different properties of quantile regression in
higher dimensions -- namely the deterministic coupling property (\ref{eq:
linear factor}) and the monotonicity property (\ref{eq: monotonicity}). This
is the reason we deliberately don't use adjective \textquotedblleft
multivariate" in naming our method. By using a different name we emphasize
the major differences of our method's goals from those of the other
proposals. This also makes it clear that our work is complementary to other
works in this direction. We discuss other connections as we present our main
results.

\subsection{Plan of the paper}

We organize the rest of the paper as follows. In Section 2, we introduce and
develop the properties of CVQF. In Section 3, we introduce and develop the
properties of VQR as well its linear programming implementation. In Section
4, we provide computational details of the discretized form of the linear
programming formulation, which is useful for practice and computation of VQR
with finite samples. In Section 5, we implement VQR in an empirical example,
providing the testing ground for these new concepts. We provide proofs of
all formal results of the paper in the Appendix.

\section{Conditional Vector Quantile Function}

\subsection{Conditional Vector Quantiles as Gradients of Convex Functions}

We consider a random vector $(Y,Z)$ defined on a complete probability space $%
(\Omega _{1},\mathcal{A}_{1},\P _{1})$. The random vector $Y$ takes values
in $\mathbb{R}^{d}$. The random vector $Z$ is a vector covariate, taking
values in $\mathbb{R}^{k}$. Denote by $F_{YZ}$ the joint distribution
function of $(Y,Z)$, by $F_{Y\mid Z}$ the (regular) conditional distribution
function of $Y$ given $Z$, and by $F_{Z}$ the distribution function $Z$. We
also consider random vectors $V$ defined on a complete probability space $%
(\Omega _{0},\mathcal{A}_{0},\P _{0})$, which are required to have a fixed
reference distribution function $F_{U}$. Let $(\Omega ,\mathcal{A},\P )$ be
the a\textit{\ suitably enriched} complete probability space that can carry
all vectors $(Y,Z)$ and $V$ with distributions $F_{YZ}$ and $F_{U}$,
respectively, as well as the independent (from all other variables) standard
uniform random variable on the unit interval. Formally, this product space
takes the form $(\Omega ,\mathcal{A},\P )=(\Omega _{0},\mathcal{A}_{0},\P %
_{0})\times (S_{1},\mathcal{A}_{1},\P _{1})\times ((0,1),B(0,1),\text{Leb})$%
, where $((0,1),B(0,1),\text{Leb})$ is the canonical probability space,
consisting of the unit segment of the real line equipped with Borel sets and
the Lebesgue measure. The symbols $\mathcal{Y}$, $\mathcal{Z}$, $\mathcal{U}$%
, $\mathcal{YZ}$, $\mathcal{UZ}$ denote the support of $F_{Y}$, $F_{Z}$, $%
F_{U}$, $F_{YZ}$, $F_{UZ}$, and $\mathcal{Y}_{z}$ denotes the support of $%
F_{Y\mid Z}(\cdot |z)$. We denote by $\left\Vert .\right\Vert $ the
Euclidian norm of $\mathbb{R}^{d}$.

We assume that the following condition holds:

\begin{itemize}
\item[(N)] \textit{\ $F_U$ has a density $f_U$ with respect to the Lebesgue
measure on $\mathbb{R}^d$ with a convex support set $\mathcal{U}$.}
\end{itemize}

The distribution $F_U$ describes a reference distribution for a vector of
latent variables $U$, taking values in $\mathbb{R}^d$, that we would like to
link to $Y$ via a strong representation of the form mentioned in the
introduction. This vector will be one of many random vectors $V$ having a
distribution function $F_U$, but there will only be one $V=U$, in the sense
specified below, that will provide the required strong representation. The
leading cases for the reference distribution $F_U$ include:

\begin{itemize}
\item the standard uniform distribution on the unit $d$-dimensional cube, $%
U(0,1)^d$,

\item the standard normal distribution $N(0,I_d)$ over $\mathbb{R}^d$, or

\item any other reference distribution on $\mathbb{R}^d$, e.g., uniform on a
ball.
\end{itemize}

Our goal here is to create a deterministic mapping that transforms a random
vector $U$ with distribution $F_{U}$ into $Y$ such that $Y$ conditional on $%
Z $ has the conditional distribution $F_{Y\mid Z}$. Such a map that pushes
forward a probability distribution of interest onto another one is called a
\textit{transport} between these distributions. That is, we want to have a
strong representation property like (\ref{strong0}) that we stated in the
introduction. Moreover, we would like this transform to have \textit{a
monotonicity property}, as in the scalar case. Specifically, in the vector
case we require this transform to be a \textit{gradient of a convex function}%
, which is a plausible generalization of monotonicity from the scalar case.
Indeed, in the scalar case the requirement that the transform is the
gradient of a convex map reduces to the requirement that the transform is
non-decreasing. We shall refer to the resulting transform as the \textit{%
conditional vector quantile function} (CVQF). The following theorem shows
that such map \textit{exists} and is \textit{uniquely determined} by the
stated requirements.

\begin{theorem}[\textbf{CVQF as Conditional Brenier Maps}]
\label{Thm: Conditional Brenier} Suppose condition (N) holds.

(i) There exists a measurable map $(u,z)\longmapsto Q_{Y\mid Z}(u,z)$ from $%
\mathcal{UZ}$ to $\mathbb{R}^{d}$, such that for each $z$ in $\mathcal{Z}$,
the map $u\longmapsto Q_{Y\mid Z}(u,z)$ is the unique ($F_{U}$-almost
everywhere) gradient of convex function such that, whenever $V\sim F_{U}$,
the random vector $Q_{Y\mid Z}(V,z)$ has the distribution function $F_{Y\mid
Z}(\cdot ,z)$, that is,
\begin{equation}
F_{Y\mid Z}(y,z)=\int 1\{Q_{Y\mid Z}(u,z)\leq y\}F_{U}(du),\ \ \text{ for
all }y\in \mathbb{R}^{d}.  \label{df}
\end{equation}%
(ii) Moreover, there exists a random variable $V=U$ such that $\P $-almost
surely
\begin{equation}
Y=Q_{Y\mid Z}(U,Z),\text{ and }\ U\mid Z\sim F_{U}.  \label{strong1}
\end{equation}
\end{theorem}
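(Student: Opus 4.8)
The plan is to establish (i) by invoking the Brenier--McCann polar factorization fiberwise in $z$, then upgrading to joint measurability, and to deduce (ii) from a disintegration (gluing) argument that realizes a randomized inverse of the Brenier map using the auxiliary uniform variable carried by the enriched space. For fixed $z\in\mathcal Z$, write $\nu_z:=F_{Y\mid Z}(\cdot,z)$ for the regular conditional law of $Y$ given $Z=z$, a Borel probability measure on $\mathbb R^d$. Condition (N) says $F_U$ is absolutely continuous with convex support $\mathcal U$, so in particular $F_U$ charges no set of Hausdorff dimension $\le d-1$; McCann's theorem then produces a lower semicontinuous convex $\phi_z:\mathbb R^d\to\mathbb R\cup\{+\infty\}$, finite on $\mathcal U$, whose gradient $\nabla\phi_z$ is defined $F_U$-a.e., satisfies $(\nabla\phi_z)_\#F_U=\nu_z$, and is the $F_U$-a.e.\ unique map equal $F_U$-a.e.\ to the gradient of a convex function with this pushforward property. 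Setting $Q_{Y\mid Z}(\cdot,z):=\nabla\phi_z$, the displayed identity \eqref{df} is merely $(\nabla\phi_z)_\#F_U=\nu_z$ written through distribution functions, so (i) holds for each fixed $z$, and monotonicity \eqref{eq: monotonicity} is immediate since each $Q_{Y\mid Z}(\cdot,z)$ is a gradient of a convex function.

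To get the jointly measurable version required in (i), I would normalize the potentials, say by $\int\phi_z\,dF_U=0$, so that $\phi_z$ itself is uniquely determined by $\nu_z$, and then transfer measurability along $z\mapsto\nu_z\mapsto\nabla\phi_z$. The first map is Borel from $\mathcal Z$ into $\mathcal P(\mathbb R^d)$ with the weak topology, being a regular conditional distribution, hence a probability kernel. The second map is Borel from the relevant subset of $\mathcal P(\mathbb R^d)$ into $L^1(F_U;\mathbb R^d)$ by stability of optimal transport: weak convergence of the targets forces local uniform convergence of the normalized potentials, hence a.e.\ (and $L^1_{\mathrm{loc}}$) convergence of their gradients. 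A Borel map into $L^1$ admits a jointly measurable pointwise representative $(u,z)\mapsto Q_{Y\mid Z}(u,z)$, obtained by approximating with countably-valued maps and passing to the limit; equivalently, one applies a Kuratowski--Ryll-Nardzewski measurable selection to the closed-graph correspondence $z\mapsto\{\text{normalized potentials}\}$. This fixes the measurable version.

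For (ii), I would disintegrate the reference measure along the quantile map. Fixing $z$, the map $u\mapsto Q_{Y\mid Z}(u,z)$ pushes $F_U$ to $\nu_z$, so the disintegration of $F_U$ with respect to it gives a Borel Markov kernel $(y,z)\mapsto K_z(y,\cdot)$ on $\mathbb R^d$ with $F_U(\cdot)=\int K_z(y,\cdot)\,\nu_z(dy)$ and $K_z\big(y,\{u:Q_{Y\mid Z}(u,z)=y\}\big)=1$ for $\nu_z$-a.e.\ $y$; joint measurability in $(y,z)$ comes from the measurable version built above together with measurability of disintegrations. Now use the independent $\mathrm{Uniform}(0,1)$ variable $S$ on the enriched space: by a standard transfer (noise-outsourcing) argument there is a measurable $g$ such that $U:=g(Y,Z,S)$ has conditional law $K_Z(Y,\cdot)$ given $(Y,Z)$. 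Then $U\mid(Y,Z)\sim K_Z(Y,\cdot)$; integrating out $Y$ against $\nu_Z$ gives $U\mid Z\sim F_U$, and the support property of $K_Z(Y,\cdot)$ gives $Q_{Y\mid Z}(U,Z)=Y$ $\P$-a.s., which is \eqref{strong1}.

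The genuinely delicate step is the second one --- the ``parametrized Brenier theorem'': arranging that the fiberwise-unique Brenier maps, and the associated inverse kernels $K_z$, depend measurably on $z$. Fiberwise existence and uniqueness is a direct citation of McCann, and the probabilistic gluing is routine once $K_z$ is available; the whole argument hinges on the stability/measurable-selection input, so that is where I would concentrate the work, or cite a ready-made measurable version of the dependence of optimal transport maps on the marginal.
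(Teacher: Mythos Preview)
Your proposal is correct and tracks the paper's proof closely for part (i): both apply McCann's refinement of Brenier fiberwise in $z$ and then isolate the joint measurability of $(u,z)\mapsto Q_{Y\mid Z}(u,z)$ as the real work. The technical routes to measurability differ. You normalize the potential, invoke stability of optimal transport under weak convergence of the target, and then pass from a Borel map into $L^1$ to a jointly measurable pointwise representative. The paper instead takes a measurable selection of optimal \emph{plans} $\gamma(\nu^z)$ (via Castaing--Valadier / Ekeland--Temam), recovers the potential by integrating the conditional expectation of $\gamma(\nu^z)$ along segments, and then uses Yosida regularization $\varphi_n(u,z)=\inf_v\{\tfrac{n}{2}|u-v|^2+\varphi(v,z)\}$: each $\nabla_u\varphi_n$ is Lipschitz in $u$ and measurable in $z$, hence jointly measurable, and converges pointwise to the minimal-norm subgradient. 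Your stability argument is cleaner conceptually but leans on a convergence-of-gradients step that needs care when the target has no second moment; the paper's Yosida route is more pedestrian but self-contained and handles the general case by McCann-style approximation.

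For part (ii) the approaches genuinely diverge. You construct $U$ explicitly by disintegrating $F_U$ along the fiberwise Brenier map to obtain a kernel $K_z(y,\cdot)$ supported on $\{Q_{Y\mid Z}(\cdot,z)=y\}$, then sample from it via noise-outsourcing using the auxiliary uniform $S$. The paper instead invokes the abstract Dudley--Philipp coupling lemma: it builds the target joint law $Q$ of $(Y^*,Z^*,U^*)$ on a separate space with $U^*\perp Z^*$ and $Y^*=Q_{Y\mid Z}(U^*,Z^*)$, observes that the $(Y,Z)$-marginal of $Q$ is $F_{YZ}$, and then pulls $U$ back onto the enriched space in one stroke. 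Your construction is more hands-on and yields $U$ as an explicit measurable function of $(Y,Z,S)$, at the cost of having to check joint measurability of the disintegration kernel $K_z$; the paper's citation of Dudley--Philipp black-boxes exactly that step and is shorter, but less constructive.
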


%\begin{OP}  Improve the current proof that  $(u,z) \longmapsto (Q_{Y \mid Z} (u, z),z)$ is a Borel-measurable map.
%\end{OP}

The theorem is our \textit{first main result} that we announced in the
introduction. It should be noted that the theorem does not require $Y$ to
have an absolutely continuous distribution, it holds for discrete and mixed
outcome variables; only the reference distribution for the latent variable $U
$ is assumed to be absolutely continuous. It is also noteworthy that in the
classical case of $Y$ and $U$ being \textit{scalars} we recover the
classical conditional quantile function as well as the strong representation
formula based on this function (\cite{matzkin}, \cite{Koenker2005}).
Regarding the proof, the first assertion of the theorem is a consequence of
fundamental results due to McCann (\cite{McCann1985}) (as, e.g, stated in
\cite{Villani03}, Theorem 2.32) who in turn refined the fundamental results
of \cite{brenier}. These results were obtained in the case without
conditioning. The second assertion is a consequence of Dudley-Philipp (\cite%
{DP}) result on abstract couplings in Polish spaces.

\begin{remark}[\textbf{Monotonicity}]
The transform $(u,z) \longmapsto (Q_{Y \mid Z} (u, z),z)$ has the following
monotonicity property:
\begin{equation}
(Q_{Y \mid Z}(u,z) - Q_{Y \mid Z}(\bar u,z))^\top (u - \bar u) \geq 0 \ \
\forall u, \bar u \in \mathcal{U}, \forall z \in \mathcal{Z}. \quad \qed
\end{equation}
\end{remark}

\begin{remark}[\textbf{Uniqueness}]
\label{rk:uniqueness}In part (i) of the theorem, $u\longmapsto Q_{Y\mid Z}(u,z)$
is equal to a gradient of some convex function $u\longmapsto \varphi (u,z)$ for $%
F_{U}$-almost every value of $u\in \mathcal{U}$ and it is unique in the
sense that any other map with the same properties will agree with it $F_{U}$%
-almost everywhere. In general, the gradient $u\longmapsto \nabla _{u}\varphi
(u,z)$ exists $F_{U}$-almost everywhere, and the set of points $\mathcal{U}%
_{e}$ where it does not is negligible. Hence the map $u\longmapsto Q_{Y\mid
Z}(u,z)$ is still definable at each $u_{e}\in \mathcal{U}_{e}$ from the
gradient values $\varphi (u,z)$ on $u\in \mathcal{U}\setminus \mathcal{U}%
_{e} $, by defining it at each $u_{e}$ as a smallest-norm element of $\{v\in
\mathbb{R}^{d}:\exists {u_{k}}\in \mathcal{U}\setminus \mathcal{U}%
_{e}:u_{k}\rightarrow u_{e},\nabla _{u}\varphi (u_{k},z)\rightarrow v\}$.
\qed
\end{remark}

\vspace{.1in}

Let us assume further that the following condition holds:

\begin{itemize}
\item[(C)] \textit{\ For each $z \in \mathcal{Z}$, the distribution $F_{Y
\mid Z}(\cdot, z)$ admits a density $f_{Y \mid Z}(\cdot, z)$ with respect to
the Lebesgue measure on $\mathbb{R}^d$.}
\end{itemize}

Under this condition we can recover $U$ uniquely in the following sense:

\begin{theorem}[\textbf{ Conditional Inverse Vector Quantiles or Conditional Vector Ranks}]
\label{Thm: Inverse} Suppose conditions (N) and (C) holds.

Then there exists a measurable map $(y,z) \longmapsto Q^{-1}_{Y\mid Z}(y,z)$,
mapping $\mathcal{YZ}$ to $\mathbb{R}^d$, such that for each $z$ in $%
\mathcal{Z}$, the map $y \longmapsto Q^{-1}_{Y\mid Z}(y,z)$ is the inverse of $u
\longmapsto Q_{Y\mid Z}(u,z)$ in the sense that:
\begin{equation*}
Q^{-1}_{Y\mid Z} ( Q_{Y\mid Z}(u,z),z) = u,
\end{equation*}
for almost all $u$ under $F_U$. Furthermore, we can construct $U$ in (\ref%
{strong1}) as follows,
\begin{equation}  \label{strong2}
U = Q^{-1}_{Y\mid Z}(Y,Z), \text{ and } \ U\mid Z \sim F_U.
\end{equation}
\end{theorem}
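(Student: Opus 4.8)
The plan is to construct $Q^{-1}_{Y\mid Z}(\cdot,z)$ as the gradient of the Legendre--Fenchel conjugate of the convex potential that Theorem~\ref{Thm: Conditional Brenier} attaches to the CVQF on the fiber over $z$, and then to verify in turn: (a) that the extra regularity (C) makes this conjugate gradient an almost-everywhere inverse of $Q_{Y\mid Z}(\cdot,z)$; (b) joint measurability in $(y,z)$; (c) the strong representation $U=Q^{-1}_{Y\mid Z}(Y,Z)$.

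For (a), fix $z\in\mathcal Z$. By Theorem~\ref{Thm: Conditional Brenier} and Remark~\ref{rk:uniqueness} we may write $Q_{Y\mid Z}(\cdot,z)=\nabla_u\varphi(\cdot,z)$ for a convex $u\longmapsto\varphi(u,z)$, with $\nabla_u\varphi(\cdot,z)$ pushing $F_U$ onto $F_{Y\mid Z}(\cdot,z)$. Let $\varphi^*(\cdot,z)$ denote the conjugate of $\varphi(\cdot,z)$ in its first argument and set $Q^{-1}_{Y\mid Z}(y,z):=\nabla_y\varphi^*(y,z)$. The non-differentiability set of the convex function $\varphi^*(\cdot,z)$ has Hausdorff dimension at most $d-1$, so condition (C) --- absolute continuity of $F_{Y\mid Z}(\cdot,z)$ --- makes it $F_{Y\mid Z}(\cdot,z)$-null, and $\nabla_y\varphi^*$ is thus well defined $F_{Y\mid Z}(\cdot,z)$-a.e.; on the residual null set I would extend it by the smallest-norm limiting-gradient recipe already used in Remark~\ref{rk:uniqueness}, so that $Q^{-1}_{Y\mid Z}(\cdot,z)$ is defined on all of $\mathcal{YZ}$. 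I would then invoke the classical optimal-transport duality (McCann~\cite{McCann1985}; see \cite{Villani03}): since $F_{Y\mid Z}(\cdot,z)$ is absolutely continuous, $\nabla_y\varphi^*(\cdot,z)$ is the $F_{Y\mid Z}(\cdot,z)$-a.e.\ unique gradient-of-convex map transporting $F_{Y\mid Z}(\cdot,z)$ onto $F_U$, and it is a genuine inverse, i.e.\ $\nabla_y\varphi^*(\nabla_u\varphi(u,z),z)=u$ for $F_U$-a.e.\ $u$ (this last uses $u\in\partial\varphi^*(\nabla_u\varphi(u,z))$ wherever $\varphi(\cdot,z)$ is differentiable, together with differentiability of $\varphi^*(\cdot,z)$ at $\nabla_u\varphi(u,z)$, which holds for $F_U$-a.e.\ $u$ because $\nabla_u\varphi(\cdot,z)$ pushes $F_U$ onto the absolutely continuous $F_{Y\mid Z}(\cdot,z)$). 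This is the asserted identity $Q^{-1}_{Y\mid Z}(Q_{Y\mid Z}(u,z),z)=u$, $F_U$-a.e.

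For (b), note that $(u,z)\longmapsto\varphi(u,z)$ may be chosen jointly measurable (as in the proof of Theorem~\ref{Thm: Conditional Brenier}), hence $\varphi^*(y,z)=\sup_u\{u^\top y-\varphi(u,z)\}$ is jointly measurable because the supremum can be restricted to a countable dense set of $u$'s; the gradient and the smallest-norm selection are then measurable by the same measurable-selection argument used to make $Q_{Y\mid Z}$ measurable, giving measurability of $(y,z)\longmapsto Q^{-1}_{Y\mid Z}(y,z)$ on $\mathcal{YZ}$. For (c), Theorem~\ref{Thm: Conditional Brenier}(ii) gives $Y=Q_{Y\mid Z}(U,Z)$ $\P$-a.s.\ with $U\mid Z\sim F_U$, so $Q^{-1}_{Y\mid Z}(Y,Z)=Q^{-1}_{Y\mid Z}(Q_{Y\mid Z}(U,Z),Z)$ $\P$-a.s.; the identity in (a) fails, for each $z$, only on an $F_U$-null set $N_z$, and since $(z,u)\longmapsto 1\{u\in N_z\}$ is $F_Z\otimes F_U$-null, Fubini together with $U\mid Z\sim F_U$ yields $Q^{-1}_{Y\mid Z}(Q_{Y\mid Z}(U,Z),Z)=U$ $\P$-a.s., whence $U=Q^{-1}_{Y\mid Z}(Y,Z)$ $\P$-a.s., with $U\mid Z\sim F_U$ inherited from Theorem~\ref{Thm: Conditional Brenier}(ii).

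The hard part will not be the conjugate-gradient inversion itself, which is essentially textbook once (C) is in force, but the measure-theoretic bookkeeping: justifying that the fiberwise ``$F_U$-a.e.'' identity can be glued into a single $\P$-a.s.\ statement on the joint space (joint measurability of $z\longmapsto N_z$ and of $Q^{-1}_{Y\mid Z}$ via measurable selection), and pinning down $Q^{-1}_{Y\mid Z}$ on the Lebesgue-negligible non-differentiability set so that the claimed map is genuinely defined on all of $\mathcal{YZ}$.
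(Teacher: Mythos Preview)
Your proposal is correct and is essentially the same idea as the paper's, though you carry it out via an explicit Legendre-conjugate construction whereas the paper takes a shortcut: it simply re-invokes Theorem~\ref{Thm: Conditional Brenier} with the roles of $U$ and $Y$ swapped (condition (C) supplying the absolute continuity that (N) supplied before), obtaining $Q^{-1}_{Y\mid Z}(\cdot,z)$ as the McCann map pushing $F_{Y\mid Z}(\cdot,z)$ onto $F_U$, and then argues $Q^{-1}\circ Q=\mathrm{id}$ $F_U$-a.e.\ via a test-function identity. By McCann's uniqueness this reverse map coincides with your $\nabla_y\varphi^*(\cdot,z)$, so the two constructions agree; your route has the advantage that the inverse identity $\nabla_y\varphi^*(\nabla_u\varphi(u,z),z)=u$ follows transparently from the subdifferential relation $u\in\partial\varphi^*(\nabla_u\varphi(u,z))$ plus differentiability of $\varphi^*$, whereas the paper's test-function step (the composition pushes $F_U$ onto itself) is a bit elliptical on its own and implicitly relies on the same convex-analytic fact you spell out. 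Your measurability and Fubini bookkeeping in (b)--(c) is also more explicit than what the paper records.
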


\begin{remark}[Conditional Vector  Rank Function]
The mapping  $y \longmapsto Q_{Y\mid Z}^{-1}(y,z)$, which maps $\mathcal{Y}
\subset \mathbb{R}^d$  to $\mathbb{R}^d$, is the conditional rank function.
When $d=1$, it coincides with the conditional distribution function, but
when $d>1$ it does not. The ranking interpretation stems from the fact that
when we set $F_{U} = U(0,1)^d$, vector $Q_{Y|Z}^{-1}\left( Y,Z\right) \in %
\left[ 0,1\right] ^{d}$ measures the centrality of observation $Y$ for each
of the dimensions, conditional on $Z$. \qed
\end{remark}

\bigskip

It is also of interest to state a further implication, which occurs under
(N) and (C), on the link between the transportation map $Q_{Y\mid Z}$ and
its derivatives on one side, and the densities $f_{U}$ and $f_{Y|Z}$ on the
other side. This link is a nonlinear second order partial differential
equation called a (conditional) \emph{Monge-Amp\`{e}re equation}.

\begin{corollary}[\textbf{Conditional Monge-Amp\`ere Equations}]
Assume that conditions (N) and (C) hold and, further, that the map $u
\longmapsto Q_{Y \mid Z} (u,z)$ is continuously differentiable and injective for
each $z \in \mathcal{Z}$. Under this condition, the following \textit{%
conditional forward Monge-Amp\`ere equation} holds for all $(u,z) \in
\mathcal{UZ}$ : {\small
\begin{equation}  \label{monge-ampere1}
f_U(u) = f_{Y \mid Z} (Q_{Y \mid Z} (u,z),z) \text{det} [\mathrm{D}_u Q_{Y
\mid Z} (u,z)] = \int \delta( u- Q^{-1}_{Y \mid Z}(y,z)) f_{Y \mid Z}(y,z)
dy,
\end{equation}%
}\!\! where $\delta$ is the Dirac delta function in $\mathbb{R}^d$ and $%
\mathrm{D}_u= \partial/\partial u^\top$. Reversing the roles of $U$ and $Y$,
we also have the following \textit{conditional backward Monge-Amp\`ere
equation} holds for all $(y,z) \in \mathcal{YZ}$: {\small
\begin{equation}  \label{monge-ampere2}
f_{Y\mid Z}(y,z) = f_U(Q^{-1}_{Y \mid Z} (y,z)) \text{det} [\mathrm{D}_y
Q^{-1}_{Y \mid Z} (y,z)] = \int \delta( y- Q_{Y \mid Z}(u,z)) f_{U}(u) du.
\end{equation}%
}
\end{corollary}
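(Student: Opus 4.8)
The plan is to read each of \eqref{monge-ampere1} and \eqref{monge-ampere2} as two different ways of writing the density of a pushforward measure, and to derive them from the transport characterizations already in hand. Write $T_z(\cdot):=Q_{Y\mid Z}(\cdot,z)$ and $S_z(\cdot):=Q^{-1}_{Y\mid Z}(\cdot,z)$. By Theorem~\ref{Thm: Conditional Brenier}(i), $T_z$ pushes $F_U$ forward onto $F_{Y\mid Z}(\cdot,z)$; by Theorem~\ref{Thm: Inverse}, $S_z$ pushes $F_{Y\mid Z}(\cdot,z)$ forward onto $F_U$ and is, under the present hypotheses, a genuine two-sided inverse of $T_z$ on the relevant supports.

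Next I would record the classical (Jacobian) form. By Remark~\ref{rk:uniqueness}, $\mathrm{D}_u T_z(u)=\mathrm{D}_u Q_{Y\mid Z}(u,z)$ is the Hessian of the convex potential $\varphi(\cdot,z)$, hence symmetric positive semidefinite; under the added hypotheses $T_z$ is moreover $C^1$ and injective. A short area-formula argument then shows its Jacobian is nonsingular $F_U$-almost everywhere: if $\det\mathrm{D}_u T_z$ vanished on a set $S$ with $F_U(S)>0$, the area formula gives $\mathrm{Leb}(T_z(S))=\int_S\det\mathrm{D}_u T_z\,du=0$, while injectivity and the pushforward property give $\P(Y\in T_z(S)\mid Z=z)=F_U\big(\{u:T_z(u)\in T_z(S)\}\big)=F_U(S)>0$, contradicting (C). Hence $\det\mathrm{D}_u Q_{Y\mid Z}(u,z)>0$ $F_U$-a.e., the inverse function theorem gives $S_z\in C^1$ locally with $\mathrm{D}_y S_z=(\mathrm{D}_u T_z)^{-1}$, and the standard formula for the density of the image of an absolutely continuous measure under an injective $C^1$ map yields, for $F_U$-a.e.\ $u$,
\begin{equation*}
f_U(u)=f_{Y\mid Z}\big(Q_{Y\mid Z}(u,z),z\big)\,\det\!\big[\mathrm{D}_u Q_{Y\mid Z}(u,z)\big],
\end{equation*}
which is the first equality of \eqref{monge-ampere1}; inverting (using $\mathrm{D}_y S_z=(\mathrm{D}_u T_z)^{-1}$, hence $\det\mathrm{D}_y S_z=1/\det\mathrm{D}_u T_z$) gives the first equality of \eqref{monge-ampere2}. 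Under the implicit continuity hypotheses on the densities, these a.e.\ identities hold at every point of $\mathcal{UZ}$, resp.\ $\mathcal{YZ}$.

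It remains to match the Dirac-delta expressions on the right, which are formal restatements of ``density of a pushforward'' and need only the transport characterizations. For \eqref{monge-ampere1}, integrating $u\mapsto\int\delta\big(u-Q^{-1}_{Y\mid Z}(y,z)\big)f_{Y\mid Z}(y,z)\,dy$ against a smooth compactly supported $\phi$ collapses the delta to leave, by Theorem~\ref{Thm: Inverse},
\begin{equation*}
\int\phi\big(Q^{-1}_{Y\mid Z}(y,z)\big)f_{Y\mid Z}(y,z)\,dy=\E[\phi(U)\mid Z=z]=\int\phi(u)f_U(u)\,du,
\end{equation*}
so that distribution coincides with $f_U$. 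Symmetrically, testing $y\mapsto\int\delta\big(y-Q_{Y\mid Z}(u,z)\big)f_U(u)\,du$ against $\phi$ and using $Y=Q_{Y\mid Z}(U,Z)$ from Theorem~\ref{Thm: Conditional Brenier}(ii) returns $\E[\phi(Y)\mid Z=z]=\int\phi(y)f_{Y\mid Z}(y,z)\,dy$, the second equality of \eqref{monge-ampere2}.

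The step I expect to be the main obstacle is the nonvanishing of the Jacobian determinant of $Q_{Y\mid Z}(\cdot,z)$: without it the change of variables only produces the density with a modulus $|\det[\cdot]|$, the clean (modulus-free) Monge-Amp\`ere form could fail, and $S_z$ need not be differentiable. I plan to settle it exactly as above --- combining injectivity of $T_z$ with the absolute continuity of $F_{Y\mid Z}(\cdot,z)$ from (C) via the area formula --- after which the Jacobian identities and the $\delta$-function bookkeeping are routine.
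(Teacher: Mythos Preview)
Your argument is correct. The paper does not give a separate proof of this corollary; it is stated as an immediate consequence of Theorems~\ref{Thm: Conditional Brenier} and~\ref{Thm: Inverse} together with the classical change-of-variables/Monge--Amp\`ere identity for pushforward densities, and the paper moves on without further justification. Your write-up therefore supplies more than the paper does: the area-formula argument ruling out $\det\mathrm{D}_u Q_{Y\mid Z}(u,z)=0$ on a set of positive $F_U$-measure, and the distributional reading of the $\delta$-integrals via test functions, are exactly the details one would want and are carried out correctly. The observation that $\mathrm{D}_u Q_{Y\mid Z}(u,z)$ is the Hessian of a convex potential (hence positive semidefinite) is what lets you drop the absolute value on the determinant once you have established nonsingularity.
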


%\begin{OP} To show that $(u,z) \longmapsto Q_{Y \mid Z} (u,z)$ and the inverse map are smooth in both arguments under some %conditions.  Smoothness  of $u \longmapsto Q_{Y \mid Z} (u,z)$ follows from Cancelli's regularity theory.
%\end{OP}

The latter expression is useful for linking the conditional density function
to the conditional vector quantile function. Equations (\ref{monge-ampere1})
and (\ref{monge-ampere2}) are \textit{partial differential equations} of the
Monge-Amp\`ere type, carrying an additional index $z \in \mathcal{Z}$. These
equations could be used directly to solve for conditional vector quantiles
given conditional densities. We can also use them to set up maximum
likelihood  method for recovering conditional vector quantiles.
In the next section we describe a variational
approach to recovering conditional vector quantiles.

\subsection{Conditional Vector Quantiles as Optimal Transport}

Under additional moment assumptions, the CVQF can be characterized and even
defined as solutions to a regression version of the
Monge-Kantorovich-Brenier's optimal transportation problem or, equivalently,
a conditional correlation maximization problem.

We assume that the following conditions hold:

\begin{itemize}
\item[(M)] The second moment of $Y$ and the second moment of $U$ are finite:%
\begin{equation*}
\int \int \|y\|^2 F_{YZ}(dy, d z) < \infty \text{ and } \int \|u\|^2 F_U(du)
< \infty.
\end{equation*}
\end{itemize}

We consider the following optimal transportation problem with conditional
independence constraints:
\begin{equation}  \label{transport1}
\min_V \{ {\mathrm{E}} \|Y-V\|^2 : V \mid Z \sim F_U \},
\end{equation}
where the minimum is taken over all random vectors $V$ defined on the
probability space $(\Omega, \mathcal{F}, \P )$. Note that the value of
objective is the Wasserstein distance between $Y$ and $V$ subject to $V \mid
Z\sim F_U$. Under condition (M) we will see that a \textit{solution} exists
and is given by $V=U$ constructed in the previous section.

The problem (\ref{transport1}) is the conditional version of the classical
Monge - Kantorovich problem with Brenier's quadratic costs, which was solved
by Brenier in considerable generality in the unconditional case. In the
unconditional case, the canonical Monge problem is to transport a pile of
coal with mass distributed across production locations from $F_{U}$ into a
pile of coal with mass distributed across consumption locations from $F_{Y}$%
, and it can be rewritten in terms of random variables $V$ and $Y$. We are
seeking to match $Y$ with a version of $V$ that is closest in mean squared
sense subject to $V$ having a prescribed distribution. Our conditional
version above (\ref{transport1}) imposes the additional conditional
independence constraint $V\mid Z\sim F_{U}$.

The problem above is equivalent to covariance maximization problem subject
to the prescribed conditional independence and distribution constraints:
\begin{equation}
\max_{V}\{{\mathrm{E}}(V^{\top }Y):\ \ V\mid Z\sim F_{U}\},  \label{maxcorr}
\end{equation}%
where the maximum is taken over all random vectors $V$ defined on the
probability space $(\Omega ,\mathcal{F},\P )$. This type of problem will be
convenient for us, as it most directly connects to convex analysis and leads
to a convenient dual program. This form also connects to unconditional
multivariate quantile maps defined in \cite{egh}, who employed them for
purposes of risk analysis; our definition given in the previous section is
more satisfactory, because it does not require any moment conditions, as
follows from the results of \cite{McCann1985}.

%\begin{comment}
%Fol. editor's suggestion, the following footnote was removed:
%In the unconditional context, the previous section is noteworthy. Indeed, as
%in the seminal work of McCann (1995), it leads to a more general definition
%of vector/multivariate quantiles in that it does not require moment
%assumptions.
%\end{comment}

The dual program to (\ref{maxcorr}) can be stated as:
\begin{equation}  \label{dualmaxcorr}
\begin{array}{lll}
\min_{(\psi, \varphi)} {\mathrm{E}}(\varphi(V,Z) + \psi(Y,Z)) & :
\;\varphi(u, z) + \psi(y,z) \ge u^\top y &  \\
& \ \text{ for all} \ (z, y, u) \in \mathcal{Z}\times\mathbb{R}^{2d}, &
\end{array}%
\end{equation}
where $V$ is any vector such that $V \mid Z \sim F_U$, and minimization is
performed over Borel maps $(y,z) \longmapsto \psi(y,z)$ from $\mathcal{Z}\times%
\mathbb{R}^{d}$ to $\mathbb{R} \cup \{+ \infty\}$ and $(u,z) \longmapsto
\varphi(z,u)$ from $\mathcal{Z}\times\mathbb{R}^{d}$ to $\mathbb{R} \cup \{+
\infty\}$, where $y \longmapsto \psi(y,z)$ and $u \longmapsto \varphi(u,z)$ are
lower-semicontinuous for each value $z \in \mathcal{Z}$.

\begin{theorem}[\textbf{Conditional Vector Quantiles as Optimal Transport}]
\label{thm: variational characterization} Suppose conditions (N), (C), and
(M) hold.

(i) There exists a pair of maps $(u,z) \longmapsto \varphi (u,z) $ and $(y,z)
\longmapsto \psi(y,z)=\varphi^*(y,z)$, each mapping from $\mathbb{R}^{d}\times
\mathcal{Z}$ to $\mathbb{R}$, that solve the problem (\ref{dualmaxcorr}).
For each $z \in \mathcal{Z}$, the maps $u \longmapsto \varphi (u,z) $ and $y
\longmapsto \varphi^*(y,z)$ are convex and are Legendre transforms of each
other:
\begin{equation*}
\varphi(u,z) = \sup_{y \in \mathbb{R}^d} \{ u^\top y - \varphi^*(y,z) \},
\quad \varphi^*(y,z) = \sup_{u \in \mathbb{R}^d } \{ u^\top y - \varphi(u,z)
\},
\end{equation*}
for all $(u,z) \in \mathcal{UZ}$ and $(y,z) \in \mathcal{YZ}$.

(iii) We can take the gradient $(u,z) \longmapsto \nabla_u \varphi(u,z)$ of $%
(u,z) \longmapsto \varphi(u,z)$ as the conditional vector quantile function,
namely, for each $z \in \mathcal{Z}$, $Q_{Y\mid Z}(u,z) = \nabla_u
\varphi(u,z)$ for almost every value $u$ under $F_U$.

(iv) We can take the gradient $(y,z) \longmapsto \nabla_y \varphi^*(y,z)$ of $%
(y,z) \longmapsto \varphi^*(y,z)$ as the conditional inverse vector quantile
function or conditional vector rank function, namely, for each $z \in \mathcal{Z}$,
$Q^{-1}_{Y\mid Z}(y,z) = \nabla_y \varphi^*(z,y)$ for almost every value $y$
under $F_{Y \mid Z}(\cdot,z)$.

(v) The vector $U = Q^{-1}_{Y\mid Z}(Y,Z)$ is a solution to the primal
problem (\ref{maxcorr}) and is unique in the sense that any other solution $%
U^*$ obeys $U^* = U$ almost surely under $\P $. The primal (\ref{maxcorr})
and dual (\ref{dualmaxcorr}) have the same value.

(vi) The maps $u\longmapsto \nabla _{u}\varphi (u,z)$ and $y\longmapsto \nabla
\varphi _{y}^{\ast }(y,z)$ are inverses of each other: for each $z\in
\mathcal{Z}$, and for almost every $u$ under $F_{U}$ and almost every $y$
under $F_{Y\mid Z}(\cdot ,z)$
\begin{equation*}
\nabla _{y}\varphi ^{\ast }(\nabla _{u}\varphi (u,z),z)=u,\quad \nabla
_{u}\varphi (\nabla _{y}\varphi ^{\ast }(y,z),z)=y.
\end{equation*}
\end{theorem}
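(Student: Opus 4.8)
The strategy is to disintegrate the constrained problem (\ref{maxcorr}) along the conditioning variable, solve each fibre with the classical Brenier--McCann--Rockafellar theory, and then reassemble the fibrewise solutions in a measurable way. First I would fix $z\in\mathcal Z$ and observe that, once we only look at the conditional law of $(V,Y)$ given $Z=z$, the constraint $V\mid Z\sim F_U$ simply requires $V$ to have marginal $F_U$ while $Y$ has marginal $F_{Y\mid Z}(\cdot,z)$. Condition (M) guarantees finite second moments on both sides, so the fibrewise correlation-maximization problem $\max\{\E(V^\top Y\mid Z=z):V\sim F_U\}$ is exactly the quadratic Monge--Kantorovich problem between $F_U$ and $F_{Y\mid Z}(\cdot,z)$. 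Kantorovich duality together with Brenier/McCann then furnishes a convex potential $u\longmapsto\varphi(u,z)$ whose Legendre transform $y\longmapsto\varphi^*(y,z)$ solves the fibrewise dual, with $T_z:=\nabla_u\varphi(\cdot,z)$ the ($F_U$-a.e.) unique optimal map and $(T_z)_\#F_U=F_{Y\mid Z}(\cdot,z)$. Under (C) the conditional density $f_{Y\mid Z}(\cdot,z)$ exists, so $S_z:=\nabla_y\varphi^*(\cdot,z)$ is likewise the optimal map in the reverse direction and $S_z\circ T_z=\mathrm{id}$ $F_U$-a.e., $T_z\circ S_z=\mathrm{id}$ $F_{Y\mid Z}(\cdot,z)$-a.e. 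By the uniqueness clauses in Theorems~\ref{Thm: Conditional Brenier} and~\ref{Thm: Inverse}, $T_z=Q_{Y\mid Z}(\cdot,z)$ and $S_z=Q^{-1}_{Y\mid Z}(\cdot,z)$. This already delivers parts (iii), (iv), (vi) and the convex-conjugate structure in (i), fibre by fibre.

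Next I would upgrade this to joint measurability in $(u,z)$ and $(y,z)$. The natural device is to pin down the potentials by a normalization (for instance $\varphi^*(y_0,z)=0$ at a fixed reference point $y_0$) and then invoke a measurable-selection theorem for the resulting family of convex functions, using the explicit $c$-transform formulas $\varphi(u,z)=\sup_{y}\{u^\top y-\varphi^*(y,z)\}$, $\varphi^*(y,z)=\sup_{u}\{u^\top y-\varphi(u,z)\}$ to transfer measurability between the two. Gradients of a jointly measurable family of convex functions are again jointly measurable, so $(u,z)\longmapsto\nabla_u\varphi(u,z)$ and $(y,z)\longmapsto\nabla_y\varphi^*(y,z)$ are measurable maps, and by the previous paragraph they coincide $F_U$-a.e., resp. $F_{Y\mid Z}(\cdot,z)$-a.e., with the CVQF and its inverse from Sections~2.1--2.2. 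Condition (M) supplies the integrability needed to make $\E\varphi(V,Z)$ and $\E\varphi^*(Y,Z)$ well defined.

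Then I would establish strong duality and simultaneous optimality. Weak duality is immediate: for any dual-feasible $(\varphi,\psi)$ and any $V$ with $V\mid Z\sim F_U$ one has $\E(V^\top Y)\le\E\bigl(\varphi(V,Z)+\psi(Y,Z)\bigr)$, and $\E\varphi(V,Z)=\E\!\int\varphi(u,Z)\,F_U(du)$ depends only on the conditional law $F_U$; hence the value of (\ref{maxcorr}) is at most the value of (\ref{dualmaxcorr}). (One should also record the routine fact that restricting (\ref{dualmaxcorr}) to conjugate convex pairs $(\varphi,\varphi^*)$ loses nothing, since double $c$-transformation only decreases the objective while preserving feasibility.) For attainment, take $U=Q^{-1}_{Y\mid Z}(Y,Z)$. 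By Theorem~\ref{Thm: Inverse}, $U\mid Z\sim F_U$ and $Y=Q_{Y\mid Z}(U,Z)$ $\P$-a.s., so conditionally on $Z=z$ the pair $(U,Y)$ is precisely the Brenier coupling of $F_U$ and $F_{Y\mid Z}(\cdot,z)$; the Brenier identity $\varphi(u,z)+\varphi^*(T_z u,z)=u^\top T_z u$ then gives $\E(U^\top Y\mid Z)=\E\bigl(\varphi(U,Z)+\varphi^*(Y,Z)\mid Z\bigr)$ a.s. Taking expectations closes the duality gap and shows that $U$ solves (\ref{maxcorr}) and $(\varphi,\varphi^*)$ solves (\ref{dualmaxcorr}), which is the substance of parts (i) and (v).

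Finally, uniqueness of $U$ in (v): if $U^*$ is any other primal optimizer, then for $F_Z$-a.e. $z$ its conditional law given $Z=z$ is an optimal coupling of $F_U$ and $F_{Y\mid Z}(\cdot,z)$; because $F_{Y\mid Z}(\cdot,z)$ has a density (condition (C)), Brenier's uniqueness theorem forces this optimal coupling to be supported on the graph of $S_z=\nabla_y\varphi^*(\cdot,z)$, so $U^*=Q^{-1}_{Y\mid Z}(Y,Z)=U$ a.s. given $Z=z$; integrating over $z$ gives $U^*=U$ $\P$-a.s. The step I expect to be the main obstacle is the measurable-selection argument in the second paragraph: one has to check that the fibrewise convex potentials (which are only determined up to additive constants and on negligible sets) can be chosen jointly measurable in $(u,z)$, and that the induced coupling of $(U,Y,Z)$ is a legitimate random vector on the enriched probability space $(\Omega,\mathcal A,\P)$. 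Everything else is either a direct citation of the unconditional Brenier--McCann theory applied fibrewise or a soft duality argument.
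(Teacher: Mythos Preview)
Your proposal is correct and follows essentially the same approach as the paper, which proves the theorem in a single line by citing Villani's Theorem~2.12 (the unconditional Brenier--Kantorovich theory) applied for each fixed $z\in\mathcal Z$. You are simply spelling out in detail what that citation entails---the fibrewise disintegration, the identification of $\nabla_u\varphi(\cdot,z)$ and $\nabla_y\varphi^*(\cdot,z)$ with the maps from Theorems~\ref{Thm: Conditional Brenier} and~\ref{Thm: Inverse}, the closing of the duality gap via the Brenier identity, and the measurable-selection issue (which the paper handles separately, in the same spirit, in its Appendix on measurability for Theorem~\ref{Thm: Conditional Brenier}).
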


\bigskip %\begin{comment}
%Inserted:
%\end{comment}

\begin{remark}
\label{rk:optimality}There are many maps $Q:\mathcal{UZ}\rightarrow \mathcal{%
Y}$ such that if $V\sim F_{U}$, then $Q\left( V,z\right) \sim F_{Y|Z=z}$.
Any of these maps define a \emph{transport} from $F_{U}$ to $F_{Y|Z=z}$. Our
choice is to take the \emph{optimal transport}, in the sense that it
minimizes the Wasserstein distance $\mathrm{E} \| Q\left( V,Z\right) - V \|^2
$ among such maps. This has several benefits: (i) the optimal transport is
unique as soon as $F_{U}$ is absolutely continuous, as noted in Remark \ref%
{rk:uniqueness} and (ii) this object is easily computable through a linear
programming problem. Note that the classical, scalar quantile map is the
optimal transport from $F_{U}$ to $F_{Y}$ in this sense, so oue notion
indeed extends the classical notion of a quantile. \qed
\end{remark}

\begin{remark}
\label{rk:locality}Unlike in the scalar case, we cannot compute $
Q_{Y|Z}\left( u,z\right)$ at a given point $u$ without computing the whole
map $u\rightarrow Q_{Y|Z}\left( u,z\right) $. This highlights the fact that
CVQF is not a local concept with respect to values of the rank $u$. \qed \end{remark}

Theorem~\ref{thm: variational characterization} provides a number of
analytical properties, formalizing the variational interpretation of
conditional vector quantiles, providing the potential functions $
(u,z)\longmapsto \varphi (u,z)$ and $(y,z)\longmapsto \varphi ^{\ast }(y,z)$, which
are mutual Legendre transforms, and whose gradients are the conditional
vector quantile functions and its inverse, the conditional vector rank function. This problem is a conditional
generalization of the fundamental results by Brenier as presented in \cite%
{Villani03}, Theorem 2.12.

\begin{example}[\textbf{Conditional Normal Vector Quantiles}]
Here we consider the normal conditional vector quantiles. Consider the case
where
\begin{equation*}
Y\mid Z\sim N(\mu (Z),\Omega (Z)).
\end{equation*}%
Here $z\longmapsto \mu (z)$ is the conditional mean function and $z\longmapsto
\Omega (z)$ is a conditional variance function such that $\Omega (z)>0$ (in
the sense of positive definite matrices) for each $z\in \mathcal{Z}$ with ${%
\mathrm{E}}\Vert \Omega (Z)\Vert +{\mathrm{E}}\Vert \mu (Z)\Vert ^{2}<\infty
$. The reference distribution is given by $U\mid Z\sim N(0,I)$. Then we have
the following conditional vector quantile model:
\begin{eqnarray*}
&&Y=\mu (Z)+\Omega ^{1/2}(Z)U, \\
&&U=\Omega ^{-1/2}(Z)(Y-\mu (Z)).
\end{eqnarray*}%
Here we have the following conditional potential functions
\begin{eqnarray*}
&&\varphi (u,z)=\mu (z)^{\top }u+\frac{1}{2}u^{\top }\Omega ^{1/2}(z)u, \\
&&\psi (y,z)=\frac{1}{2}(y-\mu (z))^{\top }\Omega ^{-1/2}(z)(y-\mu (z)),
\end{eqnarray*}%
and the following conditional vector quantile and rank functions:
\begin{eqnarray*}
&&Q_{Y\mid Z}(u,z)=\nabla _{u}\varphi (u,z)=\mu (z)+\Omega ^{1/2}(z)u, \\
&&Q_{Y\mid Z}^{-1}(y,z)=\nabla _{y}\psi (y,z)=\Omega ^{-1/2}(z)(y-\mu (z)).
\end{eqnarray*}%
It follows from Theorem~\ref{thm: variational characterization} that $V=U$
solves the covariance maximization problem (\ref{maxcorr}). This example is
special in the sense that the conditional vector quantile and rank functions
are \textit{linear in }$u$ and $y$, respectively. \qed
\end{example}

\subsection{Interpretations of vector rank $U$}

\label{rk:latentFactor} We can provide the following interpretations of $U$:

\bigskip

\textbf{1) As multivariate rank}. An interesting interpretation of $U$ is as
a multivariate rank. In the univariate case, \cite{Koenker2005}, Ch. 1.3 and
3.5, interprets $U$ as a continuous notion of rank in the setting of
quantile regression. The rank has a reference distribution $F_{U}$, which is
typically chosen to be uniform on $(0,1)$, but other reference distributions
could be used as well. The concept of vector quantile allows us to assign a
continuous rank to each of the dimensions, and the vector quantile mapping
is monotone with respect to the rank in the sense of being the gradient of a
convex function. As a result, $U$ can be interpreted as a \emph{multivariate
rank} for $Y$, as we are trying to map the distribution of $U$ to a
prescribed distribution $F_{Y}$ at minimal distortion, as seen in (\ref%
{transport1}). \newline

\textbf{2) As a reference outcome for defining quantile treatment effects}.
Another motivation is related to the classical definition of quantile
treatment effects introduced by \cite{lehman}, and further developed by \cite%
{doksum}, \cite{Koenker2005}, and others. Suppose we define $U$ as an
outcome for an untreated population; for this we simply set the reference
distribution $F_{U}$ to the distribution of outcome in the untreated
population. Suppose $Z$ is the indicator of the receiving a treatment ($Z=0$
means no treatment). Then we can represent outcome $Y=Q_{Y\mid Z}\left(
U,Z\right) $ as the multivariate health outcome conditional on $Z$. If $Z=0$%
, then the outcome is distributed as $Q_{Y\mid Z}\left( U,0\right) =U$. If $%
Z=1$, then the outcome is distributed as $Q_{Y\mid Z}\left( U,1\right) $.
The corresponding notion of vector quantile treatment effects is $Q_{Y\mid
Z}(u,1)-Q_{Y\mid Z}(u,0).$

\bigskip

\textbf{3) As nonlinear latent factors}. As it is apparent in the
variational formulation~(\ref{transport1}), the entries of $U$ can also be
thought as latent factors, independent of each other and explanatory
variables $Z$ and having a prescribed marginal distribution $F_{U}$, and
that best explain the variation in $Y$. Therefore, the conditional vector
quantile model (\ref{strong1}) provides a non-linear latent factor model for
$Y$ with factors $U$ solving the matching problem (\ref{transport1}). This
interpretation suggests that this model may be useful in applications which
require measurement of multidimensional unobserved factors, for example,
cognitive ability, persistence, and various other latent propensities; see,
for example, \cite{JJH}.

\subsection{Overview of Other Notions of Multivariate Quantile\label%
{par:discussionOthers}}

We briefly review other notions of multivariate quantiles in the statistical
literature. We highlight the main contrasts with the notion we are using,
based on optimal transport. For the sake of clarity of exposition, we
discuss the unconditional case; albeit the comparisons extend naturally to
the regression case.

In \cite{chaudhuri}, the following definition of multivariate quantile
function is suggested: for $u\in \mathbb{R}^{d}$, let%
\begin{equation*}
Q_{Y}^{C}\left( u\right) =\arg \max_{y\in \mathbb{R}^{d}}{\mathrm{E}}\left[
y^{\top }u-\Vert y-Y\Vert \right]
\end{equation*}%
which coincides with the classical notion when $d=1$. See also \cite%
{serfling}. More generally, \cite{Koltchinskii} offers the following
definition based on M-estimators, still for $u\in \mathbb{R}^{d}$,%
\begin{equation*}
Q_{Y}^{K}\left( u\right) =\arg \max_{y\in \mathbb{R}^{d}}{\mathrm{E}}\left[
y^{\top }u-K\left( y,Y\right) \right]
\end{equation*}%
for a choice of kernel $K$ assumed to be convex with respect to its first
argument. Like our proposal, these notions of quantile maps are gradients of
convex potentials. However, unlike our proposal, these notions do not
provide a transport from a fixed distribution over values of $u$ to the
distribution $F_{Y}$ of $Y$ as soon as $d>1$.

In \cite{Wei}, a notion of quantile based on the Rosenblatt map is investigated. In the case $%
d=2$, this quantile is defined for $u\in \left[ 0,1\right] ^{2}$ as%
\begin{equation*}
Q_{Y}^{R}\left( u_{1},u_{2}\right) =\left( Q_{Y_{1}}\left( u_{1}\right)
,Q_{Y_{2}\mid Y_{1}}\left( u_{2}\mid Q_{Y_{1}}\left( u_{1}\right) \right)
\right)
\end{equation*}%
where $Q_{Y_{1}}$ and $Q_{Y_{1}\mid Y_{2}}$ are the univariate and the
conditional univariate quantile map. This map is a transport of the
distribution of $\mathcal{U}(0,1)^{2}$; however, in this definition, $Y_{1}$
and $Y_{2}$ play sharply assymetric roles, as the second dimension is
defined conditional on the first one. Unlike ours, this quantile map is not
a gradient of convex function.

In \cite{HPS}, the authors specify a vector of latent indices $u\in {B}^{d}$ the unit ball
of $\mathbb{R}^{d}$. For $u\in {B}^{d}$, they define multivariate quantiles
as
\begin{equation*}
Q_{Y}^{HPS}\left( u\right) =\left\{ y\in \mathbb{R}^{d}:c^{\top }y=a\right\}
,
\end{equation*}%
where $a\in \mathbb{R}$ and $c\in \mathbb{R}^{d}$ minimize $\mathrm{E}\rho
_{\Vert u\Vert }\left( c^{\top }Y-a\right) $ subject to constraint $c^{\top
}u=\Vert u\Vert $. In contrast to ours, their notion of quantile is a
set-valued. A closely related construction is provided by \cite{KongMizera}
who define the directional quantile associated to the index $u\in \mathbf{B}%
^{d}$ via:
\begin{equation*}
Q_{Y}^{KM}\left( u\right) =Q_{u^{\top }Y/\Vert u\Vert }\left( \Vert u\Vert
\right) u/\Vert u\Vert ,
\end{equation*}%
where $Q_{u^{\top }Y/\Vert u\Vert }$ is the univariate quantile function of
the random variable $u^{\top }Y/\Vert u\Vert $. We can provide a transport
interpretation to this notion of quantiles, but unlike our proposal this map
is not a gradient of convex function.

%This notion of quantiles is rather different from the transport based notion
%of multivariate quantiles in our proposal.
%Unlike our proposal, such a map is not a transport of a fixed distribution
%of $u$ to the distribution of $Y$, the random vector of interest.

A notion of quantile based on a partial
order $\preceq $ on $\mathbb{R}^{d}$ is proposed in \cite{Belloni:Wrinkler}. For an index $u\in (0,1)$, these authors define%
\begin{equation*}
Q_{Y}^{BW}\left( u\right) =\left\{ y\in \mathbb{R}^{d}:\Pr \left( Y\succeq
y\mid C\left( y\right) \right) \geq 1-u,\Pr \left( Y\preceq y\mid C\left(
y\right) \right) \geq u\right\}
\end{equation*}%
where $C\left( y\right) =\left\{ y^{\prime }\in \mathbb{R}^{d}:y\succeq
y^{\prime }\text{ or }y^{\prime }\succeq y\right\} $ is the set of elements
that can be ordered by $\succeq $ relative to the point $y$. Unlike our
proposal, the index $u$ is scalar and the quantile is set-valued.

\section{Vector Quantile Regression}

\subsection{Linear Formulation}

Here we let $X=f(Z)$ denote a vector of regressors formed as transformations
of $Z$, such that the first component of $X$ is 1 (intercept term in the
model) and such that conditioning on $X$ is equivalent to conditioning on $Z$%
. The dimension of $X$ is denoted by $p$ and we shall denote $X=(1,X_{-1}^\top)^\top$
with $X_{-1}\in {\mathbb{R}}^{p-1}$.

In practice, $X$ would often consist of a constant and some polynomial or
spline transformations of $Z$ as well as their interactions. Note that
conditioning on $X$ is equivalent to conditioning on $Z$ if, for example, a
component of $X$ contains a one-to-one transform of $Z$.

Denote by $F_X$ the distribution function of $X$ and $F_{UX} = F_U F_X$. Let
$\mathcal{X}$ denote the support of $F_X$ and $\mathcal{UX}$ the support of $%
F_{UX}$. We define \textit{linear vector quantile regression model} (VQRM)
as the following linear model of CVQF.

\begin{itemize}
\item[(L)] The following linearity condition holds:
\begin{equation*}
Y=Q_{Y\mid X}(U,X)=\beta _{0}(U)^{\top }X,\ \ U\mid X\sim F_{U},
\end{equation*}%
where $u\longmapsto \beta _{0}(u)$ is a map from $\mathcal{U}$ to the set $%
\mathcal{M}_{p\times d}$ of $p\times d$ matrices such that $u\longmapsto \beta
_{0}(u)^{\top }x$ is a monotone, smooth map, in the sense of being a
gradient of a convex function:
\begin{equation*}
\beta _{0}(u)^{\top }x=\nabla _{u}\Phi _{x}(u),\ \ \Phi
_{x}(u):=B_{0}(u)^{\top }x,\text{ for all }(u,x)\in \mathcal{UX},
\end{equation*}%
where $u\longmapsto B_{0}(u)$ is $C^{1}$ map from $\mathcal{U}$ to $\mathbb{R}%
^{d}$, and $u\longmapsto B_{0}(u)^{\top }x$ is a strictly convex map from $%
\mathcal{U}$ to $\mathbb{R}$.
\end{itemize}

The parameter $\beta (u)$ is indexed by the quantile index $u\in \mathcal{U}$
and is a $d\times p$ \textit{matrix} of quantile regression coefficients. Of
course in the scalar case, when $d=1$, this matrix reduces to a vector of
quantile regression coefficients. This model is a natural analog of the
classical QR for scalar $Y$ where the similar regression representation
holds. One example where condition (L) holds is Example 2.1, describing the
conditional normal vector regression. It is of interest to specify other
examples where condition (L) holds or provides a plausible approximation.

\begin{example}[Saturated Specification]
\label{example:S} The regressors $X=f(Z)$ with ${\mathrm{E}} \|f(Z)\|^2 <
\infty$ are \textit{saturated} with respect to $Z$, if, for any $g \in
L^2(F_Z)$ , we have $g(Z) = X^\top \alpha_g$.
%for some $\|\alpha_g\| < \infty$%
In this case the linear functional form (L) is not a restriction. For $%
p<\infty$ this can occur if and only if $Z$ takes on a finite set of values $%
\mathcal{Z} = \{z_1,\dots, z_p\}$, in which case we can write:
\begin{equation*}
Q_{Y|X}(u,X) = \sum_{j=1}^p Q_{Y \mid Z}(u,z_j) 1(Z=z_j) =: B_0(u)^\top X,
\end{equation*}%
\begin{equation*}
B_0(u) := \left(
\begin{array}{c}
Q_{Y \mid Z}(u,z_1)^\top \\
\vdots \\
Q_{Y \mid Z}(u,z_p)^\top \\
\end{array}
\right), \quad X:= \left(
\begin{array}{c}
1(Z=z_1) \\
\vdots \\
1(Z=z_{p}) \\
\end{array}
\right).
\end{equation*}
Here the problem is equivalent to considering $p$ unconditional vector
quantiles in populations corresponding to $Z=z_1, \dots, Z=z_p$. \qed
\end{example}

The rationale for using linear forms is two-fold -- one is convenience of
estimation and representation of functions and another one is approximation
property. We can approximate a smooth convex potential by a smooth linear
potential, as the following example illustrates for a particular
approximation method.

\begin{example}[Linear Approximation]
Let $(u,z) \longmapsto \varphi(u,z)$ be of class $C^a$ with $a>1$ on the support
$(u,z) \in \mathcal{UZ} = [0,1]^{d+k}$. Consider a trigonometric tensor
product basis of functions $\{(u,z) \longmapsto q_j(u) f_l(z), j \in\mathbb{N},
l \in \mathbb{N}\}$ in $L^2[0,1]^{d+k}$. Then there exists a $JL$ vector $%
(\gamma_{jl}: j \in \{1,...,J\},l\in \{1,...,L\})$ such that the linear map:
\begin{equation*}
(u,z) \longmapsto \Phi^{JL}(u,z) := \sum_{j =1}^J \sum_{l =1}^L \gamma_{jl}
q_j(u) f_l(z) =: B^L_0(u)^\top f^L(z),
\end{equation*}
where $B^L_0(u) = ( \sum_{j =1}^J \gamma_{jl} q_j(u), l\in \{1,...,L\})$ and
$f^L(z)= ( f_l(z), l\in \{1,...,L\})$, provides uniformly consistent
approximation of the potential and its derivative:
\begin{equation*}
\lim_{J,L \to \infty }\sup_{(u,z) \in \mathcal{UZ}} \Big ( | \varphi(u,z) -
\Phi^{JL} (u,z) | + \| \nabla_u \varphi(u,z) - \nabla_u \Phi^{JL} (u,z) \| %
\Big ) = 0. \quad \qed
\end{equation*}
\end{example}

The approximation property via the sieve-type approach provides a rationale for the linear (in parameters) specification
(\ref{eq: linear factor}).   Another approach, based on local polynomial approximations over a collection
of (increasingly smaller) neighborhoods, also provides a useful rationale for the linear (in parameters) specification,
e.g., similarly in spirit to  \cite{YuJones}.   If the linear specification does not hold \textit{exactly} we say
that the model is \textit{misspecified}. If the model is flexible enough, by using a suitable
basis or localization,
then the approximation error is small, and we effectively ignore the error
when assuming (\ref{eq: linear factor}). However, when constructing a
sensible estimator we must allow the possibility that the model is
misspecified, which means we can't really force (\ref{eq: linear factor})
onto data. Our proposal for estimation presented next does not force (\ref%
{eq: linear factor}) onto data, but if (\ref{eq: linear factor}) is true in
population, then as a result, the true conditional vector quantile function
would be recovered perfectly in population.

\subsection{Linear Program for VQR}

Our approach to multivariate quantile regression is based on the
multivariate extension of the covariance maximization problem with a mean
independence constraint:
\begin{equation}  \label{maxcorrmimd}
\max_V \{ {\mathrm{E}}(V^\top Y): \; V\sim F_U, \; {\mathrm{E}}(X \mid V)={%
\mathrm{E}}(X) \}.
\end{equation}

Note that the constraint condition is a relaxed form of the previous
independence condition.

\begin{remark}
The new condition $V\sim F_U, \; {\mathrm{E}}(X\mid V)={\mathrm{E}}(X)$ is
weaker than $V \mid X \sim F_U$, but the two conditions coincide if $X$ is
saturated relative to $Z$, as in Example \ref{example:S}, in which case ${%
\mathrm{E}} (g(Z) V) = {\mathrm{E}} X^{\top }\alpha_g V = {\mathrm{E}}%
(X^{\top }\alpha_g) {\mathrm{E}}( V) = {\mathrm{E}} g(Z) {\mathrm{E}} V$ for
every $g \in L^2(F_Z)$. More generally, this example suggests that the
richer $X$ is, the closer the mean independence condition becomes to the
conditional independence. \qed
\end{remark}

The relaxed condition is sufficient to guarantee that the solution exists
not only when (L) holds, but more generally when the following quasi-linear
assumption holds.

\begin{itemize}
\item[(QL)] We have a quasi-linear representation a.s.
\begin{equation*}
Y=\beta (\widetilde{U})^{\top }X,\ \ \widetilde{U}\sim F_{U},\ \ {\mathrm{E}}%
(X\mid \widetilde{U})={\mathrm{E}}(X),
\end{equation*}%
where $u\longmapsto \beta (u)$ is a map from $\mathcal{U}$ to the set $\mathcal{M%
}_{p\times d}$ of $p\times d$ matrices such that $u\longmapsto \beta (u)^{\top
}x $ is a gradient of convex function for each $x\in \mathcal{X}$ and a.e. $%
u\in \mathcal{U}$:
\begin{equation*}
\beta (u)^{\top }x=\nabla _{u}\Phi _{x}(u),\ \ \Phi _{x}(u):=B(u)^{\top }x,
\end{equation*}%
where $u\longmapsto B(u)$ is $C^{1}$ map from $\mathcal{U}$ to $\mathbb{R}^{d}$,
and $u\longmapsto B(u)^{\top }x$ is a strictly convex map from $\mathcal{U}$ to $%
\mathbb{R}$.
\end{itemize}

This condition allows for a degree of misspecification, which allows for a
latent factor representation where the latent factor obeys the relaxed
independence constraints.

\begin{theorem}
\label{theorem: linear} Suppose conditions (M), (N) , (C), and (QL) hold.

(i) The random vector $\widetilde U$ entering the quasi-linear
representation (QL) solves (\ref{maxcorrmimd}).

(ii) The quasi-linear representation is unique a.s. that is if we also have $%
Y= \overline{\beta}(\overline{U})^\top X$ with $\overline{U} \sim F_U, {%
\mathrm{E}} (X \mid \overline{U}) = {\mathrm{E}} X$, $u \longmapsto X^\top
\overline{\beta}(u)$ is a gradient of a strictly convex function in $u \in
\mathcal{U}$ a.s., then $\bar U= \tilde U$ and $X^\top \beta(\tilde
U)=X^\top \overline{\beta}(\tilde U)$ a.s.

(iii) Under condition (L) and assuming that ${\mathrm{E}} (X X^\top)$ has
full rank, $\widetilde U =U$ a.s. and $U$ solves (\ref{maxcorrmimd}).
Moreover, $\beta_0(U) = \beta(U)$ a.s.
\end{theorem}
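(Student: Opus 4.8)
The plan is to build everything on the first-order (supporting hyperplane) inequality for the convex potentials in (QL) and (L), in the same spirit as the unconditional Brenier argument. For part~(i), I would fix an arbitrary feasible $V$ (so $V\sim F_{U}$ and ${\mathrm E}(X\mid V)={\mathrm E}(X)$) and write the convexity inequality for $u\longmapsto\Phi_{x}(u)=B(u)^{\top}x$ at the base point $\widetilde U$, evaluated at $V$, with $x=X$: since $\nabla_{u}\Phi_{X}(\widetilde U)=\beta(\widetilde U)^{\top}X=Y$, it reads $\Phi_{X}(V)\ge\Phi_{X}(\widetilde U)+Y^{\top}(V-\widetilde U)$. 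Taking expectations gives ${\mathrm E}(Y^{\top}V)\le{\mathrm E}(Y^{\top}\widetilde U)+{\mathrm E}\Phi_{X}(V)-{\mathrm E}\Phi_{X}(\widetilde U)$, and the crucial point is that the two potential terms cancel: by the tower property and the mean-independence constraint, ${\mathrm E}\Phi_{X}(V)={\mathrm E}\big(B(V)^{\top}{\mathrm E}(X\mid V)\big)={\mathrm E}\big(B(V)\big)^{\top}{\mathrm E}(X)$, which depends on $V$ only through its law $F_{U}$ and hence equals the same expression computed at $\widetilde U$. So $\widetilde U$ dominates every feasible $V$, and since $\widetilde U$ is itself feasible it solves~(\ref{maxcorrmimd}).

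For part~(ii), I would first observe that any second representation $(\overline U,\overline\beta)$ of the stated form is, by part~(i), also a maximizer, so ${\mathrm E}(Y^{\top}\overline U)={\mathrm E}(Y^{\top}\widetilde U)$. Running the part~(i) inequality with $V=\overline U$, the nonnegative random variable $\Phi_{X}(\overline U)-\Phi_{X}(\widetilde U)-Y^{\top}(\overline U-\widetilde U)$ has zero mean (the $\Phi_{X}$ terms cancel as above, and the inner-product term has zero expectation by optimality), hence vanishes almost surely. By strict convexity of $u\longmapsto B(u)^{\top}x$, equality in its supporting-hyperplane inequality at the base point can hold only there, so $\overline U=\widetilde U$ a.s.; substituting back into the two representations of $Y$ gives $X^{\top}\beta(\widetilde U)=X^{\top}\overline\beta(\widetilde U)$ a.s.

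For part~(iii), the observation is that under (L) the pair $(U,\beta_{0})$ is a representation meeting the hypotheses of part~(ii): $U\mid X\sim F_{U}$ forces $U\sim F_{U}$ and $U\Perp X$, hence ${\mathrm E}(X\mid U)={\mathrm E}(X)$, and (L) already supplies the strict convexity of $u\longmapsto B_{0}(u)^{\top}x$. Thus part~(ii) yields $\widetilde U=U$ a.s.\ and $X^{\top}\beta(\widetilde U)=X^{\top}\beta_{0}(\widetilde U)$ a.s.; since $\widetilde U$ solves~(\ref{maxcorrmimd}) by part~(i), so does $U$. To upgrade $X^{\top}\beta(U)=X^{\top}\beta_{0}(U)$ a.s.\ to $\beta(U)=\beta_{0}(U)$ a.s., I would use independence of $U$ and $X$: conditioning on $U=u$, for $F_{U}$-a.e.\ $u$ we get $X^{\top}\big(\beta(u)-\beta_{0}(u)\big)=0$ for $F_{X}$-a.e.\ $x$, so each column $c$ of $\beta(u)-\beta_{0}(u)$ satisfies $c^{\top}{\mathrm E}(XX^{\top})c={\mathrm E}\big[(X^{\top}c)^{2}\big]=0$, and full rank (hence positive definiteness) of ${\mathrm E}(XX^{\top})$ forces $c=0$.

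The main difficulties I anticipate are bookkeeping rather than conceptual. First, one must justify the integrability needed to take the expectations in parts~(i)--(ii); this should follow from (M), the maintained second-moment assumption on $X$, and the $C^{1}$ regularity of $B$ on the support $\mathcal{U}$ (when $\mathcal{U}$ is compact, as in the uniform reference case, this is immediate, while the normal reference case needs a short separate estimate). Second, the strict-convexity step in part~(ii) must be carried out carefully with respect to the joint law of $(X,\widetilde U,\overline U)$, so that pointwise strict convexity of $u\longmapsto B(u)^{\top}x$ for the realized $x$ genuinely delivers $\overline U=\widetilde U$ on the enriched probability space. Condition (C) does not appear to be needed for this theorem beyond its role in the earlier results.
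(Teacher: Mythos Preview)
Your proposal is correct and follows essentially the same route as the paper: both arguments exploit convexity of $u\mapsto\Phi_{X}(u)=B(u)^{\top}X$, use the mean-independence constraint to show that ${\mathrm E}\,\Phi_{X}(V)$ depends on $V$ only through its law $F_{U}$ (so the potential terms cancel), and then invoke strict convexity for uniqueness in part~(ii) and the full-rank condition for part~(iii). The only cosmetic difference is that the paper phrases the convexity step via the Fenchel--Young inequality $V^{\top}Y\le\Phi_{X}(V)+\Phi_{X}^{\ast}(Y)$ (with equality at $\widetilde U$ since $Y=\nabla\Phi_{X}(\widetilde U)$), whereas you use the equivalent supporting-hyperplane inequality $\Phi_{X}(V)\ge\Phi_{X}(\widetilde U)+Y^{\top}(V-\widetilde U)$ directly; your version is marginally more elementary in that it never introduces the Legendre conjugate $\Phi_{X}^{\ast}$, but the content is identical.
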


The last assertion is important -- it says that if (L) holds, then the
linear program~(\ref{maxcorrmimd}), where the independence constraint has
been relaxed into a \emph{mean independence} constraint, will find the true
linear vector quantile regression in the population.

\subsection{Dual Program for Linear VQR}

As explained in details in the appendix, Program~(\ref{maxcorrmimd}) is an
infinite-dimensional linear programming problem whose dual program is:
\begin{equation}
\begin{array}{ll}
\inf_{(\psi ,b)} \ {\mathrm{E}} \psi (X,Y)+{\mathrm{E}}b(V)^{\top }{\mathrm{%
E}}(X)\;:\; & \psi (x,y)+b(u)^{\top }x\geq u^{\top }y, \\
& \forall \ (y,x,u)\in \mathcal{Y}\mathcal{X}\mathcal{U},%
\end{array}
\label{dualmimc}
\end{equation}%
where $V\sim F_{U}$, where the infimum is taken over all continuous
functions $(y,x)\longmapsto \psi (y,x)$, mapping $\mathcal{YX}$ to $\mathbb{R}$
and $u\longmapsto b(u)$ mapping $\mathcal{U}$ to $\mathbb{R}$, such that ${%
\mathrm{E}} \psi (X,Y)$ and ${\mathrm{E}}b(V)$ are finite.

Since for fixed $b$, the smallest $\psi $ which satisfies the pointwise
constraint in (\ref{dualmimc}) is given by
\begin{equation*}
\psi (x,y):=\sup_{u\in \mathcal{U}}\{u^{\top }y-b(u)^{\top }x\},
\end{equation*}%
one may equivalently rewrite (\ref{dualmimc}) as the minimization over
continuous $b$ of
\begin{equation*}
\int \sup_{u\in \mathcal{U}}\{u^{\top }y-b(u)^{\top }x\}F_{Y\mid
X}(dx,dy)+\int b(u)^{\top }\mathrm{E}(X)F_{U}(du).
\end{equation*}%
By standard arguments (\cite{Villani03}, section 1.1.7), the
infimum over continuous functions coincides with the one over smooth or
simply integrable functions.
%Note also that one may assume and $b(0)=0$ since a constant can be added to $\varphi$ or $b$ provided the corresponding changes are also performed on $\psi$.

\begin{theorem}
\label{theorem: dual1} Under (M) and (QL), we have that the optimal solution
to the dual is given by functions:
\begin{equation*}
\psi(x,y)=\sup_{u \in \mathcal{U}} \{ u^\top y- B(u)^\top x\}, \ \ b(u) =
B(u).
\end{equation*}
\end{theorem}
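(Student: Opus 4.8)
The plan is to verify that the candidate pair $(\psi^{*},B)$, where $\psi^{*}(x,y):=\sup_{u\in\mathcal{U}}\{u^{\top}y-B(u)^{\top}x\}$ is the function denoted $\psi$ in the statement, is feasible for the dual \eqref{dualmimc} and that its dual objective equals the value of the primal \eqref{maxcorrmimd}; dual optimality then follows from weak duality. First I would check the pointwise constraint, which is immediate: for every $(x,y)$ and every $u\in\mathcal{U}$ the definition of $\psi^{*}$ as a supremum gives $\psi^{*}(x,y)\ge u^{\top}y-B(u)^{\top}x$, i.e. $\psi^{*}(x,y)+B(u)^{\top}x\ge u^{\top}y$. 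Being a supremum of affine functions of $(x,y)$, $\psi^{*}$ is convex and lower semicontinuous, hence Borel; integrability of $\psi^{*}(X,Y)$ and of $b(V)=B(V)$ will come out of the complementary-slackness identity below together with (M), and since (as recalled just before the theorem) the dual infimum over continuous functions coincides with the one over integrable functions, it suffices to work in this enlarged class.

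The core of the argument would be a pointwise Fenchel equality coming from (QL). For each $x\in\mathcal{X}$ the map $u\mapsto\Phi_{x}(u)=B(u)^{\top}x$ is strictly convex and $C^{1}$ on the convex set $\mathcal{U}$, with $\nabla_{u}\Phi_{x}(u)=\beta(u)^{\top}x$, and its Legendre transform on $\mathcal{U}$ is precisely $\Phi_{x}^{*}(y)=\sup_{u\in\mathcal{U}}\{u^{\top}y-\Phi_{x}(u)\}=\psi^{*}(x,y)$. Differentiability of $\Phi_{x}$ at an interior point $u$ turns Young's inequality into the equality
\begin{equation*}
\Phi_{x}(u)+\psi^{*}\bigl(x,\nabla_{u}\Phi_{x}(u)\bigr)=u^{\top}\nabla_{u}\Phi_{x}(u),
\end{equation*}
and since $F_{U}$ is absolutely continuous on the convex set $\mathcal{U}$ (condition (N)), $\widetilde U$ lies in the interior of $\mathcal{U}$ almost surely. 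Evaluating at $u=\widetilde U$, $x=X$ and using the representation $Y=\beta(\widetilde U)^{\top}X$ from (QL) yields
\begin{equation*}
\psi^{*}(X,Y)+B(\widetilde U)^{\top}X=\widetilde U^{\top}Y\qquad\text{a.s.},
\end{equation*}
which in particular makes $\psi^{*}(X,Y)$ a.s. finite and, upon taking expectations (the right side being integrable by Cauchy--Schwarz and (M)), integrable; this confirms genuine dual feasibility of $(\psi^{*},B)$.

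Taking expectations in the last display and using the mean-independence constraint $\mathrm{E}(X\mid\widetilde U)=\mathrm{E}(X)$ to rewrite $\mathrm{E}\bigl[B(\widetilde U)^{\top}X\bigr]=\mathrm{E}\bigl[B(\widetilde U)^{\top}\mathrm{E}(X\mid\widetilde U)\bigr]=\mathrm{E}[B(\widetilde U)]^{\top}\mathrm{E}(X)=\mathrm{E}[B(V)]^{\top}\mathrm{E}(X)$ (the last step since $\widetilde U\sim F_{U}\sim V$), I would obtain
\begin{equation*}
\mathrm{E}\,\psi^{*}(X,Y)+\mathrm{E}[B(V)]^{\top}\mathrm{E}(X)=\mathrm{E}\bigl[\widetilde U^{\top}Y\bigr].
\end{equation*}
By Theorem~\ref{theorem: linear}(i), $\widetilde U$ solves \eqref{maxcorrmimd}, so the right-hand side is exactly the primal value. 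The remaining ingredient is weak duality: for any primal-feasible $V'$ and any dual-feasible $(\psi,b)$, applying the pointwise constraint at $(X,Y,V')$, taking expectations, and using $\mathrm{E}[b(V')^{\top}X]=\mathrm{E}[b(V')]^{\top}\mathrm{E}(X)=\mathrm{E}[b(V)]^{\top}\mathrm{E}(X)$ gives $\mathrm{E}[V'^{\top}Y]\le\mathrm{E}\,\psi(X,Y)+\mathrm{E}[b(V)]^{\top}\mathrm{E}(X)$, so the dual infimum dominates the primal supremum. Since $(\psi^{*},B)$ is feasible and its dual objective equals the primal value, the dual infimum is attained there, which is the assertion.

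I would expect the only genuinely delicate point to be the integrability and measurability bookkeeping — making sure $\psi^{*}(X,Y)$, $B(\widetilde U)$ and $\mathrm{E}(X)$ are all integrable and that expectation commutes with the constraint, especially when $\mathcal{U}$ is unbounded — but the a.s. identity $\psi^{*}(X,Y)=\widetilde U^{\top}Y-B(\widetilde U)^{\top}X$ reduces all of this to the moment hypothesis (M) and the $C^{1}$ regularity of $B$. The other point requiring a little care is the passage from Young's \emph{inequality} to \emph{equality}, which relies on the differentiability of $\Phi_{x}$ built into (QL), ensuring that $\nabla_{u}\Phi_{x}(\widetilde U)$ is a genuine subgradient at the (a.s. interior) point $\widetilde U$ rather than merely one of several subgradients at a kink.
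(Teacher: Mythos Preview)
Your proof is correct and follows essentially the same route as the paper: verify feasibility of the candidate pair, establish the a.s.\ Young/complementary-slackness identity $\psi^{*}(X,Y)+B(\widetilde U)^{\top}X=\widetilde U^{\top}Y$ from (QL), take expectations using mean-independence, and close with weak duality. The paper's proof is terser---it simply cites the proof of Theorem~\ref{theorem: linear} for the Young equality---while you re-derive it and add integrability bookkeeping; your invocation of (N) for interiority is harmless but unnecessary, since the $C^{1}$ hypothesis in (QL) already guarantees that $\nabla_u\Phi_X(\widetilde U)=Y$ makes $\widetilde U$ a global maximizer of the concave map $v\mapsto v^{\top}Y-\Phi_X(v)$ on $\mathcal{U}$.
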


This result can be recognized as a consequence of strong duality of the
linear programming (e.g. \cite{Villani03}).

%\begin{remark}
%\label{rk:scalarConnection}When $d=1$, this duality corresponds to
%the duality of classical scalar quantile regression. Assume (QL) holds. The
%classical quantile regression estimator $\beta(u)$ minimizes objective
%function:%
%\begin{equation*}
%{\mathrm{E}}\left[ \left( Y-X^{\top }\beta(u)\right) ^{+}\right]
%+\left( 1-\tau \right) \beta _{\tau }^{\top }{\mathrm{E}}\left[ X\right] .
%\end{equation*}%
%Let $C_{\tau }$ be the optimal value of this objective. The integral over $%
%\tau $ of these values is given by%
%\begin{equation*}
%C:=\int_{0}^{1}C_{\tau }d\tau ={\mathrm{E}}\left( \int_{0}^{1}\left(
%Y-X^{\top }\beta _{\tau }\right) ^{+}d\tau \right) +{\mathrm{E}}\left[ X%
%\right] ^{\top }\int_{0}^{1}\left( 1-\tau \right) \beta _{\tau }d\tau
%\end{equation*}%
%Letting $b_{\tau }=\int_{0}^{\tau }\beta _{t}dt$, and $\psi \left(
%x,y\right) =\int_{0}^{1}\left( y-x^{\top }\beta _{\tau }\right) ^{+}d\tau $,
%one has by integration by parts%
%\begin{eqnarray*}
%C &=&{\mathrm{E}}\left( \psi \left( X,Y\right) \right) +\int_{0}^{1}b_{\tau
%}d\tau ^{\top }{\mathrm{E}}\left[ X\right] \\
%&=&{\mathrm{E}}(\psi (X,Y))+{\mathrm{E}}b(V)^{\top }{\mathrm{E}}(X),
%\end{eqnarray*}%
%which is the objective function in~(\ref{dualmimc}). Thus, the solutions $%
%\psi $ and $b$ of the dual problem of VQR are closely related to the dual
%solutions $\beta _{\tau }$ of classical quantile regression. The next
%paragraph connects the primal variables.\qed
%\end{remark}

\subsection{\label{sec:scalar}Connecting to Scalar Quantile Regression}

We now consider the connection to the canonical, scalar quantile regression
primal problem, where $Y$ is scalar and for each probability index $t \in (0,1)$, the
linear functional form $x\longmapsto x^{\top }\beta (t)$ is used. \cite{KB}
define linear quantile regression as $X^{\top }\beta (t)$ with $\beta (t)$
solving the minimization problem
\begin{equation}
\beta (t)\in \arg \min_{\beta \in {\mathbb{R}}^{p}}{\mathrm{E}} \rho
_{t}(Y-X^{\top }\beta ),  \label{kb0}
\end{equation}%
where $\rho
_{t}(z):=tz_{-}+(1-t)z_{+}$, with $z_{-}$ and $z_{+}$ denoting the
negative and positive parts of $z$. The above formulation makes sense and $%
\beta (t)$ is unique under the following simplified conditions:

\begin{itemize}
\item[(QR)] ${\mathrm{E}}|Y| < \infty$ and $\E \|X\|^2 < \infty$, $(y,x) \longmapsto f_{Y\mid X}(y,x)$ is
bounded and uniformly continuous, and ${\mathrm{E}} ( f_{Y \mid X}(X^{\top }\beta(t),X) X X^\top)$ is of full rank.
\end{itemize}

We note that (\ref{kb0}) can be conveniently rewritten as
\begin{equation}
\min_{\beta \in {\mathbb{R}}^{p}}\{{\mathrm{E}}(Y-X^{\top }\beta )_{+}+(1-t){%
\mathrm{E}}X^{\top }\beta \}.  \label{kb1}
\end{equation}%
\cite{KB} showed that this convex program admits as \textit{dual formulation}%
:
\begin{equation}
\max \{{\mathrm{E}}(A_{t}Y)\;:\;A_{t}\in \lbrack 0,1],\;{\mathrm{E}}%
(A_{t}X)=(1-t){\mathrm{E}}X\}.  \label{dt}
\end{equation}%
An optimal $\beta =\beta (t)$ for (\ref{kb1}) and an optimal rank-score
variable $A_{t}$ in (\ref{dt}) may be taken to be
\begin{equation}
A_{t}=1(Y>X^{\top }\beta (t)),  \label{frombetatoU}
\end{equation}%
and thus the constraint ${\mathrm{E}}(A_{t}X)=(1-t){\mathrm{E}}X$ reads:
\begin{equation}
{\mathrm{E}}(1(Y>X^{\top }\beta (t))X)=(1-t){\mathrm{E}}X, \label{normaleq}
\end{equation}%
which simply is the first-order optimality condition for (\ref{kb1}).

We say that the specification of quantile regression is quasi-linear if
\begin{equation}  \label{monqrqs}
t\longmapsto x^\top \beta(t) \mbox{ is increasing on $(0,1)$}.
\end{equation}
Define the rank variable $\tilde U=\int_0^1 A_t dt$, then under (\ref{monqrqs}) we have that
\begin{equation*}
A_t = 1(\tilde U>t),
\end{equation*}
and the first-order condition (\ref{normaleq}) implies that for each $t \in (0,1)$
\begin{equation*}
{\mathrm{E}} 1 ( \tilde U \geq t) = (1-t), \ \ {\mathrm{E}}1 ( \tilde U \geq
t) X = (1-t) {\mathrm{E}} X.
\end{equation*}
The first property implies that $\tilde U \sim U(0,1)$ and the second
property can be easily shown to imply the \textit{mean-independence}
condition:
\begin{equation*}
{\mathrm{E}}(X \mid \tilde U) = {\mathrm{E}} X .
\end{equation*}
Thus quantile regression \textit{naturally leads} to the mean-independence
condition and the quasi-linear latent factor model. This is the reason we
used mean-independence condition as a starting point in formulating the
vector quantile regression. Moreover, in both vector and scalar cases, we
have that, when the conditional quantile function is linear (not just
quasi-linear), the quasi-linear representation coincides with the linear
representation and $\tilde U$ becomes fully independent of $X$.

The following result summarizes the connection more formally.

\begin{theorem}[\textbf{Connection to Scalar QR}]
\label{qrqsdec} Suppose that (QR) holds.

(i) If (\ref{monqrqs}) holds, then for $\tilde U=\int_0^1 A_t dt$ we have
the quasi-linear model holding
\begin{equation*}
\text{ $Y=X^\top \beta(\tilde U)$ a.s., $\tilde U \sim U(0,1)$ and ${\mathrm{%
E}}(X \mid \tilde U)={\mathrm{E}}(X)$. }
\end{equation*}
Moreover, $\tilde U$ solves the problem of correlation maximization
problem with a mean independence constraint:
\begin{equation}  \label{maxcorrmi}
\max \{ {\mathrm{E}}(VY): \; V \sim U(0,1) , \ {\mathrm{E}}(X \mid V)={%
\mathrm{E}}(X) \}.
\end{equation}

(ii) The quasi-linear representation above is unique almost surely. That is,
if we also have $Y= \overline{\beta}(\overline{U})^\top X$ with $\overline{U}
\sim U(0,1), {\mathrm{E}} (X \mid \overline{U}) = {\mathrm{E}} X$, $u
\longmapsto X^{\top }\overline{\beta}(u)$ is increasing in $u \in (0,1)$ a.s.,
then $\tilde U=\overline{U}$ and $X^{\top }\beta(\tilde U)=X^{\top }%
\overline{\beta}(\overline{U})$ a.s.

(iii) Consequently, if the conditional quantile function is linear, namely $%
Q_{Y|X}(u) = X^\top \beta_0(u)$, so that $Y= X^\top \beta_0(U)$, then the
latent factors in the quasi-linear and linear specifications coincide,
namely $U= \tilde U$, and so do the model coefficients, namely $\beta_0(U) =
\beta(U)$.
\end{theorem}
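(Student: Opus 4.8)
The plan is to exploit the explicit form $A_t=1(Y>X^\top\beta(t))$ of the optimal rank-score from~(\ref{frombetatoU}) together with its first-order characterization~(\ref{normaleq}), and to reduce the uniqueness statements in~(ii) and~(iii) to the uniqueness of $\beta(t)$ granted by~(QR).

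\emph{Part (i).} I would first record that under (QR) the map $t\longmapsto\beta(t)$ is continuous on $(0,1)$: applying the implicit function theorem to~(\ref{normaleq}), the Jacobian in $\beta$ is $-\E(f_{Y\mid X}(X^\top\beta(t),X)XX^\top)$, invertible by the full-rank hypothesis, while boundedness and uniform continuity of $f_{Y\mid X}$ make the left side of~(\ref{normaleq}) continuously differentiable in $\beta$. Taking the first coordinate of~(\ref{normaleq}) (recall $X_1\equiv1$) gives $\P(Y>X^\top\beta(t))=1-t$; letting $t\downarrow0$ and $t\uparrow1$ and using that $t\longmapsto X^\top\beta(t)$ is increasing by~(\ref{monqrqs}), monotone convergence shows that $Y$ lies a.s.\ in the open range of $t\longmapsto X^\top\beta(t)$. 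By continuity and strict monotonicity there is then a.s.\ a unique $\tau\in(0,1)$ with $X^\top\beta(\tau)=Y$, the set $\{s: X^\top\beta(s)<Y\}$ equals $[0,\tau)$, and hence $\widetilde U=\int_0^1 1(Y>X^\top\beta(s))\,ds=\tau$. In particular $Y=X^\top\beta(\widetilde U)$ a.s., and $A_t=1(\widetilde U>t)$ a.s.\ for each $t$; taking expectations in the latter gives $\P(\widetilde U>t)=\E A_t=1-t$, i.e.\ $\widetilde U\sim U(0,1)$, and~(\ref{normaleq}) rewrites as $\E((X-\E X)\,1(\widetilde U>t))=0$ for all $t$, which extends by linearity and bounded convergence to $\E((X-\E X)\,h(\widetilde U))=0$ for every bounded measurable $h$, i.e.\ $\E(X\mid\widetilde U)=\E X$. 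This is the asserted quasi-linear model. To show $\widetilde U$ solves~(\ref{maxcorrmi}) I would invoke weak duality: set $B(u):=\int_0^u\beta(s)\,ds$ and $\psi(x,y):=\sup_{u\in(0,1)}\{uy-B(u)^\top x\}$, so that $vy\le\psi(x,y)+B(v)^\top x$ for every $v\in(0,1)$. For any feasible $V$, conditioning on $V$ gives $\E(VY)\le\E\psi(X,Y)+\E\big(B(V)^\top\E(X\mid V)\big)=\E\psi(X,Y)+\big(\int_0^1 B(u)\,du\big)^\top\E X$, a bound independent of the feasible $V$, whereas equality holds at $V=\widetilde U$ because $Y=X^\top\beta(\widetilde U)$ attains the supremum defining $\psi(X,Y)$. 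The integrability of $\psi(X,Y)$ and of $B(V)^\top X$ needed here follows from $\E|Y|<\infty$ and $\E\|X\|^2<\infty$; I expect this bookkeeping, along with the monotone/bounded-convergence steps, to be the only real friction in part~(i).

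\emph{Part (ii).} Given a second representation $Y=X^\top\overline{\beta}(\overline{U})$ with $\overline{U}\sim U(0,1)$, $\E(X\mid\overline{U})=\E X$, and $u\longmapsto X^\top\overline{\beta}(u)$ strictly increasing a.s., strict monotonicity alone (no continuity needed) forces $\{\overline{U}>t\}=\{Y>X^\top\overline{\beta}(t)\}$ up to a null set, hence $\E(1(Y>X^\top\overline{\beta}(t))X)=\E(1(\overline{U}>t)\,\E(X\mid\overline{U}))=(1-t)\E X$. Thus $\overline{\beta}(t)$ satisfies the normal equation~(\ref{normaleq}), which is the first-order condition of the convex program~(\ref{kb1}); since (QR) makes the minimizer of~(\ref{kb1}) unique, $\overline{\beta}(t)=\beta(t)$ for every $t$. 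Therefore $\{\overline{U}>t\}=\{Y>X^\top\beta(t)\}=\{A_t=1\}$ a.s.\ for each $t$, and Fubini gives $\overline{U}=\int_0^1 1(\overline{U}>t)\,dt=\int_0^1 A_t\,dt=\widetilde U$ a.s.; the identity $X^\top\beta(\widetilde U)=Y=X^\top\overline{\beta}(\overline{U})$ is then immediate.

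\emph{Part (iii).} If the conditional quantile function is linear, $Q_{Y\mid X}(u)=X^\top\beta_0(u)$ with $Y=X^\top\beta_0(U)$ and $U\mid X\sim U(0,1)$, then $x^\top\beta_0(t)$ being the conditional $t$-quantile of $Y$ given $X=x$ yields $\E(1(Y>X^\top\beta_0(t))X)=(1-t)\E X$, so $\beta_0$ solves~(\ref{normaleq}) and hence $\beta_0(t)=\beta(t)$ for all $t$ by the uniqueness used in~(ii); in particular $\beta_0(U)=\beta(U)$ a.s. Moreover $U\mid X\sim U(0,1)$ means $U\sim U(0,1)$ and $U$ independent of $X$, so $(\overline{\beta},\overline{U})=(\beta_0,U)$ satisfies the hypotheses of~(ii), and that result gives $U=\widetilde U$ a.s.; equivalently, strict monotonicity gives $1(Y>X^\top\beta(t))=1(Y>X^\top\beta_0(t))=1(U>t)$ a.s.\ and Fubini yields $\widetilde U=\int_0^1 1(U>t)\,dt=U$.
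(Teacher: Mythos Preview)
Your proof is correct, but parts (ii) and (iii) take a genuinely different route from the paper. The paper proves (ii) by a convex-duality argument: it sets $B(t)=\int_0^t\beta(s)\,ds$, $\psi(x,y)=\max_{t\in[0,1]}\{ty-B(t)^\top x\}$, observes that $\tilde U$ achieves equality in Young's inequality $\psi(X,Y)+B(t)^\top X\ge tY$, and then uses mean-independence together with $\tilde U\sim\bar U$ to force $\bar U$ to achieve the same equality, whence $\bar U=\tilde U$ by strict concavity of $t\mapsto tY-B(t)^\top X$. It never identifies $\bar\beta$ with $\beta$ pointwise; for (iii) it only obtains $(\beta_0(U)-\beta(U))^\top X=0$ a.s.\ and then invokes independence of $U$ from $X$ and the full rank of $\E XX^\top$ to separate the coefficients. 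Your route instead feeds $\bar\beta(t)$ back into the normal equation~(\ref{normaleq}) and uses the uniqueness clause of (QR) to get $\bar\beta(t)=\beta(t)$ for every $t$, after which $\bar U=\tilde U$ follows by Fubini; this is more elementary, delivers the stronger pointwise identification $\beta_0\equiv\beta$ in (iii), and bypasses the full-rank step entirely. The trade-off is that the paper's duality argument mirrors the vector-case proof of Theorem~\ref{theorem: linear} and does not lean on the uniqueness of $\beta(t)$, whereas your argument is specific to the scalar setting where (QR) is available. In part (i) the paper's handling of the correlation maximization is also different and somewhat cleaner: instead of your weak-duality bound via $\psi$ and $B$, it notes that for any feasible $V$ the slice $V_t:=1(V>t)$ is feasible for~(\ref{dt}), so $\E(V_tY)\le\E(A_tY)$, and integrates in $t$---this avoids the integrability bookkeeping for $\psi(X,Y)$ and $B(V)^\top X$ that you flag as friction.
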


%\subsection{Imposing Logical Shape Constraints}

\section{Implementation of Vector Quantile Regression\label%
{sec:implementation}}

In order to implement VQR in practice, we employ discretization of the problem,
namely we approximate the distribution $F_{YX} $ of the outcome-regressor
vector $\left( Y, X\right)$ and $F_U$ of the vector rank $U$ by discrete
distributions $\nu$ and $\mu$, respectively. For example, for estimation
purposes we can approximate $F_{YX}$ by an empirical distribution of the
sample, and the distribution $F_U$ of $U$ by a finite grid.

Let $y_{i} \in \mathbb{R}^{d}$ denote values of outcomes and $x_{i}\in
\mathbb{R}^{p}$ of regressors for $1\leq i\leq n$; we assume the first
component of $x_{i}$ is $1$. For estimation purposes, we assume these values
are obtained as a random sample from distribution $F_{YX}$, and so each
observation receives a point mass $\nu_i =1/n$. When we perform computation
for theoretical purposes, we can think of these values as grid points, which
are not necessarily obtained as a random sample, and so each observation
receives a point mass $\nu_i$ which does not have to be $1/n$. We also set
up a collection of grid points $u_k$, for $k=1,...,m$, for values of the
vector rank $U$, and assign the probability mass $\mu_k$ to each of the
point. For example, if $U \sim U(0,1)^d$ and we generate values $u_k$ as a
random sample or via a uniformly spaced grid of points, then $\mu_k=1/m$.

Thus, let $\mathsf{Y}$ be the $n\times d$ matrix with row vectors $y_{j}^\top
$ and $\mathsf{X}$ the $n\times r$ matrix of row vectors $x_{j}^\top$; the
first column of this matrix is a vector of ones. Let $\nu $ be a $n\times 1$
matrix such that $\nu _{i}$ is the probability attached to a value $%
\left(x_{i},y_{i}\right) $, so that $\nu _{i}\geq 0$ and $\sum_{i=1}^n \nu
_{i}=1$. Let $m$ be the number of points in the support of $\mu$. Let $%
\mathsf{U}$ be a $m\times d$ matrix, where the $i$th row denoted by $%
u_{i}^\top$. Let $\mu $ be a $m\times 1$ matrix such that $\mu _{k}$ is the
probability weight of ${u_k}$ (hence $\mu _{i}\geq 0$ and $\sum_{k}\mu _{k}=1
$).

We are looking to find an $m\times n$ matrix $\pi $ such that $\pi _{ij}$ is
the probability mass attached to $\left( u_{i},x_{j},y_{j}\right) $ which
maximizes
\begin{equation*}
\sum_{ij}\pi _{ij}y_{j}^\top u_{i}=\text{Tr}(\mathsf{U}^{\top }\pi \mathsf{Y}%
)
\end{equation*}%
subject to constraint $\pi^\top1_{m}=\nu $, where $1_{m}$ is a $m\times 1$
vector of ones, and subject to constraints $\pi 1_{n}=\mu $ and $\pi \mathsf{%
X}=\mu \nu ^\top\mathsf{X}$.

Hence, the discretized VQR program is given in its primal form by
\begin{eqnarray}
&&\max_{\pi \geq 0}\text{Tr}\left( \mathsf{U}^{\top }\pi \mathsf{Y}\right) :
\quad \pi ^{\top }1_{m}=\nu ~\left[ \psi \right] \quad \pi \mathsf{X}=\mu
\nu ^\top\mathsf{X}~\left[ b\right],  \label{MKQRDiscrPrimal}
\end{eqnarray}%
where the square brackets show the associated Lagrange multipliers, and in
its dual form by%
\begin{eqnarray}
&&\min_{\psi ,b}\psi ^{\top }\nu +\nu ^\top\mathsf{X}b^{\top }\mu: \quad
\psi 1_{m}^{\top }+\mathsf{X}b^{\top }\geq \mathsf{Y}\mathsf{U}^{\top }~%
\left[ \pi ^{\top }\right],  \label{MKQRDiscrDual}
\end{eqnarray}%
where $\psi $ is a $n\times 1$ vector, and $b$ is a $m\times r$ matrix.

Problems (\ref{MKQRDiscrPrimal}) and (\ref{MKQRDiscrDual}) are two linear
programming problems dual to each other. However, in order to implement them
on standard numerical analysis software such as R or Matlab coupled with a
linear programming software such as Gurobi, we need to convert matrices into
vectors. This is done using the $\text{vec}$ operation, which is such that
if $A$ is a $p\times q$ matrix, $\text{vec}(A)$ is a column vector of size
$pq$ such that $\text{vec}\left( A\right) _{i+p\left( j-1\right) }=A_{ij}$. The
use of the Kronecker product is also helpful. Recall that if $A$ is a $%
p\times q$ matrix and $B$ is a $p^{\prime }\times q^{\prime }$ matrix, then
the Kronecker product $A\otimes B$ is the $pp^{\prime }\times qq^{\prime}$
matrix such that for all relevant choices of indices $i,j,k,l$, $\left(
A\otimes B\right) _{i+p\left( k-1\right) ,j+q\left( l-1\right)
}=A_{ij}B_{kl}.$ The fundamental property linking Kronecker products and the
$\text{vec}$ operator is $\text{vec}\left( BXA^{T}\right) =\left( A\otimes
B\right) \text{vec}\left( X\right).$

Introduce $\text{vec}{\pi }=\text{vec}\left( \pi \right) $, the optimization
variable of the \textquotedblleft vectorized problem\textquotedblright .
Note that the variable $\text{vec}{\pi }\ $is a $mn$-vector. Then we
rewrite the objective function,  $\mathrm{Tr}\left( \mathsf{U}^{\top }\pi
\mathsf{Y}\right) =\text{vec}{\pi }^{\top }\text{vec}\left( \mathsf{U}%
\mathsf{Y}^{\top }\right) $; as for the constraints, $\text{vec}\left(
1_{m}^{\top }\pi \right) =\left( I_{n}\otimes 1_{m}^{\top }\right) \text{vec}%
{\pi }$ is a $n\times 1$ vector; and $\text{vec}\left( \pi \mathsf{X}\right)
=\left( \mathsf{X}^{\top }\otimes I_{m}\right) \text{vec}{\pi }$ is a $
mr$-vector. Thus we can rewrite the program (\ref{MKQRDiscrPrimal})
as:
\begin{equation}
\begin{split}
& \max_{\text{vec}{\pi }\geq 0}\text{vec}\left( \mathsf{U}\mathsf{Y}^{\top
}\right) ^{\top }\text{vec}{\pi }: \\
& \left( I_{n}\otimes 1_{m}^{\top }\right) \text{vec}{\pi }=\text{vec}\left(
\nu \right) \\
& \left( \mathsf{X}^{\top }\otimes I_{m}\right) \text{vec}{\pi }=\text{vec}%
\left( \mu \nu^\top \mathsf{X}\right)
\end{split}%
\end{equation}%
which is a LP problem with $mn$ variables and $mr+n$ constraints. The
constraints $\left( I_{n}\otimes 1_{m}^{\top }\right) $ and $\left( \mathsf{X%
}^{\top }\otimes I_{m}\right) $ are very sparse, which can be taken
advantage of from a computational point of view.

%\begin{remark}
%\%label{rk:computation}Further investigations on computational techniques
%would aim at taking advantage of the bipartite structure of the problem (and
%in particular, investigate extensions of the Hungarian algorithm and of%
%Bertsekas' auction algorithm). They would seek to exploit the widespread
%availability of parallel computing, which is well suited for the problem
%under consideration. They would investigate heuristics such as simulated
%annealing for finding approximate solutions. Again, we intend to leave this
%for further explorations. \qed
%\end{remark}

\section{Empirical Illustration}

We demonstrate the use of the approach on a classical application of
Quantile Regression since \cite{KB82}: Engel's (\cite{engel}) data on
household expenditures, including 199 Belgian working-class households
surveyed by Ducpetiaux (\cite{Ducpetiaux}), and 36 observations from all
over Europe surveyed by Le Play (\cite{LePlay}). Due to the univariate
nature of classical QR, \cite{KB82} limited their focus on the regression of
food expenditure over total income. But in fact, Engel's dataset is richer
and classifies household expenses in nine broad categories: 1. Food; 2.
Clothing; 3. Housing; 4. Heating and lighting; 5. Tools; 6. Education; 7.
Safety; 8. Medical care; and 9. Services. This allows us to have a
multivariate dependent variable. While we could in principle have $d=9$, we
focus for illustrative purposes on a two-dimensional dependent variable ($d=2
$), and we choose to take $Y_{1}$ as food expenditure ( category \#1) and $%
Y_{2}$ as housing and domestic fuel expenditure (category \#2 plus category
\#4). We take $X=\left( X_{1},X_{2}\right) $ with $X_{1}=1$ and $X_{2}$%
$=$ the total expenditure (income) as an explanatory variable.
%Descriptive statistics are offered in Table~\ref%
%{tab:descrStats}.

%\begin{figure}[h]
%\begin{center}
%\epsfig{figure=Exerpt,width=5in}
%\par
%\end{center}
%\caption{An excerpt from Engel's \cite{engel} compilation of Ducpetiaux's data on
%household expenditure. The total expendable income is broken down into nine
%categories: 1 Food (Nahrung ) 2. Clothing (Kleidung) 3. Housing (Wohnung) 4. Heating
%and Lighting (Heizung und Beleuchtung) 5. Tools (Ger\"athe) 6. Education (Erziehung)
%7. Public safety (\"Offentliche Sicherheit) 8. Health (Gesundheitspflege) 9. Personal
%service (pers\"onliche Dienstleistungen). }
%\label{Figure:Engel}
%\end{figure}

%\begin{table}[tbph]
%\caption{Descriptive statistics}
%\label{tab:descrStats}%
%\begin{tabular}{rrrrr}
%\toprule & \textbf{Minimum} & \textbf{Maximum} & \textbf{Median} & \textbf{%
%Average} \\
%\midrule \textbf{Food} & 242,32 & 2032,68 & 582,54 & 624,15 \\
%\textbf{Shelter+Fuel} & 11,17 & 660,24 & 113,36 & 136,62 \\
%\textbf{Clothing} & 5,00 & 520,00 & 111,76 & 135,54 \\
%\textbf{Else} & 0,00 & 1184,40 & 39,50 & 69,21 \\
%\textbf{Total} & 377,06 & 4957,83 & 883,99 & 982,47 \\
%\bottomrule &  &  &  &
%\end{tabular}%
%\end{table}

\subsection{One-dimensional VQR\label{par:App1D}}

To begin with, we run a pair of one dimensional VQRs, where we regress
 $Y_{1}$ on $X$,  and $Y_{2}$ on $X$. We plot the results in Figure~\ref{Figure:Univariate}; the curves drawn here are $u\rightarrow \beta (u)^\top x$ for five
percentiles of the income $x_{-1}$ (0\%, 25\%, 50\%, 75\%, 100\%), and the
corresponding probabilistic representations are%
\begin{equation}
Y_{1}=\beta _{1}\left( U_{1}\right) ^{\top }X\text{ and }Y_{2}=\beta
_{2}\left( U_{2}\right) ^{\top }X  \label{rep1}
\end{equation}%
with $U_{1} \mid X \sim \mathcal{U}\left( \left[ 0,1\right] \right) $ and $U_{2} \mid X \sim
\mathcal{U}\left( \left[ 0,1\right] \right) $. Here, $U_{1}$ is interpreted
as a propensity to consume food, while $U_{2}$ is interpreted as a propensity
to consume the housing good. Note that in general, $U_{1}$ and $U_{2}$ are
not independent; in other words, the distribution of $\left(
U_{1},U_{2}\right) $ differs from $\mathcal{U}(\left[ 0,1\right] ^{2})$. In
fact, the distribution of $\left( U_{1},U_{2}\right) $ is called the \emph{%
copula} associated to the conditional distribution of $\left(
Y_{1},Y_{2}\right) $ conditional on $X$.

As explained above, when $d=1$, VQR is very closely connected to classical
quantile regression. Hence, in Figure~\ref{Figure:Univariate}, we also draw
the classical quantile regression (in red). In each case, the curves exhibit
very little difference between classical quantile regression and vector
quantile regression. Small differences occur, since vector quantile
regression in the scalar case can be shown to impose the fact that map $%
t\rightarrow A_{t}$ in~(\ref{dt}) is nonincreasing, which is not necessarily
the case with classical quantile regression under misspecification in
population, or even under specification in sample. As can be seen in Figure~%
\ref{Figure:Univariate}, the difference, however, is minimal.

From the plots in Figure~\ref{Figure:Univariate}, it is also apparent that
one-dimensional VQR\ can also suffer from the \textquotedblleft crossing
problem,\textquotedblright\ namely the fact that $\beta (t)^{\top }x$ may
not be monotone with respect to $t$. Indeed, the fact that $t\rightarrow
A_{t}$ is nonincreasing fails to imply the fact that $t\rightarrow \beta
(t)^{\top }x$ is nondecreasing. There exist procedures to repair the
crossing problem, see \cite{CFG}. However, we see that the crossing problem
is modest in the current example.

Running a pair of one-dimensional Quantile Regressions is interesting, but
it does not immediately convey the information about the joint conditional
dependence in $Y_{1}$ and $Y_{2}$ (given $X$). In other words,
representations~(\ref{rep1}) are not informative about the joint propensity
to consume food and income. One could also wonder whether food and housing
are locally complements (respectively locally substitute), in the sense
that, conditional on income, an increase in the food consumption is likely
to be associated with an increase (respectively a decrease) in the
consumption of the housing good. All these questions can be immediately
answered with higher-dimensional VQR.

\begin{figure}[h]
\begin{center}
\epsfig{figure=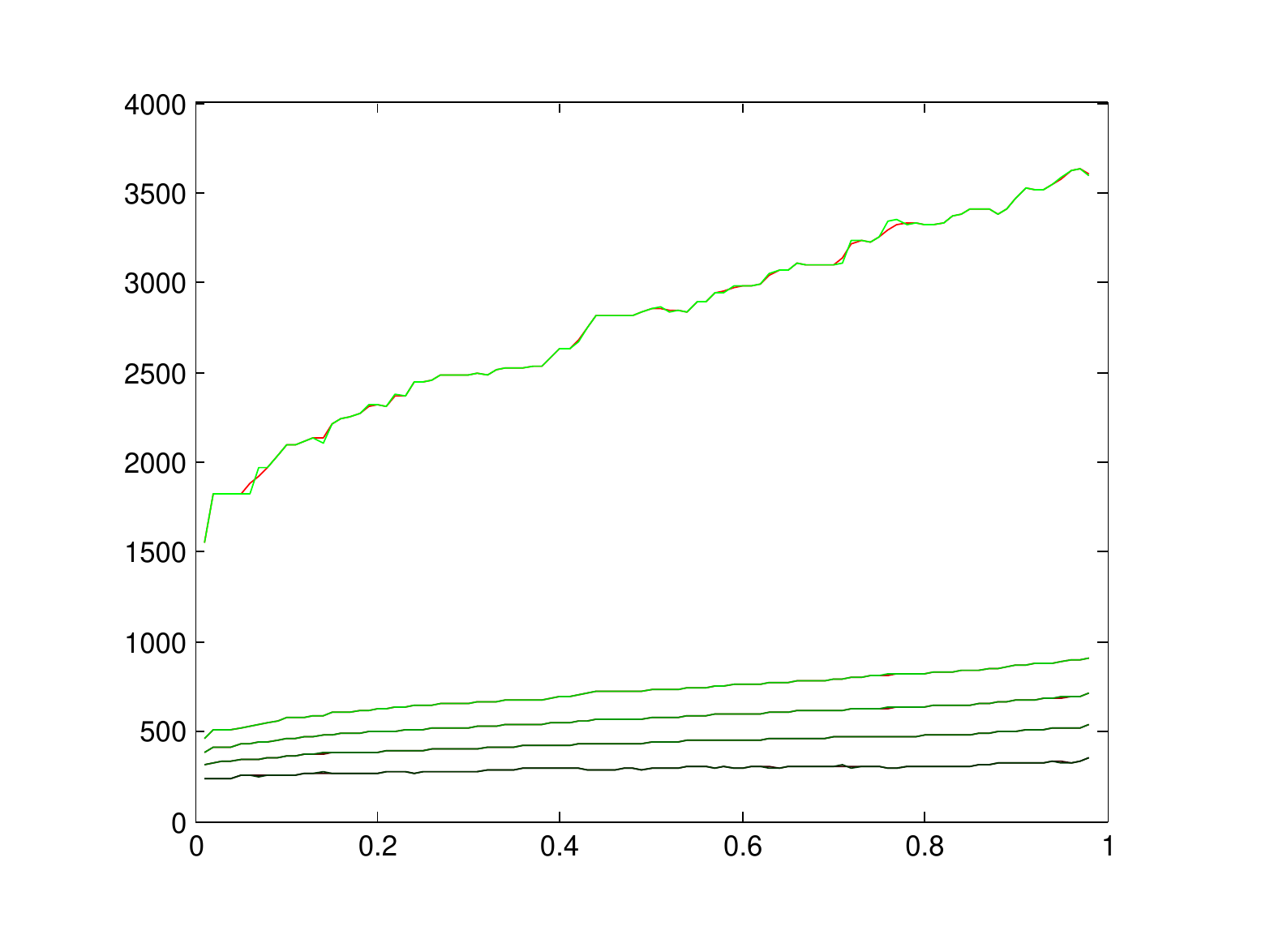,width=2.4in} \epsfig{figure=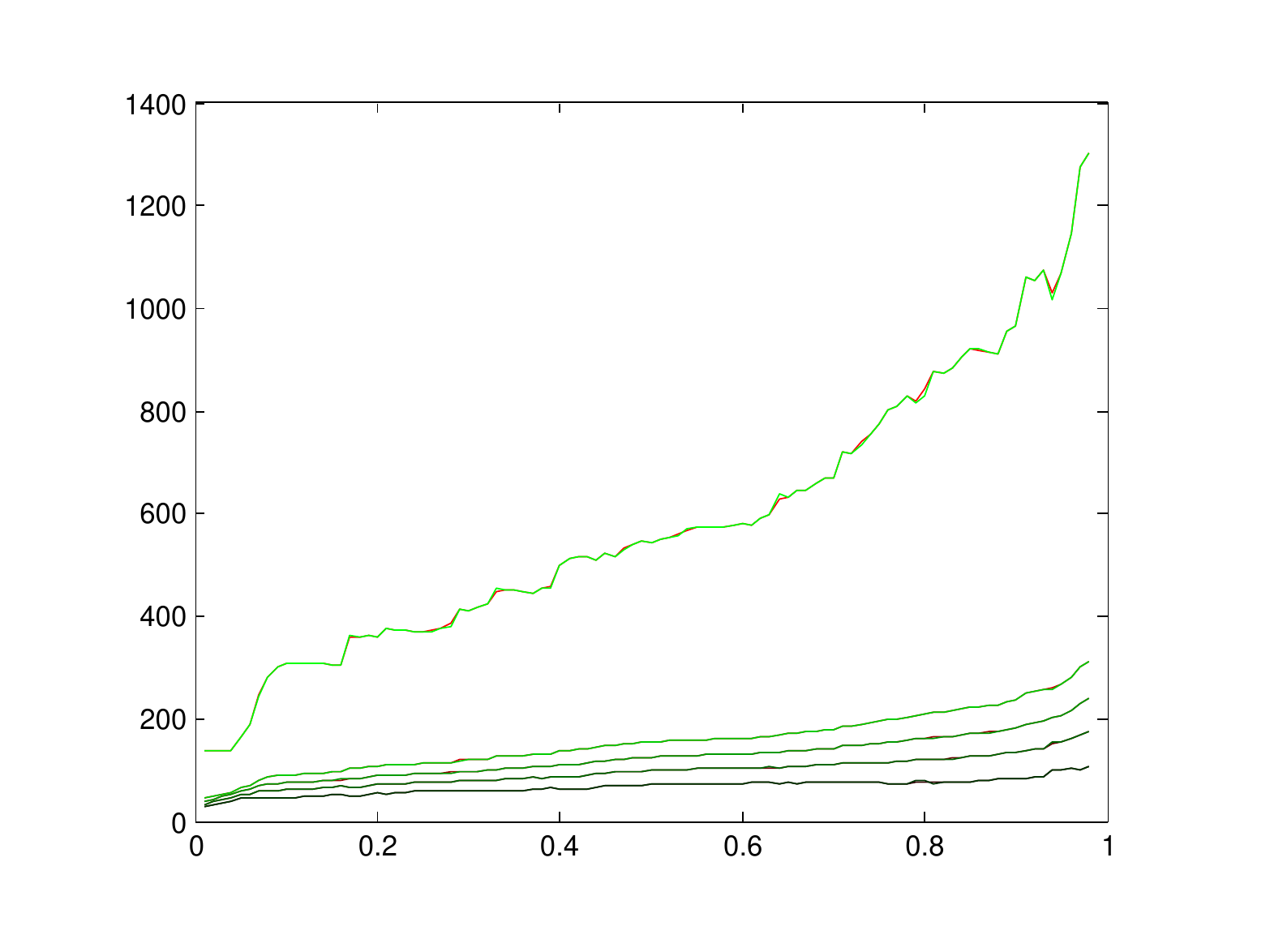,width=2.4in}
\par
\end{center}
\caption{Classical quantile regression (red) and
one-dimensional vector quantile regression (green) with income as
explanatory variable and with: (i) Food expenditure as dependent variable
(Left) and (ii) Housing expenditure as dependent variable (Right). The maps $%
t \to \protect\beta(t)^\top x$ are plotted for five values of income $x_{-1}$
(quartiles). }
\label{Figure:Univariate}
\end{figure}

\begin{comment}
\begin{figure}[h]
\begin{center}
\epsfig{figure=Fig1,width=2.4in} \epsfig{figure=Fig2,width=2.4in}
\par
\end{center}
\caption{Classical quantile regression (red) and
one-dimensional vector quantile regression (green) with income as
explanatory variable and with: (i) Food expenditure as dependent variable
(Left) and (ii) Housing expenditure as dependent variable (Right). The maps $%
t \to \protect\beta(t)^\top x$ are plotted for five values of income x
(quartiles). }
\label{Figure:Univariate}
\end{figure}
\end{comment}

\subsection{Two dimensional VQR}

In contrast, the two-dimensional vector quantile regression with $Y=\left(
Y_{1},Y_{2}\right) $ as a dependent variable yields a representation
\begin{eqnarray*}
Y_{1}= \beta_1(U_{1},U_{2} )^\top X  = \frac{\partial B}{\partial u_{1}}\left( U_{1},U_{2}\right) ^{\top }X, \\
Y_{2}=\beta_2(U_{1},U_{2} )^\top X   = \frac{\partial B}{\partial u_{2}}\left( U_{1},U_{2}\right)
^{\top }X,  \label{rep2}
\end{eqnarray*}%
where $\left( U_{1},U_{2}\right) \mid X  \sim \mathcal{U}(\left[ 0,1\right]
^{2})$. Let us make a series of remarks.

First, $U_{1}$ and $U_{2}$ have an interesting interpretation: $U_{1}$ is a
\emph{propensity for food expenditure}, while $U_{2}$ is a \emph{propensity
for domestic (housing and heating) expenditure}. Let us explain this
denomination. If VQR\ is correctly specified, then $\Phi _{x}\left( u\right)
= B\left( u\right) ^{\top }x$ is convex with respect to $u$, and $%
Y=\nabla _{u}\Phi _{X}\left( U\right) $, which implies in particular that
\begin{equation*}
\partial /\partial u_{1}\left( \partial \Phi _{x}\left( u_{1},u_{2}\right)
/\partial u_{1}\right) =\partial ^{2}\Phi _{x}\left( u_{1},u_{2}\right)
/\partial u_{1}^{2}\geq 0.
\end{equation*}%
Hence an increase in $u_{1}$ keeping $u_{2}$ constant leads to an increase
in $y_{1}$. Similarly, an increase in $u_{2}$ keeping $u_{1}$ constant leads
to an increase in $y_{2}$.

Second, the quantity $U(x,y)=Q^{-1}_{Y \mid X}(y,x)$ is a measure of joint
propensity of expenditure $Y=y$ conditional on $X=x$.
%Clearly, the latter representation allows to have a
%better flexibility for capturing the dependence between $Y_{1}$ and $Y_{2}$
%than the former.
This is a way of rescaling the conditional distribution of $Y$ conditional
on $X=x$ into the uniform distribution on $\left[ 0,1\right] ^{2}$. If VQR
is correctly specified, then $\left( U_{1},U_{2}\right) $ is independent
from $X$, so that $U\left( X,Y\right) \sim F_U =\mathcal{U}(\left[ 0,1\right]
^{2})$. In this case, $\Pr \left( U\left( X,Y\right) \geq u_{1},U\left(
X,Y\right) \geq u_{2}\right) =\left( 1-u_{1}\right) \left( 1-u_{2}\right) $
can be used to detect \textquotedblleft nontypical\textquotedblright\
values of $\left( y_{1},y_{2}\right) $.

Third, representation~(\ref{rep2}) may also be used to determine if $Y_{1}$
and $Y_{2}$ are local complements or substitutes. Indeed, if VQR\ is
correctly specified and $\left( Y_{1},Y_{2}\right) $ are independent
conditional on $X$, then $B\left( u_{1},u_{2}\right) =B_{1}\left(
u_{1}\right) +B_{2}\left( u_{2}\right) $, so that the cross derivative $%
\partial ^{2}B\left( u_{1},u_{2}\right) /\partial u_{1}\partial u_{2}=0$. In
this case, (\ref{rep2}) becomes $Y_{1}=\frac{\partial B_{1}}{\partial u_{1}}%
\left( U_{1}\right) ^{\top }X$ and $Y_{2}=\frac{\partial B_{2}}{\partial
u_{2}}\left( U_{2}\right) ^{\top }X$, which is equivalent to two
single-dimensional quantile regressions. In this case, conditional on $X$,
an increase in $Y_{1}$ is not associated to an increase or a decrease in $%
Y_{2}$. On the contrary, when $\left( Y_{1},Y_{2}\right) $ are no longer
independent conditional on $X$, then the term $\partial ^{2} B\left(
u_{1},u_{2}\right) /\partial u_{1}\partial u_{2}$ is no longer zero. Assume
it is positive. In this case, an increase in the propensity to consume food $%
u_{1}$ not only increases the food consumption $y_{1}$, but also the housing
consumption $y_{2}$, which we interpret by saying that food and housing are
local complements.

\begin{figure}[h]
\begin{center}
\includegraphics[width=5in,  trim={0 6cm 0 6cm},clip]{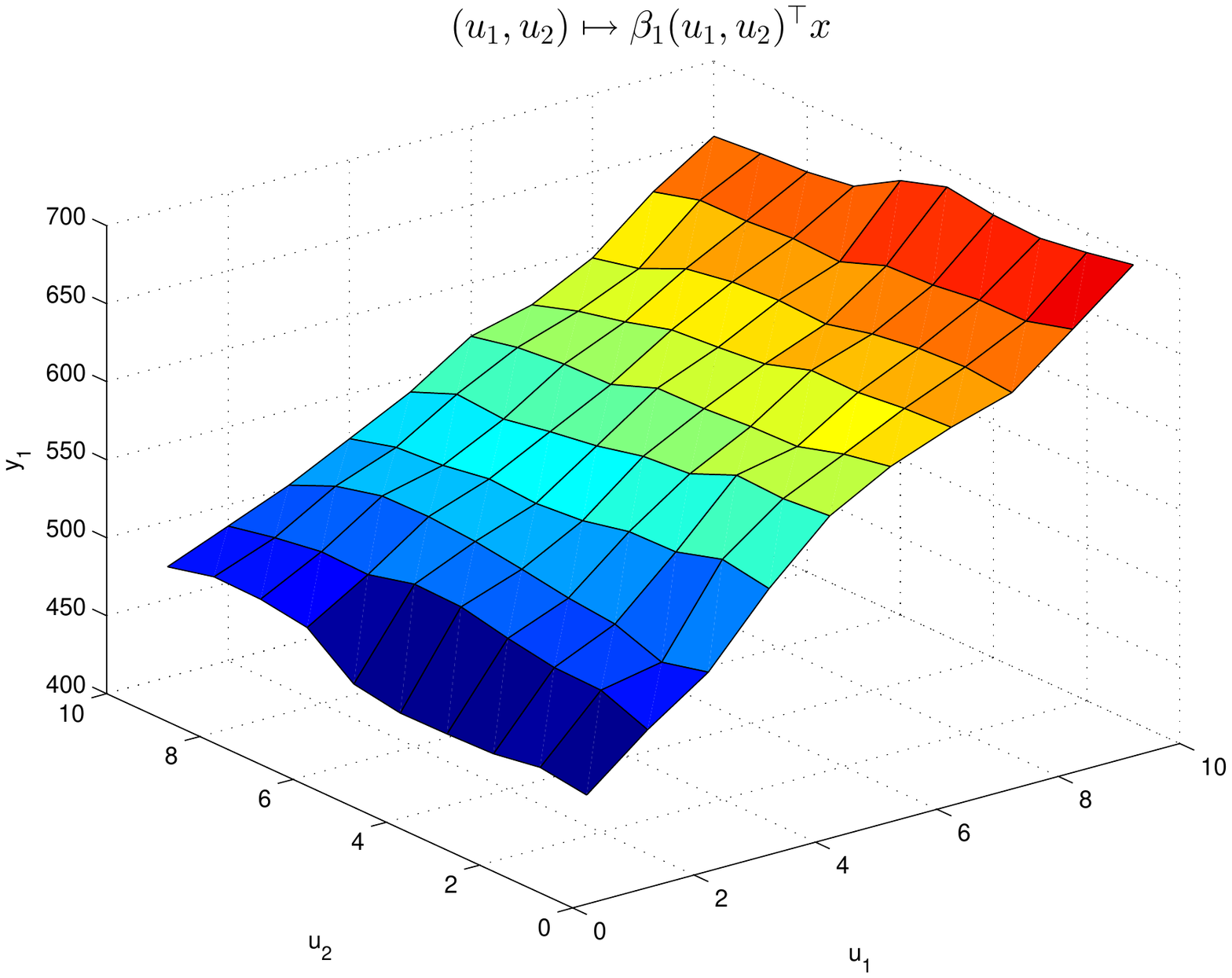} %
\includegraphics[width=5in,  trim={0 6cm 0 6cm},clip]{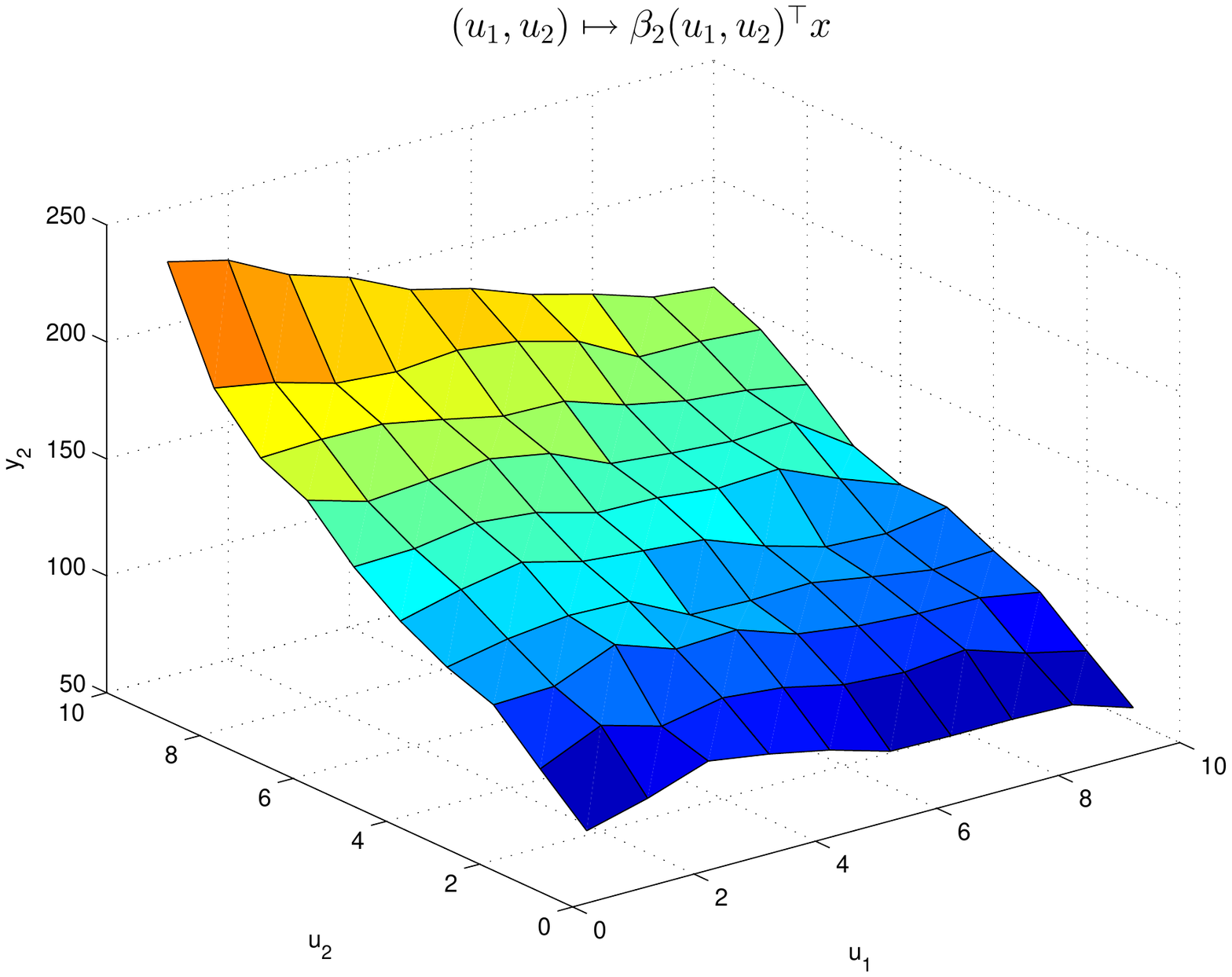}
\end{center}
\caption{Predicted outcome conditional on total expenditure equal to median
value, that is $X_{2}=883.99$. Top: food expenditure, Bottom: housing
expenditure.}
\label{Figure:Multivariate}
\end{figure}
Going back to Engel's data, in Figure~\ref{Figure:Multivariate}, we set $%
x=(1,~883.99)$, where $x_{2}=883.99$ is the median value of the total
expenditure $X_{2}$, and we are able to draw the two-dimensional
representations. %The one-dimensional representation~(\ref%
%{rep1}) is provided in the top part of Figure~\ref{Figure:Multivariate},
%which expresses $Y_{1}$ as a function of $U_{1}$ in the top left pane, and $%
%Y_{2}$ as a function of $U_{2}$ in the top right pane.
%The two-dimensional
%representation~(\ref{rep2}) is given in the two panes at the bottom of
%Figure~\ref{Figure:Multivariate}.

The top pane expresses $Y_{1}$ as a function of $U_{1}$ and $U_{2}$, while
the bottom pane expresses $Y_{2}$ as a function of $U_{1}$ and $U_{2}$. The
insights of the two-dimensional representation become apparent. One sees
that while $Y_{1}$ covaries strongly with $U_{1}$ and $Y_{2}$ covaries
strongly with $U_{2}$, there is a significant and negative
cross-covariation: $Y_{1}$ covaries negatively with respect to $U_{2}$,
while $Y_{2}$ covaries negatively with $U_{1}$. The interpretation is that,
for a median level of income, the food and housing goods are local
substitutes. This makes intuitive sense, given that food and housing goods
account for a large share of the surveyed households' expenditures.

\newpage
\appendix

\section{Proofs for Section 2}

\subsection{Proof of Theorem \protect\ref{Thm: Conditional Brenier}}

The first assertion of the theorem is a consequence of the refined version
of Brenier's theorem given by \cite{McCann1985} (as, e.g, stated in \cite%
{Villani03}, Theorem 2.32), which we apply for each $z\in \mathcal{Z}$. In
particular, this implies that for each $z\in \mathcal{Z}$, the map $u\longmapsto
Q_{Y\mid Z}(u,z)$ is Borel-measurable. The  Borel measurability of $(u,z) \longmapsto(Q_{Y \mid Z}(u,z),z) $ follows from the measurability of conditional probabilities and
standard measurable selection arguments \cite{ektem}; for details, we refer to Appendix~\ref{app:measurability}.

Next we are going to show that the probability law of $(Q_{Y \mid Z}(V,Z),Z)$ is
the same as that of $(Y,Z)$. Indeed,  for any rectangle $A \times B \subset \mathbb{R}%
^{d+k} $,
\begin{eqnarray*}
&& \P ( (Y,Z) \in A\times B) = \int_B \left [ \int_{A} F_{Y\mid Z}( dy, z) %
\right] F_Z(dz) \\
&& = \int_B \left [\int 1\{ (Q_{Y \mid Z}(u,z) \in A\} F_U(du) \right ] d
F_Z(dz) \\
&& = \P ( (Q_{Y \mid Z} (V,Z),Z) \in A\times B) ,
\end{eqnarray*}
where the first equality relies on the regularity of the conditional distribution function,
and the penultimate equality follows from the previous paragraph. Since
probability measure defined over rectangles uniquely pins down the probability measure on all Borel sets via Caratheodory's extension theorem, the claim follows.

To show the second assertion we invoke Dudley-Phillip's (\cite{DP}) coupling
result given in their Lemma 2.11.

\begin{lemma}[Dudley-Phillip's coupling]
Let $S$ and $T$ be Polish spaces and $Q$ a law on $S\times T$, with marginal
law $\mu$ on $S$. Let $(\Omega, \mathcal{A}, \P )$ be a probability space
and $J$ a random variable on $\Omega$ with values in $S$ and $J \sim \mu$.
Assume there is a random variable $W$ on $\Omega$, independent of $J$, with
values in a Polish space $R$ and law $\nu$ on $R$ having no atoms. Then
there exists a random variable $I$ on $\Omega$ with values in $T$ such that $%
(J,I) \sim Q$.
\end{lemma}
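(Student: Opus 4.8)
The plan is to construct $I$ by sampling from the conditional law of $Q$ along the first coordinate, using $W$ as the source of the required extra randomness. First I would disintegrate $Q$ with respect to its first marginal: since $S$ and $T$ are Polish, there is a Borel Markov kernel $s\longmapsto Q_s$ from $S$ to $T$ with $Q(ds\,dt)=\mu(ds)\,Q_s(dt)$, which is the standard existence theorem for regular conditional distributions on Polish spaces. Next I would normalize the auxiliary randomness: because $\nu$ is a non-atomic Borel probability on a Polish space, $(R,\mathcal{B}(R),\nu)$ is isomorphic mod $0$ to $([0,1],\mathcal{B}([0,1]),\mathrm{Leb})$, and fixing such an isomorphism $\theta$ produces $\widetilde W:=\theta(W)$, uniform on $[0,1]$ and, being a measurable function of $W$ alone, still independent of $J$.

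The core step is to realize the kernel $Q_s$ from uniform randomness by a jointly measurable map. Using a Borel isomorphism between $T$ and a Borel subset of $\mathbb{R}$, it suffices to treat $T\subseteq\mathbb{R}$, where $g(s,u)$ may be taken to be the left-continuous generalized inverse of the cumulative distribution function of $Q_s$; this map is jointly Borel on $S\times[0,1]$ and satisfies $g(s,W^*)\sim Q_s$ for every $s$ whenever $W^*\sim\mathrm{Unif}[0,1]$. Transporting back through the isomorphism gives such a $g:S\times[0,1]\longrightarrow T$ for general Polish $T$. I would then set $I:=g(J,\widetilde W)$, which is a well-defined $T$-valued random variable on $\Omega$. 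For bounded Borel $\phi$ on $S\times T$, conditioning on $J$ and using independence of $\widetilde W$ from $J$ yields $\mathrm{E}\,\phi(J,I)=\mathrm{E}\int_0^1\phi(J,g(J,u))\,du=\mathrm{E}\int_T\phi(J,t)\,Q_J(dt)=\int_{S\times T}\phi\,dQ$, so $(J,I)\sim Q$.

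The main obstacle is the measure-theoretic plumbing rather than any conceptual difficulty: one must invoke, in the right order, the existence of a regular conditional distribution for $Q$ on a product of Polish spaces, the Borel isomorphism theorem identifying a non-atomic standard probability space with Lebesgue measure on $[0,1]$, and the joint measurability of the quantile-type selection $g$. Each of these is classical, but the argument only closes once they are assembled together; in particular it is the simultaneous joint measurability in $(s,u)$ and the distributional identity $g(s,\cdot)_*\mathrm{Leb}=Q_s$ that carry the real content.
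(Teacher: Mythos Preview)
Your argument is correct and is, in fact, the standard route to this coupling result: disintegrate $Q$ against its $S$-marginal, convert the auxiliary non-atomic variable $W$ into a uniform on $[0,1]$ via a Borel isomorphism of standard probability spaces, and then sample from each $Q_s$ via the (generalized) quantile map, checking joint measurability in $(s,u)$. The only places that deserve a word of care are the ones you already flag: the joint Borel measurability of $(s,u)\mapsto g(s,u)$ (which follows since $(s,y)\mapsto Q_s((-\infty,y])$ is jointly Borel and the left-continuous inverse can be written as an infimum over rationals), and the fact that the isomorphism $\theta$ is only defined $\nu$-a.e., which is harmless since $W$ lands in the full-measure set almost surely.

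There is, however, nothing to compare against in the paper itself: the lemma is not proved there. It is simply quoted verbatim as Lemma~2.11 of Dudley and Philipp and then applied as a black box in the proof of Theorem~\ref{Thm: Conditional Brenier}. So your write-up supplies a self-contained proof where the paper gives only a citation; the approach you take is essentially the same as the one Dudley and Philipp use in their original article.
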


First we recall that our probability space has the form:
\begin{equation*}
(\Omega, \mathcal{A},\P )=(\Omega_0, \mathcal{A}_0, \P _0) \times (\Omega_1,
\mathcal{A}_1, \P _1) \times ( (0,1), B(0,1), \text{Leb}),
\end{equation*}
where $(0,1), B(0,1), \text{Leb})$ is the canonical probability space,
consisting of the unit segment of the real line equipped with Borel sets and
the Lebesgue measure. We use this canonical space to carry $W$, which is
independent of any other random variables appearing below, and which has the
uniform distribution on $R=[0,1]$. The space $R= [0,1]$ is Polish and the
distribution of $W$ has no atoms.

Next we apply the lemma to $J = (Y,Z)$ to show existence of $I = U$, where
both $J$ and $I$ live on the probability space $(\Omega, \mathcal{A}, \P )$
and that obeys the second assertion of the theorem. The variable $J$ takes
values in the Polish space $S= \mathbb{R}^d \times \mathbb{R}^k$, and the
variable $I$ takes values in the Polish space $T= \mathbb{R}^d$.

Next we describe a law $Q$ on $S\times T$ by defining a triple $(Y^*, Z^*,
U^*)$ that lives on a suitable probability space. We consider a random
vector $Z^*$ with distribution $F_Z$, a random vector $U^* \sim F_U$,
independently distributed of $Z^*$, and $Y^* = Q_{Y\mid Z}(U^*, Z^*)$
uniquely determined by the pair $(U^*,Z^*)$, which completely characterizes
the law $Q$ of $(Y^*,Z^*, U^*)$. In particular, the triple obeys $Z^* \sim
F_Z$, $U^*|Z^* \sim F_U$ and $Y^*\mid Z^*=z \sim F_{Y \mid Z}(\cdot,z)$.
Moreover, the set $\{(y^*,z^*, u^*): \|y^* - Q_{Y\mid Z}(u,z^*)\|=
0\}\subset S\times T$ is assigned probability mass 1 under $Q$.

By the lemma quoted above, given $J$, there exists an $I=U$, such that $%
(J,I)\sim Q$, but this implies that $U|Z \sim F_U$ and that $\|Y -Q_{Y\mid
Z}(U,Z)\|= 0$ with probability 1 under $\P $. \qed

\subsection{Proof of Theorem \protect\ref{Thm: Inverse}}

We condition on $Z=z$. By reversing the roles of $V$ and $Y$, we can apply
Theorem \ref{Thm: Conditional Brenier} to claim that there exists a map $y
\longmapsto Q^{-1}_{Y\mid Z}(y,z)$ with the properties stated in the theorem
such that $Q^{-1}_{Y\mid Z}(Y,z)$ has distribution function $F_U$,
conditional on $Z=z$. Hence for any test function $\xi: {\mathbb{R}}^d \to {%
\mathbb{R}}$ such that $\xi \in C_b({\mathbb{R}}^d)$ we have
\begin{equation*}
\int \xi ( Q^{-1}_{Y\mid Z}( Q_{Y \mid Z} (u,z ),z ) )F_U(d u) = \int \xi(u)
F_U(du).
\end{equation*}
This implies that for $F_U$-almost every $u$, we have $Q^{-1}_{Y\mid Z}(
Q_{Y \mid Z} (u,z ),z) = u.$ Hence $\P $-almost surely,
$Q^{-1}_{Y\mid Z}(Y,Z) = Q^{-1}_{Y\mid Z}( Q_{Y \mid Z} (U,Z ) ,Z) = U.
$ Thus we can set $U = Q^{-1}_{Y\mid Z}(Y,Z)$ $\P $-almost surely in Theorem %
\ref{Thm: Conditional Brenier}. \qed

\subsection{Proof of Theorem \protect\ref{thm: variational characterization}}

The result follows from \cite{Villani03}, Theorem 2.12. \qed

\section{Proofs for Section 3}

\subsection{Proof of Theorem \protect\ref{theorem: linear}}

We first establish part(i). We have a.s.
\begin{equation*}
Y=\nabla \Phi_X(\widetilde U), \mbox{ with } \Phi_X(u)=B(u)^\top X.
\end{equation*}
For any $V \sim F_U$ such that ${\mathrm{E}}(X\vert V)=\mathrm{E}(X)$, and $%
\Phi_x^*(y) := \sup_{v \in \mathcal{U}} \{ v^\top y - \Phi_x(v) \}$, we have by the mean independence
\begin{equation*}
{\mathrm{E}} [\Phi_X(V)+\Phi_X^*(Y)] = {\mathrm{E}} B(V)^\top \mathrm{E}(X) + {%
\mathrm{E}} \Phi_X^*(Y) := M,
\end{equation*}
where $M$ depends only on $F_U$. We have by Young's inequality
\begin{equation*}
V^\top Y \le \Phi_X(V)+\Phi_X^*(Y),
\end{equation*}
but $Y=\nabla \Phi_X(\widetilde U)$ a.s. implies that a.s.
\begin{equation*}
\widetilde U^\top Y = \Phi_X(\widetilde U)+\Phi_X^*(Y),
\end{equation*}
so taking expectations gives
\begin{equation*}
{\mathrm{E}} V^\top Y \leq M = {\mathrm{E}} \widetilde U^\top Y, \quad
\forall V \sim F_U : {\mathrm{E}}(X\vert V)=\mathrm{E}(X),
\end{equation*}
which yields the desired conclusion.

We next establish part(ii). We can argue similarly as above to show that
$Y= \bar \beta(\overline{U})^\top X = \nabla \bar \Phi_X(\overline{U}),$
for some convex potential $u \longmapsto \bar \Phi_x(u)=\bar B(u)^\top x$,
and that for $\bar \Phi_x^*(y) := \sup_{v \in \mathcal{U}} \{ v^\top y -
\Phi_x(v) \}$ we have a.s.
\begin{equation*}
\overline{U}^\top Y = \bar \Phi_X(\overline{U})+ \bar \Phi_X^*(Y).
\end{equation*}

Using the fact that $\tilde U \sim \overline{U}$ and the fact that
mean-independence gives ${\mathrm{E}}(B(\tilde U)^\top X)={\mathrm{E}}B(\tilde U)^\top
\mathrm{E}(X)={\mathrm{E}} B(\bar U) ^\top {\mathrm{E}}(X) = {\mathrm{E}}(B(%
\overline{U})^\top X)$, we have
\begin{equation*}
{\mathrm{E}}(\widetilde U^\top Y)={\mathrm{E}}(  \Phi^*_X(Y)+B(\widetilde U)^\top X)=
{\mathrm{E}}(\Phi^*_X(Y)+B(\overline{U})^\top X) \ge {\mathrm{E}}(\overline{U}^\top
Y),
\end{equation*}
where we used Young's inequality again.  Reversing the role of $\widetilde U$ and $\overline{U}$ and using the same reasoning, we also conclude that ${\mathrm{E}}%
(\widetilde U^\top Y)\le {\mathrm{E}}(\overline{U}^\top Y)$, and hence then
\begin{equation*}
{\mathrm{E}}(\overline{U}^\top Y)= {\mathrm{E}}( \Phi^*_X(Y)+B(\overline{U})^\top X)
\end{equation*}
so that, thanks to the Young's inequality again,
\begin{equation*}
\Phi^*_X(y)+B(u)^\top x \ge u^\top y, \; \forall (u,x,y)\in \mathcal{U}%
\mathcal{X}\mathcal{Y},
\end{equation*}
we have
\begin{equation*}
\Phi^*_X(Y)+B(\overline{U})^\top X =\overline{U}^\top Y, \; \mbox{ a.s. },
\end{equation*}
which means that $\overline{U} = \arg \max_{u \in \mathcal{U}} \{u^\top
Y-B(u)^\top X\}$ a.s., which by strict concavity admits $\tilde U$ as the unique
solution a.s. This proves that $\tilde U=\overline{U}$ a.s. and thus we have $(%
\overline{\beta}(\tilde U)-\beta(\tilde U))^\top X =0$ a.s.

The part (iii) is a consequence of part (i). Note that by part (ii) we have
that $\tilde U = U$ a.s. and $(\beta(U)-\beta_0(U))^\top X =0$ a.s. Since $U$
and $X$ are independent, we have that, for $e_1,...,e_p$ denoting vectors of
the canonical basis in ${\mathbb{R}}^p$, and each $j \in \{1,\ldots, p\}$:
\begin{eqnarray*}
&& 0 = {\mathrm{E}} \left (e_j^\top (\beta(U)-\beta_0(U))^\top X X^\top
(\beta(U)-\beta_0(U) ) e_j \right ) \\
&& = {\mathrm{E}} \left ( e_j^\top (\beta(U)-\beta_0(U))^\top {\mathrm{E}} X
X^\top (\beta(U)-\beta_0(U)) e_j \right ) \\
&& \geq \mathrm{mineg} ( {\mathrm{E}} X X^\top) {\mathrm{E}} \left (
\|(\beta(U)-\beta_0(U)) e_j\|^2 \right ).
\end{eqnarray*}
Since ${\mathrm{E}} X X^\top$ has full rank this implies that ${\mathrm{E}}
\|(\beta(U)-\beta_0(U) )e_j\|^2 =0$ for each $j \in \{ 1, \ldots, p\}$, which implies the rest of
the claim. \qed

\subsection{Proof of Theorem \protect\ref{theorem: dual1}}

We have that any feasible pair $(\psi, b)$ obeys the constraint
\begin{equation*}
\psi(x,y) + b(u)^\top x \geq u^\top y, \ \ \forall (y,x,u) \in \mathcal{Y}%
\mathcal{X}\mathcal{U}.
\end{equation*}
Let $\tilde U\sim F_U: {\mathrm{E}}(X \mid U) = {\mathrm{E}}(X)$ be the
solution to the primal program. Then for any feasible pair $(\psi, b)$ we
have:
\begin{equation*}
{\mathrm{E}} \psi(X,Y) + {\mathrm{E}} b(\tilde U)^\top {\mathrm{E}} X = {%
\mathrm{E}} \psi(X,Y) + {\mathrm{E}} b(\tilde U)^\top X \geq {\mathrm{E}} Y
^\top \tilde U .
\end{equation*}

 The last inequality holds as equality if
\begin{equation}  \label{eq: cool pair}
\psi(x,y)=\sup_{u\in \mathcal{U}} \{ u^\top y- b(u)^\top x\}, \ \ b(u) =
B(u),
\end{equation}
since this is a feasible pair by (QL) and since
\begin{equation*}
\psi(X,Y) + B(\tilde U)^\top X = Y ^\top \tilde U,
\end{equation*}
as shown the proof of
the previous theorem. It follows that ${\mathrm{E}} Y ^\top \tilde U$ is the optimal value and it
is attained by the pair (\ref{eq: cool pair}). \qed

\subsection{Proof of Theorem \protect\ref{qrqsdec}}

Obviously $A_t=1\Rightarrow \tilde U \ge t, \mbox{ and } \; \tilde U>t
\Rightarrow A_t=1$. Hence ${\mathrm{P}}(\tilde U\ge t)\ge {\mathrm{P}}%
(A_t=1)={\mathrm{P}}(Y> \beta(t)^\top X)=(1-t)$ and ${\mathrm{P}}(\tilde U>
t)\le {\mathrm{P}}(A_t=1)=(1-t)$ which proves that $\tilde U$ is uniformly
distributed and $1\{\tilde U>t\}$ coincides with $1\{\tilde U_t=1\}$ a.s. We
thus have ${\mathrm{E}}(X 1\{\tilde U>t\})={\mathrm{E}}(X A_t)={\mathrm{E}}
X (1-t) = {\mathrm{E}} X {\mathrm{E}} A_t$, with standard approximation
argument we deduce that ${\mathrm{E}}(Xf(\tilde U))={\mathrm{E}} X {\mathrm{E%
}} f(\tilde U)$ for every $f\in C_b([0,1], {\mathbb{R}})$, which means that $%
\mathrm{E}(X \mid \tilde U) = \mathrm{E}(X)$.

As already observed $\tilde U>t$ implies that $Y>\beta(t)^\top X$ in
particular $Y\ge \beta(\tilde U- \delta) ^\top X$ for $\delta>0$, letting $%
\delta\to 0^+$ and using the a.e. continuity of $u \longmapsto \beta(u)$ we get $%
Y\ge \beta(\tilde U) ^\top X$. The converse inequality is obtained similarly
by remaking that $\tilde U<t$ implies that $Y\le \beta(t)^\top X$.

Let us now prove that $\tilde U$ solves (\ref{maxcorrmi}). Take $V$
uniformly distributed and mean-independent from $X$ and set $V_t:=1{\{V>t \}}
$, we then have ${\mathrm{E}}(X V_t)=0$, ${\mathrm{E}}(V_t)=(1-t)$ but since
$A_t$ solves (\ref{dt}) we have ${\mathrm{E}}(V_t Y)\le {\mathrm{E}}(A_tY)$.
Observing that $V=\int_0^1 V_t dt$ and integrating the previous inequality
with respect to $t$ gives ${\mathrm{E}}(VY)\le {\mathrm{E}}(\widetilde UY)$ so that $%
\tilde U$ solves (\ref{maxcorrmi}).

Next we show part(ii). Let us define for every $t\in [0,1]$ $B(t):=\int_0^t
\beta(s)ds.$ Let us also define for $(x,y)$ in ${\mathbb{R}}^{N+1}$:
\begin{equation*}
\psi(x,y):=\max_{t\in [0,1]} \{ty-B(t)^\top x\};
\end{equation*}
thanks to monotonicity condition, the maximization program in the display above is strictly
concave in $t$ for every $y$ and each $x \in X$. We then note that
\begin{equation*}
Y=\beta(\tilde U)^\top X=\nabla B(\tilde U)^\top X \text{ a.s. }
\end{equation*}
exactly is the first-order condition for the above maximization problem when
$(x,y)=(X,Y)$. In other words, we have
\begin{equation}  \label{ineqq}
\psi(x,y)+B(t)^\top x \ge ty, \; \forall (t,x,y)\in [0,1]\times \mathcal{X}
\times {\mathbb{R}}
\end{equation}
with an equality holding a.s. for $(x,y,t)=(X,Y,\tilde U)$, i.e.
\begin{equation}  \label{eqas}
\psi(X,Y)+B(\tilde U)^\top X =\widetilde UY, \; \mbox{ a.s. }
\end{equation}

Using the fact that $\tilde U \sim \overline{U}$ and the fact that the mean
independence gives ${\mathrm{E}}(B(\tilde U)^\top X)={\mathrm{E}}(b(%
\overline{U})^\top X)=\mathrm{E}(X)$, we have
\begin{equation*}
{\mathrm{E}}(UY)={\mathrm{E}}( \psi(X,Y)+B(\tilde U)^\top X )= {\mathrm{E}}(
\psi(X,Y)+B(\overline{U})^\top X ) \ge {\mathrm{E}}(\overline{U} Y)
\end{equation*}
but reversing the role of $\tilde U$ and $\overline{U}$, we also have ${%
\mathrm{E}}(UY)\le {\mathrm{E}}(\overline{U} Y)$ and then
\begin{equation*}
{\mathrm{E}}(\overline{U} Y)= {\mathrm{E}}( \psi(X,Y)+B(\overline{U})^\top X
)
\end{equation*}
so that, thanks to inequality (\ref{ineqq})
\begin{equation*}
\psi(X,Y)+B(\overline{U})^\top X =\overline{U} Y, \; \mbox{ a.s. }
\end{equation*}
which means that $\overline{U}$ solves $\max_{t\in [0,1]}
\{tY-\varphi(t)-B(t)^\top X\}$ which, by strict concavity admits $\tilde U$
as unique solution.

Part (iii) is a consequence of Part (ii) and independence of $\tilde U$ and $%
X$. Note that by part (ii) we have that $\tilde U = U$ a.s. and that $%
(\beta( U)-\beta_0(U))^\top X =0$ a.s. Since $U$ and $X$ are independent, we
have that
\begin{eqnarray*}
&& 0 = {\mathrm{E}} \left ( (\beta(\tilde U)-\beta_0(U))^\top X X^\top
(\beta(U)-\beta_0(U)) \right ) \\
&& = {\mathrm{E}} \left ( (\beta(U)-\beta_0(U))^\top {\mathrm{E}} X X^\top
(\beta(U)-\beta_0(U) )\right ) \\
&& \geq \mathrm{mineg} ( {\mathrm{E}} X X^\top) {\mathrm{E}} \left (
\|(\beta(U)-\beta_0(U))\|^2 \right) .
\end{eqnarray*}
Since ${\mathrm{E}} X X^\top$ has full rank this implies that ${\mathrm{E}}
\|(\beta(U)-\beta_0(U) )\|^2 =0$, which implies the rest of the claim. \qed

%\big-skip

\section{Additional Results for Section 2 and Section 3}

This section  provides a rigorous Proof of
Duality for Conditional Vector Quantiles and Linear Vector Quantile
Regression.

We claimed in the main text we stated two dual problems for CVQF and VQR
without the proof. In this sections we prove that these assertions were
rigorous. Here for simplicity of notation we assume that $X=Z$, which
entails no loss of generality under our assumption that conditioning on $Z$
and $X$ is equivalent.

We shall write $X = (1, \tilde X^\top)^\top$, where $\tilde X$ denotes the
non-constant component of vector $X$. Let $f_W$ denote the joint density of $%
(\tilde X, Y)$ with support $W=\mathcal{W}$, and $F_U$ the distribution of $%
U $ with support set $\mathcal{U}$. Assume without loss of generality that
\begin{equation*}
{\mathrm{E}} ( \tilde X) = 0.
\end{equation*}

\subsection{Duality for Conditional Vector Quantiles}

Let $k$ and $d$ be two integers, $K_1$ be a compact subset of ${\mathbb{R}}%
^k $ and $K_2$ and $K_3$ be compact subsets of ${\mathbb{R}}^d$. Let $%
\nu=F_W $ have support $\mathcal{W} \subseteq K_1\times K_2$, and we may
decompose $\nu= m\otimes \nu^{\tilde{x}}$ where $m$ denotes the first
marginal of $\nu$. We also assume that $m$ is centered i.e.
\begin{equation*}
\overline{x}:=\int_{K_1} {\tilde{x}} m(d{\tilde{x}})=0
\end{equation*}
Finally, let $\mu = F_U$ have support $\mathcal{U} \subseteq K_3$. We are
interested here in rigorous derivation for dual formulations for covariance
maximization under an independence and then a mean-independence constraint.

\subsubsection*{Duality for the independence constraint}

First consider the case of an independence constraint:
\begin{equation}  \label{mki}
\sup_{\theta\in I(\mu, \nu)} \int_{K_1\times K_2\times K_3} u^\top y \;
\theta(d{\tilde{x}}, dy, du)
\end{equation}
where $I(\mu, \nu)$ consists of the probability measures $\theta$ on $%
K_1\times K_2\times K_3$ such that ${\Pi_{X,Y}}_\# \theta=\nu$ and ${%
\Pi_{X,U}}_\# \theta=m\otimes \mu$, namely that
\begin{equation*}
\int_{K_1\times K_2\times K_3} \psi({\tilde{x}},y) \theta(d{\tilde{x}}, dy,
du)= \int_{K_1\times K_2} \psi({\tilde{x}},y) \nu(d{\tilde{x}}, dy), \;
\forall \psi\in C(K_1\times K_2),
\end{equation*}
and
\begin{equation*}
\int_{K_1\times K_2\times K_3} \varphi({\tilde{x}},u) \theta(d{\tilde{x}},
dy, du)= \int_{K_1\times K_3} \varphi({\tilde{x}},u) m(d{\tilde{x}}) \mu(
du), \; \forall \varphi \in C(K_1\times K_3).
\end{equation*}
As already noticed, given a random $(X,Y)$ such that $\mathop{\mathrm{Law}}%
\nolimits(X,Y)=\nu$, (\ref{mki}) is related to the problem of finding $U$
independent of $X$ and having law $\mu$ which is maximally correlated to $Y$%
. It is clear that $I(\mu, \nu)$ is a nonempty (take $m\otimes \nu^{\tilde{x}%
}\otimes \mu$) convex and weakly $*$ compact set so that (\ref{mki}) admits
solutions. Let us consider now:
\begin{equation}  \label{dualmki}
\inf_{(\psi,\varphi)\in C(K_1\times K_2)\times C(K_1\times K_3)}
\int_{K_1\times K_2} \psi({\tilde{x}},y) \nu(d{\tilde{x}}, dy)+
\int_{K_1\times K_3} \varphi({\tilde{x}},u) m(d{\tilde{x}}) \mu(du)
\end{equation}
subject to the constraint
\begin{equation}  \label{constrdualmki}
\psi({\tilde{x}},y)+\varphi({\tilde{x}},u)\ge u ^\top y, \; \forall ({\tilde{%
x}},y, u)\in K_1\times K_2\times K_3.
\end{equation}
Then we have

\begin{theorem}
\label{dual78} The infimum in (\ref{dualmki})-(\ref{constrdualmki})
coincides with the maximum in (\ref{mki}). This common value also coincides
with the infimum of $\int_{K_1\times K_2} \psi \nu+ \int_{K_1\times K_2}
\varphi m\otimes \mu$ taken over $L^1(\nu)\times L^1(m\otimes \mu)$
functions that satisfy the constraint (\ref{constrdualmki}).
\end{theorem}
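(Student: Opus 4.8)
The plan is to recognize (\ref{dualmki})--(\ref{constrdualmki}) as the Kantorovich-type dual of (\ref{mki}), to establish the absence of a duality gap by a Fenchel--Rockafellar argument, and then to obtain the equality with the $L^1$-infimum from an elementary two-sided bound. As already observed in the text, $I(\mu,\nu)$ is nonempty, convex and weakly-$*$ compact, and $\theta\longmapsto\int u^\top y\,\theta(d\tilde x,dy,du)$ is weakly-$*$ continuous (the integrand is continuous and bounded on the compact set $K_1\times K_2\times K_3$), so the supremum in (\ref{mki}) is attained; this is why one may write ``maximum''. It thus remains to prove that $\mathrm{val}(\ref{mki})$ equals the infimum $D_C$ in (\ref{dualmki})--(\ref{constrdualmki}), and that $D_C$ equals the infimum $D_{L^1}$ over $L^1$ pairs.

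For the no-gap part I work on the Banach space $E:=C(K_1\times K_2)\times C(K_1\times K_3)$, with the continuous linear map $\Theta:E\to C(K_1\times K_2\times K_3)$ defined by $\Theta(\psi,\varphi)(\tilde x,y,u):=\psi(\tilde x,y)+\varphi(\tilde x,u)$, the linear functional $\Xi(\psi,\varphi):=\int\psi\,d\nu+\int\varphi\,d(m\otimes\mu)$, and the convex function $\Lambda$ on $C(K_1\times K_2\times K_3)$ equal to $0$ on $\{h:\ h(\tilde x,y,u)\ge u^\top y\text{ for all }(\tilde x,y,u)\}$ and $+\infty$ elsewhere; then (\ref{dualmki})--(\ref{constrdualmki}) is exactly $\inf_{(\psi,\varphi)\in E}\{\Xi(\psi,\varphi)+\Lambda(\Theta(\psi,\varphi))\}$. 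Next I compute the Legendre transforms against a Radon measure $\pi$ on $K_1\times K_2\times K_3$: one has $\Theta^*\pi=({\Pi_{X,Y}}_\#\pi,\ {\Pi_{X,U}}_\#\pi)$, so $\Xi^*(\Theta^*\pi)$ equals $0$ when ${\Pi_{X,Y}}_\#\pi=\nu$ and ${\Pi_{X,U}}_\#\pi=m\otimes\mu$ and $+\infty$ otherwise, while $\Lambda^*(-\pi)=-\int u^\top y\,d\pi$ when $\pi\ge 0$ and $+\infty$ otherwise (the latter $+\infty$ penalizing failure of nonnegativity). Hence the Fenchel--Rockafellar dual, $\sup_{\pi}\{-\Xi^*(\Theta^*\pi)-\Lambda^*(-\pi)\}$, equals $\sup\{\int u^\top y\,d\pi:\ \pi\ge 0,\ {\Pi_{X,Y}}_\#\pi=\nu,\ {\Pi_{X,U}}_\#\pi=m\otimes\mu\}$, which — since a nonnegative measure whose $(\tilde x,y)$-marginal is the probability $\nu$ is itself a probability — is precisely (\ref{mki}). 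The qualification hypothesis holds: $\Xi$ is finite on all of $E$, and $\Lambda$ is continuous (identically $0$) at $\Theta(\psi_0,0)$, where $\psi_0(\tilde x,y):=1+\max_{u\in K_3}u^\top y$ is continuous on the compact $K_1\times K_2$ and satisfies $\psi_0(\tilde x,y)>u^\top y$ for every $(\tilde x,y,u)$. Fenchel--Rockafellar then yields $D_C=\mathrm{val}(\ref{mki})$.

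For the $L^1$ identity, first $D_{L^1}\le D_C$ since any continuous pair obeying (\ref{constrdualmki}) is an admissible $L^1$ pair. Conversely, fix any $(\psi,\varphi)\in L^1(\nu)\times L^1(m\otimes\mu)$ satisfying (\ref{constrdualmki}) and any $\theta\in I(\mu,\nu)$. Because ${\Pi_{X,Y}}_\#\theta=\nu$ and ${\Pi_{X,U}}_\#\theta=m\otimes\mu$, the functions $(\tilde x,y,u)\longmapsto\psi(\tilde x,y)$ and $(\tilde x,y,u)\longmapsto\varphi(\tilde x,u)$ are $\theta$-integrable with $\int\Theta(\psi,\varphi)\,d\theta=\int\psi\,d\nu+\int\varphi\,d(m\otimes\mu)$, and (\ref{constrdualmki}) together with the boundedness of $u^\top y$ on the compact support gives $\int u^\top y\,d\theta\le\int\Theta(\psi,\varphi)\,d\theta$. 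Taking the supremum over $\theta\in I(\mu,\nu)$ (which is attained) and then the infimum over $(\psi,\varphi)$ gives $\mathrm{val}(\ref{mki})\le D_{L^1}$, so $\mathrm{val}(\ref{mki})=D_{L^1}=D_C$.

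The only genuinely delicate step is the no-gap claim, and within it the verification of the Fenchel--Rockafellar constraint qualification (continuity of $\Lambda$ at a feasible point): this is exactly where compactness of $K_1,K_2,K_3$ is used, both to make $u^\top y$ bounded and to produce the strictly feasible continuous pair $(\psi_0,0)$. The remaining ingredients — weak duality, the identification of the conjugates $\Lambda^*$ and $\Xi^*$, and the $L^1$ sandwich — are routine, modulo standard measure-theoretic bookkeeping (that $\Theta^*$ acts by marginalization, and that a nonnegative Radon measure with a probability marginal is a probability measure).
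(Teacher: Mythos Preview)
Your proof is correct and follows essentially the same approach as the paper: rewrite (\ref{dualmki})--(\ref{constrdualmki}) in the form $\inf\{F+G\circ\Lambda\}$, apply Fenchel--Rockafellar duality, compute the conjugates to recover (\ref{mki}), and then sandwich the $L^1$ infimum between the continuous infimum and the primal value. The only substantive addition is that you explicitly exhibit a strictly feasible continuous pair $(\psi_0,0)$ to verify the constraint qualification, whereas the paper simply asserts that Fenchel--Rockafellar applies; this is a welcome clarification but not a different argument.
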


\begin{proof}
Let us rewrite (\ref{dualmki}) in standard convex programming form as :
\begin{equation*}
\inf_{(\psi, \varphi)} F(\psi, \varphi)+G(\Lambda(\psi, \varphi))
\end{equation*}
where $F(\psi,\varphi):= \int_{K_1\times K_2} \psi \nu+ \int_{K_1\times K_3}
\varphi m\otimes \mu$, $\Lambda$ is the linear continuous map from $%
C(K_1\times K_2)\times C(K_1\times K_3)$ to $C(K_1\times K_2 \times K_3)$
defined by
\begin{equation*}
\Lambda(\psi, \varphi) ({\tilde{x}},y,u):=\psi({\tilde{x}},y)+\varphi({%
\tilde{x}},u), \; \forall ({\tilde{x}},y, u)\in K_1\times K_2\times K_3
\end{equation*}
and $G$ is defined for $\eta\in C(K_1\times K_2\times K_3)$ by:
\begin{equation*}
G(\eta) =\left\{%
\begin{array}{lll}
0 & \mbox{ if } \eta({\tilde{x}},y,u)\ge u ^\top y &  \\
+\infty & \mbox{ otherwise.} &
\end{array}
\right.
\end{equation*}
It is easy to check that the Fenchel-Rockafellar theorem (see Ekeland and
Temam \cite{ektem}) applies here so that the infimum in (\ref{dualmki})
coincides with
\begin{equation}  \label{dual89}
\sup_{\theta\in {\mathcal{M}}(K_1\times K_2\times K_3)} -F^*(\Lambda^*
\theta)-G^*(-\theta).
\end{equation}
Direct computations give that
\begin{equation*}
-G^*(-\theta)=\left\{%
\begin{array}{lll}
\int_{K_1\times K_2\times K_3} u^\top y \; \theta (d{\tilde{x}}, dy, du) & %
\mbox{ if } \theta \ge 0 &  \\
-\infty & \mbox{ otherwise.} &
\end{array}
\right.
\end{equation*}
that $\Lambda^* \theta=({\Pi_{X,Y}}_\# \theta, {\Pi_{X,U}}_\# \theta)$ and
\begin{equation*}
F^*(\Lambda^* \theta)=\left\{%
\begin{array}{lll}
0 & \mbox{ if } ({\Pi_{X,Y}}_\# \theta, {\Pi_{X,U}}_\# \theta)=(\nu,
m\otimes \mu) &  \\
+\infty & \mbox{ otherwise.} &
\end{array}
\right.
\end{equation*}
This shows that the maximization problem (\ref{dual89}) is the same as (\ref%
{mki}). Therefore the infimum in (\ref{dualmki}) coincides with the maximum
in (\ref{mki}). When one relaxes (\ref{dualmki}) to $L^1$ functions, we
obtain a problem whose value is less than that of (\ref{dualmki}) (because
minimization is performed over a larger set) and larger than the supremum in
(\ref{mki}) (direct integration of the inequality constraint), the common
value of (\ref{mki}) and (\ref{dualmki})-(\ref{constrdualmki}) therefore
also coincides with the value of the $L^1$ relaxation of (\ref{dualmki})-(%
\ref{constrdualmki}).
\end{proof}

\subsection{Duality for Linear Vector Quantile Regression}

Here we use the notation where we partition
\begin{equation*}
b(u) = (\varphi(u), v(u)^\top)^\top,
\end{equation*}
with $u \mapsto \varphi(u)$ mapping $\mathcal{U}$ to $\mathbb{R}$,
corresponding to the coefficient in front of the constant. Let $\mathcal{W}%
_o $ denote the interior of $\mathcal{W}$.

Let us denote by $(0, \overline{y})$ the mean of $f_W$:
\begin{equation*}
\int_{{\mathcal{W}_o}} \tilde{x} \; F_W (d\tilde{x}, dy)=0, \; \int_{{%
\mathcal{W}_o}} y \; F_W(d\tilde{x}, dy)=:\overline{y}.
\end{equation*}

Let us now consider the mean-independent correlation maximization problem:
\begin{equation}  \label{mkmi}
\sup_{\theta\in MI(\mu, \nu)} \int_{K_1\times K_2\times K_3} u^\top y \;
\theta(d{\tilde{x}}, dy, du)
\end{equation}
where $MI(\mu, \nu)$ consists of the probability measures $\theta$ on $%
K_1\times K_2\times K_3$ such that ${\Pi_{X,Y}}_\# \theta=\nu$, ${\Pi_U}_\#
\theta=\mu$ and according to $\theta$, ${\tilde{x}}$ is mean independent of $%
u$ i.e.
\begin{equation}  \label{mift}
\ \langle \sigma_\theta, v \rangle := \int_{K_1\times K_2\times K_3}
(v(u)^\top {\tilde{x}})\ \; \theta(d{\tilde{x}}, dy, du)=0, \; \forall
\beta\in C(K_3, {\mathbb{R}}^d).
\end{equation}
Again the constraints are linear so that $MI(\mu, \nu)$ is a nonempty convex
and weak $*$ compact set so that the infimum in (\ref{mki}) is attained. In
probabilistic terms, given $(\tilde X,Y)$ distributed according to $\nu$,
the problem above consists in finding $U$ with law $\mu$, mean-independent
of $\tilde X$ (i.e. such that ${\mathrm{E}}(\tilde X\vert U)={\mathrm{E}}%
(\tilde X)=0$) and maximally correlated to $Y$.

We claim that (\ref{mkmi}) is dual to

\begin{equation}  \label{dualmkmi}
\inf_{(\psi, \varphi, b)\in C(K_1\times K_2)\times C(K_3)\times C(K_3, {%
\mathbb{R}}^d)} \int_{K_1\times K_2} \psi({\tilde{x}},y) \nu(d{\tilde{x}},
dy)+ \int_{K_3} \varphi(u) \mu(du)
\end{equation}
subject to
\begin{equation}  \label{constrdualmkmi}
\psi({\tilde{x}},y)+\varphi(u)+v(u)^\top {\tilde{x}} \ge u ^\top y, \;
\forall ({\tilde{x}},y, u)\in K_1\times K_2\times K_3.
\end{equation}
Then we have the following duality result

\begin{theorem}
The infimum in (\ref{dualmkmi})-(\ref{constrdualmkmi}) coincides with the
maximum in (\ref{mkmi}). This common value also coincides with the infimum
of $\int_{K_1\times K_2} \psi \nu+ \int_{K_3} \varphi m\otimes \nu$ taken
over $L^1(\nu)\times L^1(\mu)\times L^1(\mu, {\mathbb{R}}^d)$ functions that
satisfy the constraint (\ref{constrdualmkmi}).
\end{theorem}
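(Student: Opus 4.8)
The plan is to follow the proof of Theorem~\ref{dual78} almost line for line, invoking the Fenchel--Rockafellar duality theorem (Ekeland and Temam \cite{ektem}); the only genuinely new ingredient is the vector-valued dual variable $v$ attached to the mean-independence constraint, and the bookkeeping it produces. First I would recast the dual program (\ref{dualmkmi})--(\ref{constrdualmkmi}) in the standard form $\inf_{(\psi,\varphi,v)}F(\psi,\varphi,v)+G(\Lambda(\psi,\varphi,v))$ on the Banach space $C(K_1\times K_2)\times C(K_3)\times C(K_3,\mathbb{R}^d)$, where $F(\psi,\varphi,v):=\int_{K_1\times K_2}\psi\,\nu+\int_{K_3}\varphi\,\mu$ is linear, continuous and \emph{independent of $v$}; where $\Lambda$ is the continuous linear map into $C(K_1\times K_2\times K_3)$ given by $\Lambda(\psi,\varphi,v)(\tilde x,y,u):=\psi(\tilde x,y)+\varphi(u)+v(u)^\top\tilde x$; and where $G(\eta)=0$ if $\eta(\tilde x,y,u)\ge u^\top y$ for all $(\tilde x,y,u)\in K_1\times K_2\times K_3$ and $G(\eta)=+\infty$ otherwise. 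The qualification hypothesis is easily checked: taking $\psi_0\equiv c$, $\varphi_0\equiv c$ with $2c>\sup_{K_1\times K_2\times K_3}u^\top y$ (finite, by compactness) and $v_0\equiv 0$, the function $\Lambda(\psi_0,\varphi_0,v_0)$ lies strictly in the domain of $G$, so that $G$ is finite and continuous there while $F$ is finite everywhere. Hence Fenchel--Rockafellar applies and the infimum in (\ref{dualmkmi})--(\ref{constrdualmkmi}) equals $\max_{\theta\in\mathcal{M}(K_1\times K_2\times K_3)}\{-F^*(\Lambda^*\theta)-G^*(-\theta)\}$.

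Next I would compute the three conjugates just as in the proof of Theorem~\ref{dual78}. A direct pairing computation gives $\Lambda^*\theta=({\Pi_{X,Y}}_\#\theta,\ {\Pi_U}_\#\theta,\ \sigma_\theta)$, where $\sigma_\theta$ is the $\mathbb{R}^d$-valued measure on $K_3$ determined by $\langle\sigma_\theta,v\rangle=\int v(u)^\top\tilde x\,\theta(d\tilde x,dy,du)$ as in (\ref{mift}); that $-G^*(-\theta)=\int u^\top y\,\theta(d\tilde x,dy,du)$ when $\theta\ge 0$ and $-\infty$ otherwise; and that, since $F$ is the linear functional paired with $(\nu,\mu,0)$, one has $F^*(\Lambda^*\theta)=0$ exactly when ${\Pi_{X,Y}}_\#\theta=\nu$, ${\Pi_U}_\#\theta=\mu$ and $\sigma_\theta=0$, and $F^*(\Lambda^*\theta)=+\infty$ otherwise. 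It is precisely the absence of $v$ from $F$ that forces the constraint $\sigma_\theta=0$, i.e.\ mean-independence of $\tilde x$ from $u$ under $\theta$. Substituting, the dual of the dual is $\sup_{\theta\in MI(\mu,\nu)}\int u^\top y\,\theta(d\tilde x,dy,du)$, that is, (\ref{mkmi}), and this supremum is in fact a maximum because $MI(\mu,\nu)$ is nonempty (the product measure $m\otimes\nu^{\tilde x}\otimes\mu$ belongs to it, using $\bar x=0$), weak-$*$ compact (all its defining constraints are weak-$*$ closed) and the objective $\theta\mapsto\int u^\top y\,\theta$ is weak-$*$ continuous. This establishes that the infimum in (\ref{dualmkmi})--(\ref{constrdualmkmi}) equals the maximum in (\ref{mkmi}).

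Finally, for the $L^1$ relaxation I would argue exactly as at the end of the proof of Theorem~\ref{dual78}: enlarging the admissible class from continuous to $L^1(\nu)\times L^1(\mu)\times L^1(\mu,\mathbb{R}^d)$ functions only decreases the infimum, so the relaxed value is at most that of (\ref{dualmkmi}); conversely, for any $\theta\in MI(\mu,\nu)$ and any $L^1$-admissible triple, integrating the constraint (\ref{constrdualmkmi}) against $\theta$ and using $\sigma_\theta=0$ yields $\int_{K_1\times K_2}\psi\,\nu+\int_{K_3}\varphi\,\mu\ge\int u^\top y\,\theta(d\tilde x,dy,du)$, the term $\int v(u)^\top\tilde x\,\theta$ being finite since $K_1$ is compact so that $\tilde x$ is bounded and $v(u)^\top\tilde x\in L^1(\theta)$; hence the relaxed value is at least $\sup_{\theta\in MI(\mu,\nu)}\int u^\top y\,\theta$, and the three quantities coincide. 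I do not anticipate a serious obstacle here, as the argument is a routine adaptation of Theorem~\ref{dual78}; the only steps that require care are the identification of the adjoint $\Lambda^*$ on the space of $\mathbb{R}^d$-valued measures (which is what makes the mean-independence constraint appear on the primal side) and the verification of the Fenchel--Rockafellar qualification, which is where compactness of $K_1\times K_2\times K_3$ enters.
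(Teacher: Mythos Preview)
Your proposal is correct and follows essentially the same route as the paper: recast (\ref{dualmkmi})--(\ref{constrdualmkmi}) as $\inf F+G\circ\Lambda$ with exactly the $F$, $G$, $\Lambda$ you wrote, apply Fenchel--Rockafellar, compute $\Lambda^*\theta=({\Pi_{X,Y}}_\#\theta,{\Pi_U}_\#\theta,\sigma_\theta)$ and $F^*$, and then repeat the $L^1$ sandwich argument from Theorem~\ref{dual78}. If anything you are slightly more explicit than the paper, which simply says ``we then argue exactly as in the proof of Theorem~\ref{dual78} to conclude''; your verification of the qualification condition and of the $L^1$ inequality via $\sigma_\theta=0$ are the right details to spell out.
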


\begin{proof}
Write (\ref{dualmkmi})-(\ref{constrdualmkmi}) as
\begin{equation*}
\inf_{(\psi, \varphi, b)} F(\psi, \varphi, b)+G(\Lambda(\psi, \varphi, b))
\end{equation*}
where $F(\psi,\varphi, b):= \int_{K_1\times K_2} \psi \nu+ \int_{K_3}
\varphi \mu$, $\Lambda$ is the linear continuous map from $C(K_1\times
K_2)\times C(K_3)\times C(K_3, {\mathbb{R}}^d)$ to $C(K_1\times K_2 \times
K_3)$ defined by
\begin{equation*}
\Lambda(\psi, \varphi, b) ({\tilde{x}},y,u):=\psi({\tilde{x}}%
,y)+\varphi(u)+v(u)^\top {\tilde{x}}, \; \forall ({\tilde{x}},y, u)\in
K_1\times K_2\times K_3
\end{equation*}
and $G$ is as in the proof of theorem \ref{dual78}. For $\theta\in {\mathcal{%
M}}(K_1\times K_2\times K_3)$, one directly checks that $\Lambda^* \theta=( {%
\Pi_{\tilde X,Y}}_\# \theta, {\Pi_{U}}_\# \theta, \sigma_\theta)$ (where the
vector-valued measure $\sigma_\theta$ is defined as in (\ref{mift})). We
then get
\begin{equation*}
F^*(\Lambda^* \theta)=\left\{%
\begin{array}{lll}
0 & \mbox{ if } ({\Pi_{\tilde X,Y}}_\# \theta, {\Pi_{U}}_\# \theta,
\sigma_\theta)=(\nu, \mu, 0 ) &  \\
+\infty & \mbox{ otherwise.} &
\end{array}
\right.
\end{equation*}
We then argue exactly as in the proof of theorem \ref{dual78} to conclude.
\end{proof}

\section{Proof of the measurability claim in Theorem 2.1} \label{app:measurability} Let us denote by $\mu$ the probability law (absolutely continuous with respect to the Lebesgue measure on $\R^d$ and having a convex support with nonempty interior) of $U$ and by $\nu^z$ the conditional probability law of $Y$ given $Z=z$.

Let us also assume first that  both $U$ and $Y$ have finite second moments. Denoting by $\PPP_2(\R^k)$ the Wasserstein space of probability measures  on $\R^k$ having finite second moments endowed with the $2$-Wasserstein distance $W_2$, we recall that it is a separable metric space (see \cite{Villani03}). Now consider the set-valued map
\[\Gamma: \; \nu \in \PPP_2(\R^d) \mapsto \{\gamma\in \Pi(\mu, \nu) \ : \; \int_{\R^d\times \R^d} \vert x-y\vert^2 \mbox{d} \gamma(x,y)=W_2^2(\mu, \nu)\}\]
where $\Pi(\mu, \nu)$ is the set of $\gamma\in \PPP_2(\R^d\times \R^d)$ having $\mu$ and $\nu$ as marginals. It then follows from standard measurable selection arguments (see \cite{casval} or   chapter VIII of \cite{ektem}) that $\Gamma$ admits a Borel (with respect to the Wasserstein metric) selection which we denote  $\gamma(.)$. Note in particular that $z\mapsto \gamma(\nu^z)$ is measurable and the optimality of $\gamma(\nu^z)$ is characterized by the fact that its support is included in the subdifferential of a convex function that we denote $\varphi(.,z)$. The measurability claim in Theorem 2.1, amounts to proving that one can select a $Q(u,z)\in \partial_u \varphi(u,z)$ in a jointly measurable way (note that since $\mu$ is absolutely continuous with respect to the Lebesgue measure, $\varphi(u,.)$ is single valued on a set of full measure for $\mu$ but which depends on $z$). Let us first define $q(u,z)$ as the conditional expectation of $\gamma(\nu^z)$ given $u$, then $q(u,z)$ is measurable separately in the variables $u$ and $z$. We know that $q(.,z)$ is a selection of the subgradient of a convex function $\varphi(.,z)$ which we may reconstruct by integration on the segment $[0,u]$ (there is no loss of generality in assuming here that $0$ lies in the interior of the support of $\mu$ and that $\varphi(0,z)=0$ for every $z$)  as $\varphi(u,z)=\int_0^1 q(tu,z) \cdot u dt$, from which it is easy to see that $\varphi$ is convex in $u$ and measurable in $z$. Now we regularize $\varphi$ by Yosida regularization i.e. set for every $n\in \NNN^*$
\[\varphi_n(u,z):=\inf_{v} \left \{\frac{n}{2} \vert u-v\vert^2 +\varphi(v,z) \right \}\]
for each $n$, $\varphi_n$ has a Lipschitz gradient with respect to $u$ and depends in a measurable way on $z$, hence $q_n(u,z):=\nabla_u \varphi_n(u,z)$ is Lipschitz in its first argument and measurable in its second one, it thus follows form Proposition 1.1 in chapter VIII of \cite{ektem} that $q_n$ is jointly measurable in $(u,z)$. It is well known that $q_n(u,z)$ converges pointwise to the element $Q(u,z)$ of minimal norm of $\partial_u \varphi(u,z)$ so that $Q$ is measurable.

Finally, in the case where one drops the finiteness of the second moments assumption, one cannot use optimal transport and $W_2$ as above. Instead, one has to proceed by approximation  as was done in the seminal work of McCann \cite{McCann1985}. First one has to suitably extend $\Gamma$  to the case of an arbitrary probability measure $\nu$ and again select a Borel (with respect to the narrow topology) selection of $\Gamma$, which can be done since the narrow topology is metrizable and separable on the set of Borel probability measures. The remainder of the argument is the same as above.

\section*{Acknowledgements}

We thank the editor Rungzhe Li, an associate editor, two anonymous referees, Denis Chetverikov, Xuming He, Roger Koenker, Steve Portnoy, and workshop participants at Oberwolfach 2013, Universit\'{e} Libre de Bruxelles, UCL, Columbia University, University of Iowa, MIT, and CORE for helpful comments. Jinxin He, J\'{e}r\^{o}me Santoul, and
Simon Weber provided excellent research assistance.

\end{document}